\documentclass[12pt]{article}

\usepackage{hyperref}
\usepackage{amsmath}
\usepackage{amssymb}
\usepackage{amsthm}
\usepackage{amsfonts}
\usepackage{latexsym}
\usepackage{cite}
\usepackage{psfrag}
\usepackage{epsfig}
\usepackage{graphicx}
\usepackage{mathrsfs}
\usepackage{url}
\usepackage{color}
\usepackage{cancel}
 \usepackage{eufrak}
 \usepackage[pdftex]{lscape}
 \usepackage{lscape}

\usepackage{multirow}
\usepackage{multicol}
\usepackage{hyperref}

\usepackage{scalerel}
\usepackage{tikz}
\usetikzlibrary{svg.path}

\definecolor{orcidlogocol}{HTML}{A6CE39}
\tikzset{
  orcidlogo/.pic={
    \fill[orcidlogocol] svg{M256,128c0,70.7-57.3,128-128,128C57.3,256,0,198.7,0,128C0,57.3,57.3,0,128,0C198.7,0,256,57.3,256,128z};
    \fill[white] svg{M86.3,186.2H70.9V79.1h15.4v48.4V186.2z}
                 svg{M108.9,79.1h41.6c39.6,0,57,28.3,57,53.6c0,27.5-21.5,53.6-56.8,53.6h-41.8V79.1z M124.3,172.4h24.5c34.9,0,42.9-26.5,42.9-39.7c0-21.5-13.7-39.7-43.7-39.7h-23.7V172.4z}
                 svg{M88.7,56.8c0,5.5-4.5,10.1-10.1,10.1c-5.6,0-10.1-4.6-10.1-10.1c0-5.6,4.5-10.1,10.1-10.1C84.2,46.7,88.7,51.3,88.7,56.8z};
  }
}

\newcommand\orcidicon[1]{\href{https://orcid.org/#1}{\mbox{\scalerel*{
\begin{tikzpicture}[yscale=-1,transform shape]
\pic{orcidlogo};
\end{tikzpicture}
}{|}}}}

\usepackage{hyperref} 

\usepackage{longtable}

\makeatletter
\@addtoreset{equation}{section}
\makeatother

\makeatletter
\@addtoreset{table}{section}
\makeatother

\newtheorem{theorem}{Theorem}[section]
\newtheorem{lemma}[theorem]{Lemma}
\newtheorem{corollary}[theorem]{Corollary}

\newtheorem{conjecture}[theorem]{Conjecture}
\theoremstyle{definition}
\newtheorem{definition}[theorem]{Definition}
\newtheorem{notation}[theorem]{Notation}

\newtheorem{example}[theorem]{Example}
\newtheorem{problem}[theorem]{Open Problem}

\newcommand\C{\mathcal{C}}
\newcommand\V{\mathcal{V}}

\newcommand\Oc{\mathcal{O}}

\newcommand\cf{\mathbf{c}}

\newcommand\e{\mathbf{e}}
\newcommand\vf{\mathbf{v}}
\newcommand\x{\mathbf{x}}

\newcommand\F{\mathbb{F}}
\newcommand\Nb{\mathbb{N}}

\newcommand\Pb{\mathbb{P}}
\newcommand\Z{\mathbb{Z}}

\newcommand\Dk{\mathfrak{D}}

\newcommand\Rk{\mathfrak{R}}

\newcommand\PG{\mathrm{PG}}

\newcommand\B{\mathscr{B}}
\newcommand\Ns{\mathscr{N}}
\newcommand\Os{\mathscr{O}}
\newcommand\Ps{\mathscr{P}}

\newcommand\db{\displaybreak[3]}

\begin{document}
\title{Weight distributions of cosets of weight 2 of the generalized doubly extended Reed--Solomon codes
\date{}
}
\maketitle

\begin{center}
{\sc Alexander A. Davydov \orcidicon{0000-0002-5827-4560}}\\
 {\sc\small Kharkevich Institute for Information Transmission Problems}\\
 {\sc\small Russian Academy of Sciences,
Moscow, 127051, Russian Federation}\\
 \emph{E-mail address:} alexander.davydov121@gmail.com\medskip\\
 {\sc Stefano Marcugini \orcidicon{0000-0002-7961-0260} and
 Fernanda Pambianco \orcidicon{0000-0001-5476-5365}}\\
 {\sc\small Department of  Mathematics  and Computer Science, University of Perugia,}\\
 {\sc\small Perugia, 06123, Italy}\\
 \emph{E-mail address:} \{stefano.marcugini, fernanda.pambianco\}@unipg.it
\end{center}

\textbf{Abstract.} We consider the weight distributions of the cosets of weight 2 of the generalized $[q+1,q+2-d,d]_q$ doubly extended Reed--Solomon codes (GDRS) of length $q+1$, dimension $q+2-d$, minimum distance $d\ge5$, over the finite field $\mathbb{F}_q$ with $q$ elements. GDRS codes form a broad class of MDS codes. For a GDRS code, we say that Case S occurs if the weight distribution for all  cosets of weight 2 is the same or otherwise, Case NS occurs. For Case S, the weight distribution is known; however, any sufficient condition for the occurrence of Case S remained an open problem. We prove that if $q-1$ and $d-2$ are coprime then Case S holds, i.e. the problem is solved. Furthermore, we note that in Case S, the GDRS code is 2-regular.

Also, we introduce two new,  as far as is known to the authors, open equivalent combinatorial problems for finite fields $\mathbb{F}_q$ (Problem $A_{q,\mu}^\times$) and for rings $\mathbb{Z}_\mathfrak{R}$ of integers modulo $\mathfrak{R}$ (Problem $A_{\mathfrak{R},\mu}^+$), where $\mu$ is a parameter. In particular, Problem $A_{\mathfrak{R},\mu}^+$ is as follows: for each element $\lambda$ of $\mathbb{Z}_\mathfrak{R}$, determine the number of all possible $\mu$-tuples $\{\lambda_1,\lambda_2,\ldots,\lambda_{\mu}\}$, each of which consists of $\mu$ distinct  elements $\lambda_j$ of $\mathbb{Z}_{\Rk}$ such that their sum in $\mathbb{Z}_{\Rk}$ is equal to $\lambda$.  Open Problems $A_{q,\mu}^\times$ and $A_{\mathfrak{R},\mu}^+$ are interesting in their own right and, moreover, we proved that their solutions allow us to obtain the weight distributions for Case NS, taking $\mu=d-2$ and $\mathfrak{R}=q-1$.
To solve Problem $A_{\mathfrak{R},\mu}^+$, we found a universal method, connected with the values of $\mathfrak{R}$ and $\mu$, using orbits of elements in $\mathbb{Z}_{\Rk}$ and then we solved the problem for many pairs $\mathfrak{R},\mu$, obtaining the needed weight distributions for the corresponding pairs $q=\mathfrak{R}+1,d=\mu+2$.

\textbf{Keywords:} Reed--Solomon code, coset weight distribution, MDS code, projective space, 2-regular code.

\textbf{Mathematics Subject Classification (2010).} 94B05, 94B70, 05E18, 11T71

\section{Introduction}\label{sec_Intro}
Let $\F_{q}$ be the Galois field with $q$ elements, $\F_{q}^*=\F_{q}\setminus\{0\}$. Let $\F_{q}^{n}$ be
the space of $n$-dimensional vectors over ${\mathbb{F}}_{q}$.  We denote by  $[n,k,d]_{q}R$ an $\F_q$-linear code of length $n$, dimension $k$, minimum distance $d$, and covering radius $R$. In this notation, one may omit $R$ and call the ``minimum distance''  simply the ``distance''. If $d=n-k+1$, the code is maximum distance separable (MDS). The $[q+1,k,q+2-k]_q$ generalized doubly extended Reed--Solomon codes (GDRS) form an important class of MDS codes. For many values of $q$ and $k$, any $[q+1,k,q+2-k]_q$ MDS code is a GDRS code; see, e.g., \cite[Section~11]{Roth}. Thus, cosets of the GDRS code, considered in this paper,
are, in fact, cosets of a broad class of MDS codes.
  For an introduction to coding theory, see \cite{Bier,Blahut,HufPless,MWS,Roth}. For preliminaries, see Section \ref{sec:prelimin}.

Let $\PG(N,q)$ be the $N$-dimensional projective space over~$\F_q$. An $n$-\emph{arc} in  $\PG(N,q)$ is a
set of $n$ points such that no $N +1$ points lie in the same hyperplane of $\PG(N,q)$. An $n$-arc is complete if it is not contained in an $(n+1)$-arc. Arcs and MDS codes are equivalent objects  \cite{Ball-book2015,BallLav,EtzStorm2016,LandSt,MWS}.
 For an introduction to projective spaces over finite fields and to the connections between projective geometry, coding theory, and combinatorics, see \cite{Ball-book2015,EtzStorm2016,Hirs_PGFF,HirsStor-2001,HirsThas-2015,LandSt}.

A \emph{coset} of a code is a translation of the code by some vector, say $\vf$. A coset $\V$ of an $[n,k,d]_{q}$ code $\C$ can be represented as
\begin{equation}\label{eq1:coset}
  \V=\vf+\C=\{\x\in\F_q^n\,|\,\x=\cf+\vf,\cf\in \C\}\subset\F_q^n
\end{equation}
 where $\vf\in \V$ is a vector  fixed for the given representation and $\cf$ is a codeword; see \cite{Blahut,HufPless,MWS,Roth}. The \emph{weight $W$ of a coset} is the smallest Hamming weight of any vector in the coset. For preliminaries, see Section \ref{subsec:cosets}.

  The weight distributions of code cosets, their classification, the number of the cosets with distinct distributions, are interesting in their own right; they are important combinatorial properties of a code. In particular, the coset weight distribution serves to estimate decoding performance, e.g., for the \emph{bounded distance decoder} \cite[Section 4.3]{Blahut2008},  \cite{CheungIEEE1989,CheungIEEE1992}, \cite[Section 2.1, p.\ 1120]{DMP_MDScosets}, see Section \ref{subsec:cosets} for details. Knowledge of the weight distributions
of code cosets gives information on the distance distribution of the code itself. There are
many papers connected with various aspects of the weight distributions of code cosets; see, e.g.,
\cite{Blahut}, \cite[Section 7]{HufPless}, \cite[Sections 5.5, 6.6, 6.9]{MWS}, \cite[Section 10]{HandbookCodes},
\cite[Section~6.3]{DelsarteBook}, \cite[Section 1.4.6]{Klove},
\cite{Blahut2008,CheungIEEE1989,CheungIEEE1992,AsmMat,BlokPelSzo,Bonneau1990,CharpHelZin,DMP_IntegrWeight1_2Cosets,%
DMP_CosetsRScod4,DMP_MDScosets,Delsarte4Fundam,DelsarteLeven,Helleseth,JurrPellik,KasLin,KaipaIEEE2017,%
MW1963,Schatz1980,XuXu2019,ZDK_DHRIEEE2019}, and the references therein.

In \cite{CheungIEEE1989}, for $[n,k,d]_{q}$ MDS codes, the integrated weight distribution for the union of all cosets of weight $W\le\lfloor(d-1)/2\rfloor$ is obtained. In \cite{DMP_IntegrWeight1_2Cosets}, using results from \cite{CheungIEEE1989}, the weight distributions for the unions of all cosets of weight 1 and all cosets of weight 2 of MDS codes are given. In \cite{DMP_CosetsRScod4}, the weight distributions of all cosets  (without taking any unions) of the specific $[q+1,q-3,5]_q3$ GDRS code are obtained. In \cite{BlokPelSzo}, the coset leader weight enumerator for a $[q+1,q-3,5]_{q^m}4$ code, $m\ge2$, is considered.

In a few works, see, e.g., \cite{Bonneau1990,Delsarte4Fundam,HufPless,MWS}, it is shown that for an MDS code of distance~$d$, the weight distribution of any coset is uniquely determined if, in the coset, the numbers of vectors of weights $1,2,\ldots,d-2$ are known. Methods to obtain the weight distribution of a coset using the $d-2$ known numbers are considered in \cite[Section 7]{HufPless}, \cite[Section 10]{HandbookCodes}, \cite{Bonneau1990,DMP_CosetsRScod4,DMP_MDScosets}. The approach of \cite{HufPless,HandbookCodes} can be used for various codes, not necessarily MDS.

In Bonneau's paper \cite{Bonneau1990}, for MDS codes, a formula is proposed which is a remarkable direct relation between the $d-2$ known numbers of vectors in a coset and the full weight distribution of this coset. The approach of \cite{Bonneau1990} is simpler than that of \cite{HufPless,HandbookCodes}.

In \cite{DMP_CosetsRScod4}, for a $[q+1,q-3,5]_q3$ GDRS code, the numbers of vectors of weight $d-2=3$ in its cosets are obtained by a projective-geometric method based on the results of \cite{BDMP_TwistCubFFA} and then the results of \cite{Bonneau1990} are used to obtain the full weight distributions of the cosets.

In \cite{DMP_MDScosets}, the Bonneau formula of \cite{Bonneau1990} is transformed into a simpler form; also for many cases, further simplifications are made. The improved tools are applied to
obtain the weight distributions of the cosets of distinct MDS codes. In particular, for $[q+1,q-2,4]_qR$ GDRS codes with $R=2$, $q$ odd, and $R=3$, $q$ even, the weight distributions of the cosets are obtained in full, see \cite[Section 6]{DMP_MDScosets}.

For $[q+1,q+2-d,d]_qR$ GDRS codes with $d\ge5$, many interesting and useful results are obtained in
\cite[Sections 4, 5]{DMP_MDScosets}; see Section \ref{subsec2:WD} for details. However, for the weight distributions of the cosets of weight 2 many problems remained open. In this paper, we solve these problems in part.

Taking into account that a GDRS code is MDS, for a $[q+1,q+2-d,d]_qR$ GDRS code with $d\ge5$, the weight distribution of its coset $\V^{(2)}$ of weight 2 is completely determined if the number $B_{d-2}(\V^{(2)})$ of vectors of weight $d-2$ in $\V^{(2)}$ is known, see \cite[Theorem 5.1]{DMP_MDScosets} and Theorem \ref{th2:W=2}(i) with \eqref{eq2:Bon_w=2}.

For a GDRS code, we say that Case S occurs if the weight distribution for all cosets of weight 2 is the same or otherwise, Case NS occurs.

For Case S, according to the literature, for all the cosets $\V^{(2)}$, we have $B_{d-2}(\V^{(2)})=\frac{1}{q-1}\binom{q-1}{d-2}$, see \cite[Theorem 5.3]{DMP_MDScosets} and Theorem \ref{th2:W=2}(iii) with \eqref{eq2:wd2_ident}; i.e. the required weight distributions are known. However, any sufficient condition, for the occurrence of Case S, remained an open problem. We proved that if $q-1$ and $d-2$ are coprime then Case S holds, i.e. the problem is solved.

Furthermore, in Theorem \ref{th5:WD w2coset}(iii), we note that in Case S, the GDRS code is \emph{$2$-regular}; see \cite{BorgRiZin2019PIT,BorgRiZin2025arX,Delsarte4Fundam,GoethTilb1975UP} and Section \ref{subsec24:reguar} with Definition \ref{def2:tregul} for definitions.

For Case NS, we prove that if a coset $\V^{(2)}$ of weight 2 of a $[q+1,q+2-d,d]_qR$ GDRS code  with $d\ge5$ has a coset leader vector with elements $\gamma_1,\gamma_2\in\F_q^*$ in positions $j_1,j_2$ then the value of $B_{d-2}(\V^{(2)})$ does not depend on the positions $j_1,j_2$ and is equal to the number of all possible $(d-2)$-tuples $\{\alpha_1,\alpha_2,\ldots,\alpha_{d-2}\}$, each of which consists of $d-2$ distinct elements $\alpha_j$ of $\F_q^*$ such that their product is equal to the element $-\gamma_2/\gamma_1$ of $\F_q^*$, see Section \ref{sec:wd-2vectors} with Theorem \ref{th3:Bd-2}. As far as is known to the authors, the corresponding general combinatorial problem is not considered in the literature. We introduce it as Open Problem $A_{q,\mu}^\times$ for Galois fields $\F_q$; see Open Problem \ref{openprobAx} and Definition \ref{def3:Gamma} with \eqref{eq3:hatPb gamma}, \eqref{eq3:Pb gamma}.

 To find solutions of Open Problem $A_{q,\mu}^\times$, the following, equivalent, and in fact somewhat broader, open problem turned out to be more convenient. Open Problem $A_{\Rk,\mu}^+$ for rings $\Z_\Rk$ of integers modulo $\Rk$ reduces to the following: for each element $\lambda$ of $\Z_\Rk$, determine the number of all possible $\mu$-tuples $\{\lambda_1,\lambda_2,\ldots,\lambda_{\mu}\}$, each of which consists of $\mu$ distinct  elements $\lambda_j$ of $\Z_{\Rk}$ such that their sum in $\Z_{\Rk}$ is equal to $\lambda$;
see Open Problem \ref{openprobA+}  and Definition \ref{def3:Lambda} with \eqref{eq3:hatPk lambda}, \eqref{eq3:Pk lambda}. As far as is known to the authors, Open Problem $A_{\Rk,\mu}^+$ is not considered in the literature.

 Open Problem $A_{q,\mu}^\times$ and Open Problem $A_{\Rk,\mu}^+$ are interesting in their own right and, moreover, their solutions allow us to obtain the weight distributions for Case NS, taking $\mu=d-2$ and $\Rk=q-1$.
 To solve Problem $A_{\Rk,\mu}^+$, we found a universal method, connected with the values of $\Rk$ and $\mu$, using orbits of elements in $\Z_\Rk$ and then we solved the problem for many pairs $\Rk,\mu$, obtaining the needed weight distributions for the corresponding pairs $q=\Rk+1,d=\mu+2$.

The paper is organized as follows. Section \ref{sec:prelimin} contains preliminaries; therefore, the paper is mostly self-contained. In Section \ref{sec:wd-2vectors}, the number $B_{d-2}(\V^{(2)})$ of vectors of weight $d-2$ in the cosets of weight 2 of GDRS codes with distance $d\ge5$ is considered, Open Problems $A_{q,\mu}^\times$ and $A_{\mathfrak{R},\mu}^+$ are formulated and their sufficiency for obtaining $B_{d-2}(\V^{(2)})$ is proved. In Section~\ref{sec4:orbits}, for many pairs $\mathfrak{R},\mu$, solution of Open Problem $A_{\mathfrak{R},\mu}^+$ is obtained; in other words, a partial solution of Problem $A_{\Rk,\mu}^+$ is obtained. In Section \ref{sec5:WD2}, using the results above, we obtained the weight distribution of cosets of weight 2 of the $[q+1,q+2-d,d]_q$ GDRS codes for many pairs $q=\Rk+1,d=\mu+2$.

\section{Preliminaries}\label{sec:prelimin}
We introduce notation and recall known definitions and properties of linear codes, their cosets,  and the normal rational curves in the projective spaces $\PG(N,q)$, see  \cite{Bier,Ball-book2015,BallLav,BDMP_AlmComplArc,BartGiulPlat,Blahut,Blahut2008,Bonneau1990,CheungIEEE1989,CheungIEEE1992,%
DMP_IntegrWeight1_2Cosets,DMP_MDScosets,Delsarte4Fundam,GabKl,HufPless,HandbookCodes,MWS,Roth} and the references therein.


\subsection{Cosets of a linear code; the bounded distance decoder}\label{subsec:cosets}
\begin{notation}\label{not1}
 For an $[n,k,d]_{q}$ code $\C$ and its cosets $\V$ of the form \eqref{eq1:coset}, we use the following notations and definitions:
\begin{align*}
&t(\C)=\left\lfloor(d-1)/2\right\rfloor&&\text{the number of errors correctable by the code }\C;\db\\
&\text{weight of a vector}&&\text{Hamming weight of the vector;}\db\\
&wt(\x)&&\text{the Hamming weight of a vector  $\x\in\F_q^n$;}\db\\
&\#M&&\text{the cardinality of a set }M;\db\\
&\triangleq&&\text{the sign “equality by definition”};\db\\
&A_w(\C)&&\text{the number of codewords of weight $w$ of the code }\C;\db\\
&S(\C)&&\text{the set of non-zero weights in}\,\C;~S(\C)=\{w>0|A_w(\C)\ne0\};\db\\
&s(\C)=\#S(\C)&&\text{the number of non-zero weights in }\C;\db\\
&\C^\bot&&\text{the $[n,n-k,d^\bot]_q$ code dual to the }[n,k,d]_{q}\text{ code }\C;\db\\
&\vf+\C&&\text{the coset of $\C$ of the form }\eqref{eq1:coset};\db\\
&\cf_w&&\text{a codeword of weight $w$ of $\C$};~ wt(\cf_w)=w,~\cf_w\in\C;\db\\
&B_w(\V)&&\text{the number of vectors of weight $w$ in the coset }\V;\db\\
&\text{a coset leader}&&\text{a vector in the coset having the smallest Hamming weight;}\db\\
&\text{the weight of a coset}&&\text{the smallest Hamming weight of any vector in the coset;}\db\\
&\V^{(W)}&&\text{a coset of weight }W;~~B_w(\V^{(W)})=0\text{ if }w<W;\db\\
&B_W(\V^{(W)})&&\text{the number of all coset leaders in a coset }\V^{(W)};\db\\
&\Nb^{(W)}_\Sigma(\C)&&\text{the total number of the cosets of weight $W$ of the code }\C;\db\\
&\B_w^{\Sigma}(\V^{(W)})&&\text{the overall number of vectors of weight $w$ in all cosets}\db\\
&&&\text{of weight }W;\db\\
&\mathbf{H}(\C)&&\text{an $(n-k)\times n$ parity check matrix of }\C;\db\\
&tr&&\text{sign of the transposition};\db\\
&\mathbf{H}(\C)\x^{tr}&&\text{the \emph{syndrome} of a vector }\x\in\F_q^n,~~\mathbf{H}(\C)\x^{tr}\in \F_q^{n-k}.
\end{align*}
\end{notation}
 All vectors of a coset have the same syndrome; it is called the \emph{coset syndrome}. There is a one-to-one correspondence between cosets and syndromes. The code $\C$  is the coset of weight zero. The syndrome of $\C$ is the zero vector of $ \F_q^{n-k}$.

 If $W\leq t(\C)$ we have $B_W(\V^{(W)})=1$ and $ \Nb^{(W)}_\Sigma(\C)=\binom{n}{W}(q-1)^W$.
  If $W> t(\C)$, then $B_W(\V^{(W)})\ge1$, i.e. a  vector of minimal weight is not necessarily unique.

The covering radius of an
$[n,k,d]_{q}$ code $\C$ is the least integer $R$ such that the space $\F_{q}^{n}$ is covered by
Hamming spheres of radius $R$ centered at the codewords. Every column of $\F_{q}^{n-k}$ is equal to a linear combination of at most $R$ columns of a parity check matrix of $\C$.
 The covering radius $R$ of the code $\C$ is equal to the maximum weight of a coset of $\C$.
 For an $[n,k,d]_qR$ MDS code we have $R\le d-1$ \cite{BartGiulPlat,GabKl}.

For a fixed~$W$, we call the set $\{\B_w^{\Sigma}(\V^{(W)})|w=0,1,\ldots,n\}$
\emph{integral weight spectrum} of the code cosets of weight $W$, see \cite{CheungIEEE1989,CheungIEEE1992,DMP_IntegrWeight1_2Cosets} and the references therein.

The coset weight distribution, including integral weight spectrum, is important for estimating the \emph{bounded distance decoder} which acts as follows \cite[Section 4.3]{Blahut2008}, \cite{CheungIEEE1989,CheungIEEE1992}, \cite[Section 2.1, p.\ 1120]{DMP_MDScosets}.
For an $[n,k,d]_{q}$ code $\C$, let $\cf\in \C$ be a sent word, $\e$ be an
error vector, and $\x=\cf+\e$ be the received word. Let $\tau$ be some given distance. If $\x$ belongs to a coset of weight $>\tau$, the decoder declares the detection of an uncorrectable error. If $\x$ belongs  to a coset of weight $\le \tau$, the decoder puts the coset leader as an error vector $\e$, if the leader is unique, otherwise it forms a list of all the leaders as possible errors. This leads to incorrect decoding if, in fact, $wt(\e)>\tau$. Thus, all vectors of weight $>\tau$ belonging to the cosets of weight $\le \tau$ (resp. $> \tau$) give rise to incorrect decoding (resp. to the detection of an uncorrectable error). If $\tau=t(\C)=\lfloor(d-1)/2\rfloor$, the decoder is called the \emph{decoder up to half of minimum distance}.
For the corresponding probabilities, see  \cite{CheungIEEE1989,CheungIEEE1992}.

\subsection{MDS, GDRS,
and Hamming codes; normal rational curves}\label{subsec:MDS_Prelim}


The weight distribution  of an $[n,k,d=n-k+1]_q$ MDS code $\C$ has the following form; see, e.g.,
\cite[Theorem 7.4.1]{HufPless}, \cite[Theorem 11.3.6]{MWS}:
\begin{equation}\label{eq2:Aw(C)}
 A_{w}(\C)=0\text{ if }w<d;~A_{w}(\C)=\binom{n}{w}\sum_{j=0}^{w-d}(-1)^j\binom{w}{j}(q^{w-d+1-j}-1)\text{ if }w\ge d.
\end{equation}

A $(d-1)\times (q+1)$ parity check matrix $\mathbf{H}_d$ of the $[q+1,q+2-d,d]_q$ GDRS code with $d\ge3$ can be represented \cite[Section 5.1]{Roth} as
\begin{equation}\label{eq2:HDRS}
\mathbf{H}_d=\left[ \begin{array}{cccccc}
        1        &1        &\ldots&1&1        &0      \\
        m_1      &m_2      &\ldots&m_{q-1}&m_q      &0     \smallskip \\
        m_1^2    &m_2^2    &\ldots&m_{q-1}^2&m_q^2    &0     \\
        \ldots   &\ldots   &\ldots&\ldots   &\ldots&\ldots\\
        m_1^{d-3}&m_2^{d-3}&\ldots&m_{q-1}^{d-3}&m_q^{d-3}&0     \smallskip  \\
        m_1^{d-2}&m_2^{d-2}&\ldots&m_{q-1}^{d-2}&m_q^{d-2}&1       \\
       \end{array}\right]\times
       \left[\begin{array}{ccccc}
         v_1 & 0 & \ldots & 0 & 0 \\
         0 & v_2 & \ldots & 0 & 0 \\
        \ldots & \ldots & \ldots & \ldots & \ldots \\
         0 & 0 & \ldots& v_{q} & 0 \\
         0 & 0 & \ldots & 0 & v_{q+1}
                \end{array}\right],
\end{equation}
where for $1\le i\le q-1$ we have $m_i\in\F_{q}^*$, $m_i\ne m_j$ if $i\ne j$, i.e. $\{m_1,\ldots,m_{q-1}\}=\F_{q}^*$; also $m_q=0$, i.e. the $q$-th column of the left matrix is
$[1,0,0,\ldots,0]^{tr}$ and $\{m_1,\ldots,m_q\}=\F_{q}$. Finally, $v_i\in\F_q^*$, the elements $v_i$ do not have to
be distinct. If $v_1=\ldots=v_{q+1}=1$, the matrix $\mathbf{H}_d$ reduces to the left matrix and defines
the \emph{normalized} GDRS code \cite{Roth}.


 The $[\frac{q^m-1}{q-1},\frac{q^m-1}{q-1}-m,3]_q1$ Hamming code is well known \cite{HufPless,MWS,Roth}. For $m=2$, it is the $[q+1,q-1,3]_q1$ GDRS code with the parity check matrix $\mathbf{H}_3$ of \eqref{eq2:HDRS}.

In $\PG(N,q)$, $2\le N\le q-2$, a \emph{normal rational curve} is a $(q+1)$-arc projectively equivalent to the arc
$\{(1,t,t^2,\ldots, t^N):t\in \F_q\}\cup \{(0,\ldots,0 ,1)\}$ where $(1,t,\ldots,t^N)$ and $(0,\ldots,0 ,1)$ are points in homogeneous coordinates.  The points  of a normal rational curve in $\PG(N,q)$,
treated as columns, define a parity check matrix of a $[q + 1,q-N,N + 2]_qR$ GDRS code \cite{EtzStorm2016,LandSt,Roth}, cf. \eqref{eq2:HDRS}.

If in $\PG(N,q)$, the normal rational curve is a \emph{complete $(q+1)$-arc}, then the corresponding $[q + 1,q-N,N + 2]_qR$ GDRS code $\C$ cannot be extended to a $[q + 2,q-N+1,N + 2]_q$ MDS code, i.e. $\C$ has covering radius $R=d-2=N$.

The following conjectures are well known.

\begin{conjecture} \label{conj2:NRC}Let $2\le N\le q-2$. In $\PG(N,q)$, every normal rational curve is a complete $(q+1)$-arc except for the cases when  $q$ is even and $N\in\{2,q-2\}$, in which one point can be added to the curve.
\end{conjecture}
\begin{conjecture} \label{conj2:MDS} \emph{(MDS conjecture)} Let $2\le N\le q-2$. An  $[n,n-N-1,N+2]_q$ MDS code (or, equivalently, an $n$-arc in $\PG(N,q)$) has length $n\le q+1$ except for the cases when $q$ is even and $N\in\{2,q-2\}$, in which $n\le q+2$.
\end{conjecture}
 If the MDS conjecture holds for some pair $(N,q)$ then Conjecture \ref{conj2:NRC} holds too, but in general, the reverse is not true.
For the pairs $(N,q)$ for which MDS conjecture is proved, see
 \cite{Ball-book2015,BallLav} and the references therein.
 For the pairs $(N,q)$ for which Conjecture \ref{conj2:NRC} is proved, see \cite{BDMP_AlmComplArc}  and the references therein including \cite{St_1992_ComplNRC}.

 The couples mentioned are relatively numerous. Hence, the $[q+1, q+2-d,d]_qR$ GDRS codes with $R=d-2$ are an important class of MDS codes of covering radius $R=d-2$.

 \subsection{The weight distribution of the cosets of a linear code}\label{subsec2:WD}
 \begin{theorem}\label{th2:HP_coset}
\begin{description}
  \item[(i)] \cite[Lemma 7.5.1]{HufPless}  For a code $\C$, the weight distributions of all cosets $\alpha\vf+\C$, $\alpha\in\F_q^*$, are identical.

  \item[(ii)] \cite[Theorem 7.5.2]{HufPless}, \cite[Theorem 6.20]{MWS}, \cite[Theorem 10.10]{HandbookCodes}
  For a code $\C$, the weight distribution of any coset of weight $<s(\C^\bot)$ is uniquely determined if, in the coset, the numbers
of vectors of weights $1,2,\ldots,s(\C^\bot)-1$ are known.

\item[(iii)] \cite{Bonneau1990,Delsarte4Fundam,HufPless,MWS} For an  MDS code of distance $d$,  the weight distribution of a coset is uniquely determined if, in the coset, the numbers
of vectors of weights $1,2,\ldots,d-2$ are known.
\end{description}
\end{theorem}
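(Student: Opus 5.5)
This theorem is a compilation of known results, so the plan is to give a short self-contained argument for each part rather than cite.

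For (i), I would use the bijection $\x\mapsto\alpha\x$ from $\vf+\C$ onto $\alpha\vf+\C$. It is well defined and onto because $\alpha\C=\C$ for a linear code and $\alpha\ne0$. It preserves Hamming weight since $\alpha\in\F_q^*$. Hence $B_w(\vf+\C)=B_w(\alpha\vf+\C)$ for every $w$.

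For (ii), I would follow the MacWilliams-type argument for cosets. Let $\V$ be a coset and let $\{B_w(\V)\}$ be its weight distribution. Apply the MacWilliams transform to the weight enumerator of $\V$, that is, sum over $\x\in\V$ the characters $\chi(\langle \x,\mathbf{u}\rangle)$ for $\mathbf{u}\in\C^\bot$. A codeword $\mathbf{u}$ of $\C^\bot$ contributes a factor of modulus $\#\C$ that depends on $\mathbf{u}$ only through its weight class. This gives a linear system
\[
\sum_{w} B_w(\V)\,K_i(w)=\sum_{\mathbf{u}\in\C^\bot,\,wt(\mathbf{u})=i}\#\C\,\chi(\langle\vf,\mathbf{u}\rangle)\triangleq \#\C\, b_i,\qquad i\in\{0\}\cup S(\C^\bot),
\]
where the $K_i$ are Krawtchouk polynomials. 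Thus $B_w(\V)$ is a combination of the $s=s(\C^\bot)+1$ polynomials $K_i(w)$ with unknown coefficients $b_i$, and $b_0=1$. The unknowns are the $s(\C^\bot)$ values $b_i$ with $i\ne0$. I would show that the values $B_1,\dots,B_{s(\C^\bot)-1}$, together with $B_0(\V)=0$, determine them. If $\V$ has weight $W$, then $B_w=0$ for all $w<W$, and the remaining small-weight values are given. The resulting $s(\C^\bot)\times s(\C^\bot)$ system in the $b_i$ has a Vandermonde-like matrix $\big(K_i(w)\big)$ with distinct $i\in S(\C^\bot)$ and $w=0,\dots,s(\C^\bot)-1$. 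Its nonsingularity is exactly the Delsarte argument, which I would reduce to a Vandermonde determinant by writing $\sum_i b_iK_i(w)$ as a polynomial in $w$ through the dual expansion. This nonsingularity is the main obstacle. Once it is established, all $b_i$, and hence all $B_w(\V)$, are determined.

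For (iii), let $\C$ be MDS of distance $d=n-k+1$. Its dual is MDS with minimum distance $k+1$, so $S(\C^\bot)\subseteq\{k+1,\dots,n\}$ and $s(\C^\bot)\le n-k=d-1$. Also every coset has weight at most $R\le d-1$. I would handle the edge case $W=d-1$ separately via the integral counts, or simply use the fact that $s(\C^\bot)=d-1$ exactly, by \eqref{eq2:Aw(C)} applied to $\C^\bot$. Then by (ii), the numbers $B_1,\dots,B_{d-2}$ determine the whole distribution. As an alternative route for MDS codes, I would note that any $n-k=d-1$ positions form an information set of $\C^\bot$. This lets one count, by inclusion–exclusion, the vectors of $\V$ vanishing on a given set of $n-w$ coordinates. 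Doing so expresses $B_w$ for $w\ge d-1$ through $B_1,\dots,B_{d-2}$, which is essentially Bonneau's formula.
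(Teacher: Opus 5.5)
The paper gives no proof of this theorem; it is quoted from the literature, so your argument has to stand on its own. Part (i) is fine. Your displayed identity in (ii) is correct, but not for the reason you state. The single term $\#\C\,\chi(\langle\vf,\mathbf{u}\rangle)$ depends on $\mathbf{u}$ itself, not only on its weight class. The restriction to $\mathbf{u}\in\C^\bot$ is legitimate because $\sum_{\x\in\V}\chi(\langle\x,\mathbf{u}\rangle)=0$ for $\mathbf{u}\notin\C^\bot$.

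The real gap is the step you call the main obstacle, because your proposed reduction runs in the wrong variable. $K_i(w)$ has degree $i$ in $w$, and $i$ ranges over $S(\C^\bot)$, which can contain values as large as $n$. So $\sum_i b_iK_i(w)$ is not a polynomial of degree $<s(\C^\bot)$ in $w$, and its values at $w=0,\dots,s(\C^\bot)-1$ carry no Vandermonde structure. The matrix is indeed nonsingular, but for a different reason. Krawtchouk inversion of your identity gives $\#\C^\bot\,B_w(\V)=\sum_{i\in\{0\}\cup S(\C^\bot)}b_iK_w(i)$; equivalently, use $\binom{n}{i}(q-1)^iK_w(i)=\binom{n}{w}(q-1)^wK_i(w)$. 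Here $K_w(x)$ has degree exactly $w$ in $x$, with leading coefficient $(-q)^w/w!$. Hence the matrix $(K_w(i))$, $0\le w\le s(\C^\bot)-1$, $i\in S(\C^\bot)$, is a lower triangular matrix with nonzero diagonal times the Vandermonde matrix $(i^w)$ on distinct nodes, so it is nonsingular. Note that this argument never uses the hypothesis that the coset weight is less than $s(\C^\bot)$.

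In (iii), the claim that $s(\C^\bot)=d-1$ holds exactly for every MDS code is false. Take the $[q+1,q-1,3]_q$ Hamming (GDRS) code. Its dual is the $[q+1,2,q]_q$ code, and every nonzero codeword there has weight $q$: a nonzero functional on $\F_q^2$ vanishes at exactly one point of $\PG(1,q)$. Indeed \eqref{eq2:Aw(C)} gives $A_{q+1}(\C^\bot)=(q^2-1)-(q+1)(q-1)=0$, so $s(\C^\bot)=1<2=d-1$. The binary repetition codes give further examples. You do not need this claim, though. Since the argument for (ii) is unconditional in the coset weight, the inequality $s(\C^\bot)\le d-1$ alone shows that $B_0,\dots,B_{d-2}$ determine the distribution. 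The separate treatment of $W=d-1$ ``via integral counts'' is then unnecessary.

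Your alternative route for (iii) is correct and self-contained, and is the better main proof. Any $d-1$ columns of a parity check matrix are independent, so a coset contains exactly $q^{|U|-d+1}$ vectors supported inside any set $U$ with $|U|\ge d-1$. For sizes $v\le d-2$, the sum of these counts over all $U$ with $|U|=v$ equals $\sum_{j\le v}\binom{n-j}{v-j}B_j(\V)$. M\"obius inversion over subsets then expresses $B_w(\V)$, for $w\ge d-1$, through $B_0,\dots,B_{d-2}$. This is exactly the Bonneau computation behind Theorem \ref{th2:BonTrans}. It avoids both the Krawtchouk nonsingularity and any knowledge of $s(\C^\bot)$.
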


Theorem \ref{th2:BonTrans} is obtained in \cite{DMP_MDScosets} by the transformation of the formula from \cite{Bonneau1990}.

\begin{theorem}\label{th2:BonTrans}
\cite{DMP_MDScosets} \textbf{(Bonneau formula transformed)}
Let $\C$ be an $[n,k,d]_q$ MDS code of distance $d\ge2$. Let $\V$ be one of its cosets. Let $A_w(\C)$ be the number of codewords of weight $w$ of $\C$ as in \eqref{eq2:Aw(C)}. Let $B_w(\V)$ be the number of vectors of weight $w$ in the coset~$\V$. Assume that all values of $B_v(\V)$ with $0\le v\le d-2$ are known.  Then, for $w\ge d-1$, the weight distribution of\/ $\V$ is as follows:
\begin{align}\label{eq2:BonGenTransf}
&B_w(\V)=A_w(\C)-\Omega_w^{(0)}(n,d)+\sum_{v=0}^{d-2}\Omega_w^{(v)}(n,d)B_v(\V),~w=d-1,d,\ldots,n,\db\\
&\label{eq2:OmegaDef}
\Omega_w^{(v)}(n,d)=(-1)^{w-d}\binom{n-v}{w-v}\binom{w-1-v}{d-2-v},~d\ge2.
\end{align}
\end{theorem}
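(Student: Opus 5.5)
The plan is a direct inclusion--exclusion argument, bypassing Bonneau's original formula: I will count vectors of $\V$ by the coordinate set containing their support, and use the MDS property to make these counts explicit for large sets. Write $\V=\vf+\C$ and let $\mathbf{H}$ be an $(n-k)\times n$ parity check matrix of $\C$, where $n-k=d-1$. For a subset $J\subseteq\{1,\ldots,n\}$ let $N_J$ be the number of $\x\in\V$ with $\mathrm{supp}(\x)\subseteq J$, and put $F(j)=\sum_{\#J=j}N_J$.

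\emph{Step 1: the counts $F(j)$ for large $j$.} Since $\C$ is MDS, any $d-1=n-k$ columns of $\mathbf{H}$ are linearly independent. Hence for $\#J=j\ge d-1$ the restriction of the syndrome map to vectors supported in $J$ is surjective onto $\F_q^{n-k}$. Every fibre therefore has size $q^{j-d+1}$, and in particular the fibre over the coset syndrome does. This gives
\[
F(j)=\binom{n}{j}q^{j-d+1},\qquad j\ge d-1,
\]
independently of the coset.

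\emph{Step 2: inversion.} Counting pairs $(\x,J)$ with $\mathrm{supp}(\x)\subseteq J$ gives $F(j)=\sum_{v\le j}\binom{n-v}{j-v}B_v(\V)$. Binomial (M\"obius) inversion on the subset lattice then yields
\[
B_w(\V)=\sum_{j=0}^{w}(-1)^{w-j}\binom{n-j}{w-j}F(j).
\]
I split this sum into the part with $j\ge d-1$, which is coset-independent by Step 1, and the part with $j\le d-2$, where $F(j)=\sum_{v\le j}\binom{n-v}{j-v}B_v(\V)$ involves only the known values $B_0(\V),\ldots,B_{d-2}(\V)$.

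Next I apply the same formula to $\C$ itself, for which $B_0=1$ and $B_v=0$ for $1\le v\le d-2$, so that $B_w(\C)=A_w(\C)$. Subtracting the two expressions cancels the coset-independent part, leaving
\[
B_w(\V)-A_w(\C)=\sum_{v=0}^{d-2}c_{w,v}\bigl(B_v(\V)-\delta_{v,0}\bigr),\qquad c_{w,v}=\sum_{j=v}^{d-2}(-1)^{w-j}\binom{n-j}{w-j}\binom{n-v}{j-v}.
\]
The $\delta_{v,0}$ term will produce the $-\Omega_w^{(0)}(n,d)$ summand of \eqref{eq2:BonGenTransf}.

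\emph{Step 3: closed form of $c_{w,v}$.} This is the step needing the most care, mainly with signs. First use
\[
\binom{n-j}{w-j}\binom{n-v}{j-v}=\binom{n-v}{w-v}\binom{w-v}{j-v}.
\]
Then reindex with $i=j-v$, which gives
\[
c_{w,v}=\binom{n-v}{w-v}(-1)^{w-v}\sum_{i=0}^{d-2-v}(-1)^i\binom{w-v}{i}.
\]
The partial alternating sum identity $\sum_{i=0}^{m}(-1)^i\binom{N}{i}=(-1)^m\binom{N-1}{m}$ applies here with $N=w-v\ge1$, since $w\ge d-1>v$. It gives $c_{w,v}=(-1)^{w-d}\binom{n-v}{w-v}\binom{w-1-v}{d-2-v}=\Omega_w^{(v)}(n,d)$, which is exactly \eqref{eq2:OmegaDef}.

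This establishes \eqref{eq2:BonGenTransf} for all $w\ge d-1$, and the restriction $w\ge d-1$ is precisely what guarantees $N\ge1$ in the identity of Step 3.
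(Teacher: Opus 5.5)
Your proof is correct, and it takes a genuinely different route from the paper. The paper gives no argument of its own for this statement: it imports the formula from \cite{DMP_MDScosets}, where it was obtained by algebraically rewriting Bonneau's formula \cite{Bonneau1990} for translates of MDS codes. You instead derive it from first principles:
\begin{itemize}
\item the MDS property gives $N_J=q^{\#J-d+1}$ for every $J$ with $\#J\ge d-1$;
\item M\"obius inversion over subsets expresses $B_w(\V)$ through the $F(j)$;
\item subtracting the instance $\V=\C$ removes the coset-independent tail;
\item the coefficient collapses to $\Omega_w^{(v)}(n,d)$ via the truncated alternating binomial sum.
\end{itemize}
The sign $(-1)^{w+d-2-2v}=(-1)^{w-d}$ and the requirement $N=w-v\ge1$ both check out.

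Your route buys self-containment and transparency. It never uses the explicit form \eqref{eq2:Aw(C)}, only $A_w(\C)=B_w(\C)$, and evaluating the tail at $\V=\C$ in fact reproves \eqref{eq2:Aw(C)}. It also shows exactly why $B_0(\V),\ldots,B_{d-2}(\V)$ suffice (Theorem~\ref{th2:HP_coset}(iii)): $N_J$ is coset-independent as soon as the columns of $\mathbf{H}$ indexed by $J$ span $\F_q^{d-1}$. The same scheme adapts to arbitrary linear codes, with the threshold $n-d^{\perp}+1$ in place of $d-1$. The paper's route is shorter on the page and ties the result to the existing literature. However, it rests on Bonneau's theorem and on a transformation that is not reproduced in this paper.
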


\begin{theorem}\label{th2:wd1cosetMDS}
\cite[Theorem 4.1]{DMP_MDScosets}
 Let $\C$ be an $[n,k,d]_q$ MDS code with $d\ge3$. Let $A_w(\C)$ and $\Omega_w^{(v)}(n,d)$ be as in \eqref{eq2:Aw(C)} and \eqref{eq2:OmegaDef}, respectively. Then all its $n(q-1)$ cosets $\V^{(1)}$ of weight $1$ have the same weight distribution $B_w(\V^{(1)})$ of the form:
  \begin{align}\label{eq2:wd1wc}
&B_w(\V^{(1)})=0 \text{ if }  w\in\{0,1,\ldots,d-2\}\setminus\{1\};B_1(\V^{(1)})=1,~B_{d-1}(\V^{(1)})=\binom{n-1}{d-1};\db\\
&B_w(\V^{(1)})=A_{w}(\C)-\Omega_w^{(0)}(n,d)+\Omega_w^{(1)}(n,d)\text{ if }  w=d,d+1,\ldots,n.\notag
\end{align}
\end{theorem}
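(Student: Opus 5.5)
Since $\C$ is MDS with $d\ge3$, I will use Theorem~\ref{th2:BonTrans}: once $B_v(\V^{(1)})$ is known for every $0\le v\le d-2$, the rest of the weight distribution follows. So the plan is to compute those low-weight counts first and then substitute.

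\textbf{Step 1: the low weights.} Take a coset $\V^{(1)}$ of weight $1$, and let $\e$ be a coset leader with $wt(\e)=1$. Every vector of $\V^{(1)}$ has the form $\e+\cf$ with $\cf\in\C$.
\begin{itemize}
\item If $\cf\neq0$, then $wt(\cf)\ge d$, so $wt(\e+\cf)\ge d-1$.
\item Hence the only vector of weight at most $d-2$ in the coset is $\e$ itself.
\end{itemize}
This gives $B_0(\V^{(1)})=0$, $B_1(\V^{(1)})=1$, and $B_v(\V^{(1)})=0$ for $2\le v\le d-2$.

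These values are the same for every coset of weight $1$. By Theorem~\ref{th2:HP_coset}(iii), all such cosets therefore have the same weight distribution.

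\textbf{Counting the cosets.} Since $1\le t(\C)$ holds whenever $d\ge3$, the coset leader is unique. So the number of cosets of weight $1$ is $\Nb^{(1)}_\Sigma(\C)=n(q-1)$.

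\textbf{Step 2: substitute into the Bonneau formula.} Putting the values from Step 1 into \eqref{eq2:BonGenTransf}, the sum collapses to the single term $\Omega_w^{(1)}(n,d)B_1(\V^{(1)})$. This gives
\[
B_w(\V^{(1)})=A_w(\C)-\Omega_w^{(0)}(n,d)+\Omega_w^{(1)}(n,d)\quad\text{for } w\ge d-1,
\]
which is the claimed formula for $w\ge d$.

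\textbf{Step 3: the case $w=d-1$.} Here I simplify directly instead of using the general expression.
\begin{itemize}
\item $A_{d-1}(\C)=0$ by \eqref{eq2:Aw(C)}.
\item $\Omega_{d-1}^{(0)}(n,d)=-\binom{n}{d-1}\binom{d-2}{d-2}=-\binom{n}{d-1}$.
\item $\Omega_{d-1}^{(1)}(n,d)=-\binom{n-1}{d-2}\binom{d-3}{d-3}=-\binom{n-1}{d-2}$.
\end{itemize}
Therefore
\[
B_{d-1}(\V^{(1)})=\binom{n}{d-1}-\binom{n-1}{d-2}=\binom{n-1}{d-1}
\]
by Pascal's rule.

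As a cross-check, count directly. The vectors of weight $d-1$ in the coset are $\e+\cf$ with $\cf$ a codeword of weight $d$ whose support contains the position of $\e$, and whose entry there equals $-\e$. An MDS codeword of weight $d$ is determined, up to a scalar, by its support. So there is exactly one such codeword for each $(d-1)$-subset of the other $n-1$ positions, which again gives $\binom{n-1}{d-1}$.

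\textbf{Main obstacle.} There is no real difficulty here. The only delicate point is Step 3: the index conventions in $\Omega$ at $w=d-1$, where $\binom{w-1-v}{d-2-v}$ must evaluate to $1$, and the sign $(-1)^{w-d}=-1$.
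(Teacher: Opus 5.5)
Your proof is correct and follows essentially the route of the cited source \cite{DMP_MDScosets}; the present paper only quotes this result. That route is: the minimum distance makes the leader the only vector of weight at most $d-2$ in the coset, then substituting $B_1(\V^{(1)})=1$ and $B_v(\V^{(1)})=0$ for the other $v\le d-2$ into \eqref{eq2:BonGenTransf} gives the whole distribution, with $B_{d-1}(\V^{(1)})=\binom{n-1}{d-1}$ by Pascal's rule (your direct count of weight-$d$ codewords through the leader's position is a valid independent check).
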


\begin{theorem}\label{th2:W=2}\cite[Section 5]{DMP_MDScosets}
Let $\C$ be an $[n,k,d]_q$ MDS code of distance $d\ge5$.
Let $\V^{(2)}$ be one of its cosets of weight $2$. Let $A_w(\C)$ and $\Omega_w^{(v)}(n,d)$ be as in \eqref{eq2:Aw(C)} and \eqref{eq2:OmegaDef}.
\begin{description}
  \item[(i)] Assume that  the value of $B_{d-2}(\V^{(2)})$ is known. Then the number $B_w(\V^{(2)})$ of vectors of weight $w$ in the coset $\V^{(2)}$  has the  form:
\begin{align}\label{eq2:Bon_w=2}
&B_w(\V^{(2)})=0 \text{ if } w\in\{0,1,\ldots,d-3\}\setminus\{2\},~B_2(\V^{(2)})=1,\db\\
&B_w(\V^{(2)})=A_w(\C)-\Omega_w^{(0)}(n,d)+\Omega_w^{(2)}(n,d)+(-1)^{w-d}\binom{n-d+2}{n-w}B_{d-2}(\V^{(2)})\db\notag\\
&\text{ if }w= d-1,d,\ldots,n.\notag
\end{align}

  \item[(ii)] The overall number $\B_{d-2}^{\Sigma}(\V^{(2)})$ of vectors of weight $d-2$ in all cosets of weight $2$ of $\C$ is as follows:
\begin{equation}\label{eq5:Bd-2Sum}
   \B_{d-2}^{\Sigma}(\V^{(2)})=(q-1)\binom{n}{2}\binom{n-2}{d-2}.
   \end{equation}

  \item[(iii)] Assume that  all $\binom{n}{2}(q-1)^2$ cosets $\V^{(2)}$ of weight $2$ of $\C$ have the same weight distribution.
Then, the weight distribution  of any coset $\V^{(2)}$ of weight $2$ is as follows:
  \begin{align}\label{eq2:wd2_ident}
&B_w(\V^{(2)})=0 \text{ if } w\in\{0,1,\ldots,d-3\}\setminus\{2\},~B_2(\V^{(2)})=1,\db\\
&B_{d-2}(\V^{(2)})=\frac{\B_{d-2}^{\Sigma}(\V^{(2)})}{\binom{n}{2}(q-1)^2}=\frac{1}{q-1}\binom{n-2}{d-2},\db\notag\\
&B_w(\V^{(2)})
=A_w(\C)-\Omega_w^{(0)}(n,d)+\Omega_w^{(2)}(n,d)
+(-1)^{w-d}\,\binom{n-d+2}{n-w}\cdot\frac{1}{q-1}\binom{n-2}{d-2}\db\notag\\
&\text{if }w=d-1,d,\ldots,n.\notag
\end{align}

  \item[(iv)]  A necessary condition for equality of the weight distributions of all cosets of weight $2$ of $\C$ is as follows:
\begin{equation}\label{eq2:necessar}
\frac{1}{q-1}\binom{n-2}{d-2} \text{ is an integer}.
\end{equation}

  \item[(v)]
    Let $n=q+1$, $d\ge5$. Let $q-1$ be coprime with $d-2$. Then the necessary condition \eqref{eq2:necessar} holds.

  \item[(vi)] \textbf{(Symmetry of different weight distributions)}
    Let $\C$ be an $[n,k,d]_q$ MDS code with $d\ge5$. Let $\V^{(2)}_a$ and $\V^{(2)}_b$ be two of its cosets of weight $2$ with  different weight distributions. Then independently of values of $B_{d-2}(\V^{(2)}_a)$ and $B_{d-2}(\V^{(2)}_b)$, there is the following symmetry of the weight distributions:
\begin{align}
&(-1)^{n+d}B_w(\V^{(2)}_a)-B_{n+d-2-w}(\V^{(2)}_a)=(-1)^{n+d}B_w(\V^{(2)}_b)-B_{n+d-2-w}(\V^{(2)}_b),\notag\\
&w=d-1,d,\ldots,n.
\end{align}
\end{description}
\end{theorem}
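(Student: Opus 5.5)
The plan is to derive every part of Theorem~\ref{th2:W=2} from Theorem~\ref{th2:BonTrans}, together with elementary facts about MDS codes. Since $d\ge5$, we have $t(\C)\ge2$. Hence a coset $\V^{(2)}$ has a unique leader $\e$ of weight $2$, and $B_2(\V^{(2)})=1$.

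For (i), I first show that $B_v(\V^{(2)})=0$ for $v\in\{0,1,\ldots,d-3\}\setminus\{2\}$. If $\x\in\V^{(2)}$ with $wt(\x)=v\ne2$ and $v\le d-3$, then $\x-\e$ is a nonzero codeword of weight at most $v+2\le d-1$, which is impossible. Substituting these values into \eqref{eq2:BonGenTransf} leaves the terms with $v=2$ and $v=d-2$. From \eqref{eq2:OmegaDef},
\[
\Omega_w^{(d-2)}(n,d)=(-1)^{w-d}\binom{n-d+2}{w-d+2}\binom{w-d+1}{0}=(-1)^{w-d}\binom{n-d+2}{n-w},
\]
and this gives \eqref{eq2:Bon_w=2}.

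For (ii), I double count. Let $\x$ have weight $d-2$ and lie in a coset of weight $2$ with leader $\e$. Then $\cf=\x-\e$ is a codeword with $d\le wt(\cf)\le d$. So $wt(\cf)=d$, and the supports of $\x$ and $\e$ are disjoint and partition the support of $\cf$. Conversely, any codeword $\cf$ of weight $d$ and any split of its support into a $2$-set and a $(d-2)$-set give such a pair $(\x,\e)$, with $\x=\cf+\e$ where $\e$ is minus $\cf$ restricted to the $2$-set. Since the leader is unique, each $\x$ is counted exactly once. Using $A_d(\C)=\binom{n}{d}(q-1)$ from \eqref{eq2:Aw(C)}, the total is
\[
(q-1)\binom{n}{d}\binom{d}{2}=(q-1)\binom{n}{2}\binom{n-2}{d-2}.
\]

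Part (iii) follows by dividing this total by the number $\binom{n}{2}(q-1)^2$ of such cosets and inserting the result into (i). Part (iv) is immediate, because $B_{d-2}(\V^{(2)})$ must be an integer. For (v), put $n=q+1$ and use the identity
\[
(d-2)\binom{q-1}{d-2}=(q-1)\binom{q-2}{d-3}.
\]
Thus $q-1$ divides $(d-2)\binom{q-1}{d-2}$, and coprimality of $q-1$ and $d-2$ gives $(q-1)\mid\binom{q-1}{d-2}$.

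For (vi), write $B_w(\V^{(2)})=F_w+(-1)^{w-d}\binom{n-d+2}{n-w}X$ by (i). Here $F_w$ does not depend on the coset and $X=B_{d-2}(\V^{(2)})$. In $(-1)^{n+d}B_w-B_{n+d-2-w}$, the coefficient of $X$ is
\[
(-1)^{n+w}\binom{n-d+2}{n-w}-(-1)^{n-2-w}\binom{n-d+2}{w-d+2}.
\]
The two binomials coincide, since $\binom{n-d+2}{n-w}=\binom{n-d+2}{w-d+2}$, and the signs $(-1)^{n+w}$ and $(-1)^{n-w}$ coincide. Hence the coefficient vanishes. Also, $w\in[d-1,n]$ exactly when $n+d-2-w\in[d-1,n]$, so (i) applies to both indices and the expression is the same for every coset. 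The only delicate step is the counting in (ii), specifically the disjointness of supports and the uniqueness of the leader; everything else is direct substitution.
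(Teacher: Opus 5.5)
The paper gives no proof of this theorem; it quotes it from \cite{DMP_MDScosets}. Your arguments for (i)--(v) are correct and self-contained. The following steps all check out:
\begin{itemize}
\item the vanishing of $B_v(\V^{(2)})$ for $v\le d-3$, $v\ne2$;
\item the evaluation $\Omega_w^{(d-2)}(n,d)=(-1)^{w-d}\binom{n-d+2}{n-w}$;
\item the bijection between weight-$(d-2)$ vectors in weight-$2$ cosets and pairs (codeword of weight $d$, $2$-subset of its support), which gives (ii) without any appeal to integral weight spectra;
\item the absorption identity you use for (v).
\end{itemize}

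The one step that fails is in (vi). Your claim that $w\in[d-1,n]$ exactly when $n+d-2-w\in[d-1,n]$ is false. The reflection $w\mapsto n+d-2-w$ maps $[d-1,n]$ onto $[d-2,n-1]$. In particular, $w=n$ is sent to $d-2$, and there (i) supplies no formula, because $B_{d-2}(\V^{(2)})=X$ is the input. So "(i) applies to both indices" is wrong at $w=n$, and your argument as written does not cover that index, which is explicitly part of the statement.

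The repair is one line. By (i), $B_n=F_n+(-1)^{n-d}X$. Hence
$(-1)^{n+d}B_n-B_{d-2}=(-1)^{n+d}F_n+(-1)^{n+d}(-1)^{n-d}X-X=(-1)^{n+d}F_n$,
which is independent of the coset.

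Alternatively, note that the formula of (i) also holds at $w=d-2$ with $F_{d-2}=0$. Indeed, $A_{d-2}(\C)=0$, and $\Omega^{(0)}_{d-2}(n,d)=\Omega^{(2)}_{d-2}(n,d)=0$ because their factors $\binom{d-3}{d-2}$ and $\binom{d-5}{d-4}$ vanish for $d\ge5$. With that observation, your coefficient computation applies verbatim to every $w\in[d-1,n]$; at $w=n$ it gives $1-1=0$.
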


\subsection{Completely regular and $t$-regular codes}\label{subsec24:reguar}
Completely regular and $t$-regular codes are considered e.g.\ in
\cite{BorgRiZin2019PIT,BorgRiZin2025arX,Delsarte4Fundam,GoethTilb1975UP}, see also the references therein.
Slightly rephrasing the definitions from the literature, we give the following definition.
 \begin{definition}\label{def2:tregul}
 We consider an $[n,k]_R$ code $\C$. Let $t$ be an integer, $0\le t\le R$. If for \emph{every} $W$ such that $0
 \le W\le t$, all cosets of weight $W$ of $\C$ have the same weight distribution then $\C$ is called a $t$-regular code. Moreover, an $R$-regular code is called \emph{completely regular}.
 \end{definition}

\section{The number of vectors of weight $d-2$ in the cosets of weight 2 of a $[q+1,q+2-d,d]_q$ GDRS code with $d\ge5$; Open
Problems $A_{q,\mu}^\times$ and $A_{\Rk,\mu}^+$}\label{sec:wd-2vectors}

\begin{notation}\label{not2}
 In addition to Notation \ref{not1}, for an $[n,k,d]_{q}$ code $\C$ and its cosets $\V$ \eqref{eq1:coset}, we use the following notations and definitions, where $1\le j_i\le n$, $j_u\ne j_v$ if $u\ne v$:
 \begin{align*}
 &\vf_w&&\text{a vector of weight $w$ of }\F_q^{n},~wt(\vf_w)=w;\db\\
 &\vf_2(j_1,j_2;\gamma_1,\gamma_2)&&\text{a vector of weight 2 of $\F_q^{n}$ with elements }\gamma_1,\gamma_2\in\F_q^*\db\\
 &&&\text{in positions $j_1,j_2$, respectively;}\db\\
 &\cf_d(j_1,j_2,\ldots,j_d)&&\text{a codeword of weight $d$ of $\C$ with non-zero elements}\db\\
 &&&\text{in positions }j_1,j_2,\ldots,j_d;\db\\
 &\cf_d(j_1,j_2,\ldots,j_d;\gamma_1,\gamma_2)&&\text{a codeword of weight $d$ of $\C$ with non-zero elements in }\db\\
 &&&\text{positions $j_1,j_2,\ldots,j_d$ where elements }\gamma_1,\gamma_2\in\F_q^*\db\\
 &&&\text{are placed in positions $j_1,j_2$, respectively;}\db\\
 &\cf_d^*(\vf_2)&&\text{a codeword of weight $d$ of $\C$ \emph{critical} for the vector $\vf_2$ so that}\db\\
 &&&\cf_d^*(\vf_2)+\vf_2=\vf_{d-2};\db\\
 &\cf_d^*(\vf_2(j_1,j_2;\gamma_1,\gamma_2))&&\text{a codeword of weight $d$ of $\C$ \emph{critical} for the vector }\vf_2(j_1,j_2;\gamma_1,\gamma_2)\db\\
 &&&\text{so that }\cf_d^*(\vf_2(j_1,j_2;\gamma_1,\gamma_2))+\vf_2(\vf_2(j_1,j_2;\gamma_1,\gamma_2))=\vf_{d-2}\db\\
 &&&\text{that implies }\cf_d^*(\vf_2(j_1,j_2;\gamma_1,\gamma_2))=\cf_d(j_1,j_2,\ldots,j_d;-\gamma_1,-\gamma_2).
 \end{align*}
\end{notation}
From \eqref{eq1:coset} and Notations \ref{not1}, \ref{not2}, the following lemma holds:
\begin{lemma}\label{lem3:Bd-2=crit}
 The number $B_{d-2}(\V^{(2)})$ of vectors of weight $d-2$ in a coset of weight $2$ with a coset leader $\vf_2$ is equal to the number of the critical codewords $\cf_d^*(\vf_2)$ for $\vf_2$.
\end{lemma}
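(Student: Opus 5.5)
The plan is to build an explicit bijection between the vectors of weight $d-2$ in $\V^{(2)}=\vf_2+\C$ and the critical codewords $\cf_d^*(\vf_2)$. By \eqref{eq1:coset}, every vector of $\V^{(2)}$ can be written uniquely as $\x=\vf_2+\cf$ with $\cf\in\C$, since $\cf=\x-\vf_2$. Hence the map $\cf\mapsto\vf_2+\cf$ is a bijection from $\C$ onto $\V^{(2)}$. It then suffices to show that $wt(\vf_2+\cf)=d-2$ holds exactly when $\cf$ is a critical codeword for $\vf_2$.

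\emph{Critical codewords give vectors of weight $d-2$.} This direction is immediate. By Notation \ref{not2}, a critical codeword satisfies $\cf_d^*(\vf_2)+\vf_2=\vf_{d-2}$, which is a vector of weight $d-2$ lying in $\V^{(2)}$.

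\emph{Vectors of weight $d-2$ come from critical codewords.} Suppose $wt(\vf_2+\cf)=d-2$. Since $d\ge5$, we have $d-2\ge3>2$, so $\cf\ne\mathbf{0}$ and therefore $wt(\cf)\ge d$. Adding a weight-2 vector can lower the weight by at most $2$, so
\[
d-2=wt(\cf+\vf_2)\ge wt(\cf)-2 .
\]
This forces $wt(\cf)\le d$, hence $wt(\cf)=d$, and the bound is attained. Equality means that the support of $\vf_2$ lies inside the support of $\cf$ and that $\cf$ cancels $\vf_2$ in both positions $j_1,j_2$. So $\cf$ has the entries $-\gamma_1,-\gamma_2$ there, i.e.\ $\cf=\cf_d(j_1,j_2,\ldots,j_d;-\gamma_1,-\gamma_2)$, which is exactly the critical codeword described in Notation \ref{not2}.

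Because $\cf\mapsto\vf_2+\cf$ is injective, these two directions show that it restricts to a bijection between the critical codewords and the vectors of weight $d-2$ in $\V^{(2)}$, and the equality of counts follows. The only point needing care is the weight inequality forcing $wt(\cf)=d$, which relies on $d-2>2$ (so that $\cf\neq\mathbf{0}$) and on the minimum distance $d$. Beyond that, the argument is a direct unwinding of the definitions.
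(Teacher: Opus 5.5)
Your proof is correct and is exactly the unwinding of \eqref{eq1:coset} and Notation~\ref{not2} that the paper has in mind; the paper states the lemma as an immediate consequence of these definitions and gives no further argument. Your explicit check is the one point the paper leaves implicit: $d\ge5$ rules out $\cf=\mathbf{0}$, and the triangle inequality then forces $wt(\cf)=d$.
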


To investigate the weight distributions of the cosets of the $[q+1,q+2-d,d]_q$ GDRS code, we consider the \emph{normalized} code given by the parity check matrix $\mathbf{H}_d$ of  \eqref{eq2:HDRS} with $v_1=\ldots,v_{q+1}=1$.
Change of elements $v_i$ in  \eqref{eq2:HDRS} does not affect the weight distributions of the cosets.

\begin{definition}\label{def3:Gamma}
 Let $q$ be a prime power. Let $\mu$ be a positive integer, $\mu<q-1$. We introduce the following notion: \emph{$\mu$-factors peculiarity} (\emph{$\mu^\times$-peculiarity}, for short) of an element $\gamma$ of $\F_q^*$ is the number of all possible $\mu$-tuples $\{\alpha_1,\alpha_2,\ldots,\alpha_{\mu}\}$, each of which consists of $\mu$ distinct elements $\alpha_j$ of $\F_q^*$ such that their product is equal to $\gamma$. We denote the $\mu^\times$-peculiarity of $\gamma\in\F_q^*$ by
 $\Pb_{q,\mu}^\times(\gamma)$.  Also, we denote the set of all the mentioned $\mu$-tuples by $\widehat{\Pb}_{q,\mu}^\times(\gamma)$. In other words,
 \begin{align}
&\widehat{\Pb}_{q,\mu}^\times(\gamma)\triangleq
\{\{\alpha_1,\ldots,\alpha_{\mu}\}\subset\F_q^*\,|\, ~\alpha_i\ne \alpha_j\text{ if }i\ne j,~\gamma=\prod_{j=1}^{\mu}\alpha_j\},~\gamma\in\F_q^*,~\mu<q-1;\label{eq3:hatPb gamma}\db\\
&\Pb_{q,\mu}^\times(\gamma)\triangleq \#\widehat{\Pb}_{q,\mu}^\times(\gamma). \label{eq3:Pb gamma}
 \end{align}
\end{definition}

\begin{theorem}\label{th3:Bd-2}
 Let $\C$ be a $[q+1,q+2-d,d]_q$ normalized GDRS code of distance $d\ge5$. Let $\V^{(2)}$ be one of its coset of weight $2$ with a leader $\vf_2(j_1,j_2;\gamma_1,\gamma_2)$.  Then the number $B_{d-2}(\V^{(2)})$ of vectors of weight $d-2$ in $\V^{(2)}$ does not depend on the positions  $j_1,j_2$ and is equal to $(d-2)^\times$-peculiarity of the element $-\gamma_2/\gamma_1$ of $\F_q^*$. In other words,
  \begin{equation}\label{eq3:Bd-2=Gamma}
  B_{d-2}(\V^{(2)})=\Pb^\times_{q,d-2}(-\gamma_2/\gamma_1).
 \end{equation}
 \end{theorem}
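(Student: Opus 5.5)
The plan is to use Lemma~\ref{lem3:Bd-2=crit}: it suffices to count the critical codewords $\cf_d^*(\vf_2)=\cf_d(j_1,j_2,\ldots,j_d;-\gamma_1,-\gamma_2)$. First I would record the standard MDS fact. Any $d-1$ columns of $\mathbf{H}_d$ are linearly independent. Hence for every $d$-subset $\{j_1,\ldots,j_d\}$ of positions, the codewords with support exactly this set (together with $\mathbf{0}$) form a one-dimensional space. Every such codeword then has all $d$ coordinates non-zero, and the ratio $c_{j_2}/c_{j_1}$ is the same for all of them. Consequently a critical codeword with support $\{j_1,j_2\}\cup S$, $\#S=d-2$, exists if and only if this ratio equals $(-\gamma_2)/(-\gamma_1)=\gamma_2/\gamma_1$, and in that case it is unique. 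Thus $B_{d-2}(\V^{(2)})$ equals the number of $(d-2)$-subsets $S$ of the remaining $q-1$ positions satisfying $c_{j_2}/c_{j_1}=\gamma_2/\gamma_1$.

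Next I would compute the ratio explicitly via Lagrange interpolation / Vandermonde duality. Index the first $q$ columns by $t=m_i\in\F_q$ and the last column by $\infty$. If all $d$ positions are finite, with points $t_1,\ldots,t_d$, the dependency is $c_i=1/\prod_{k\ne i}(t_i-t_k)$. The reason is that $\sum_i t_i^m/\prod_{k\ne i}(t_i-t_k)=0$ for $0\le m\le d-2$, which is exactly the vanishing of all $d-1$ rows. If one position is $\infty$, the finite $d-1$ points get $c_i=1/\prod_{k\ne i}(t_i-t_k)$, taken over the finite points only. This kills rows of degree $\le d-3$, and the $\infty$ coefficient is fixed by the last row. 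In the all-finite case this gives
\[
\frac{c_{j_2}}{c_{j_1}}=\frac{\prod_{k\ne 1}(t_1-t_k)}{\prod_{k\ne 2}(t_2-t_k)}=-\prod_{k\in S}\frac{t_k-t_1}{t_k-t_2}.
\]
Here the minus sign comes from the factors $(t_1-t_2)/(t_2-t_1)$, and the $d-2$ sign flips cancel in the quotient.

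The key step is then a bijection. For $j_1,j_2$ finite, put $u(x)=(x-t_1)/(x-t_2)$, extended by $u(\infty)=1$. This is a Möbius map, so it sends the $q-1$ points of $\PG(1,q)$ other than $t_1,t_2$ bijectively onto $\F_q^*$. The condition becomes $\prod_{k\in S}u(t_k)=-\gamma_2/\gamma_1$, so the $(d-2)$-subsets $S$ correspond bijectively to elements of $\widehat{\Pb}^\times_{q,d-2}(-\gamma_2/\gamma_1)$. This gives \eqref{eq3:Bd-2=Gamma} and shows independence of $j_1,j_2$. The remaining cases, where $\infty\in S$, or $j_2=\infty$, or $j_1=\infty$, must be checked separately with the modified coefficients. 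For $\infty\in S$ I expect the $\infty$ factor to be exactly $1=u(\infty)$. For $j_2=\infty$ the ratio should become $\pm\prod_{k\in S}(t_k-t_1)$ up to the normalization coming from the last row, and the map $x\mapsto x-t_1$ is the bijection onto $\F_q^*$; the case $j_1=\infty$ is symmetric via $x\mapsto 1/(x-t_2)$.

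The main obstacle is this bookkeeping of signs and of the $\infty$ column. The $\infty$ coefficient requires evaluating the last row, $\sum_i t_i^{d-2}c_i=$ (sum of the finite points' contributions) $=1$ for the Lagrange weights of $d-1$ points, up to sign. I would verify on these special cases that the sign always produces exactly $-\gamma_2/\gamma_1$. A cleaner alternative, which I would try if the computation becomes messy, is to invoke the $\mathrm{PGL}(2,q)$-action on the normal rational curve. This action maps the normalized code to a GDRS code with modified $v_i$, which leaves the coset weight distributions unchanged, so one can move $j_1,j_2$ to two finite positions.
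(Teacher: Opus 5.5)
Your proposal is correct and is essentially the paper's proof. Your Lagrange weights are the Cramer's-rule solutions of the paper's Vandermonde systems, your map $u(x)=(x-t_1)/(x-t_2)$ is exactly the paper's substitution $\widehat{a}_j$, and your observation $u(\infty)=1$ is how the paper merges its subcases (2.1) and (2.2) into the single count $\Pb^\times_{q,d-2}(-\gamma_2/\gamma_1)$.

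The sign you left open closes at once: the last row forces $c_\infty=-\sum_i t_i^{d-2}c_i=-1$, which gives the condition $\prod_{k\in S}(t_2-t_k)^{-1}=-\gamma_2/\gamma_1$ when $j_1=\infty$ (the paper's case (1)) and $\prod_{k\in S}(t_1-t_k)=-\gamma_2/\gamma_1$ when $j_2=\infty$. So the $\mathrm{PGL}(2,q)$ fallback is unnecessary; as you phrased it, it also omits that the monomial equivalence rescales $\gamma_1,\gamma_2$ by column multipliers, which here happen to be $(d-2)$-th powers and hence harmless.
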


\begin{proof}
We find the number of the critical codewords $\cf_d^*(\vf_2(j_1,j_2;\gamma_1,\gamma_2))$ for the coset leaders $\vf_2(j_1,j_2;\gamma_1,\gamma_2)$ with distinct places $j_1,j_2$ and values of $\gamma_1,\gamma_2$. By Lemma \ref{lem3:Bd-2=crit}, this number is $B_{d-2}(\V^{(2)})$ in the corresponding cosets.

In a codeword and in a $(q+1)$-vector of a code coset, the $j$-th position with $1\le j\le q$ corresponds to the column $[1,m_{j},m_{j}^2,\ldots,m_{j}^{d-2}]^{tr}$ of the matrix $\mathbf{H}_d$ of \eqref{eq2:HDRS}, whereas the $(q+1)$-st position corresponds to the column $[0,0,\ldots,0,1]^{tr}$ of $\mathbf{H}_d$.

If $\cf_d$ is a codeword and $\alpha\in\F_q^*$ then $\alpha\cf_d$ is a codeword also. Therefore, for counting of the number of the critical codewords $\cf_d^*(\vf_2)$ one may take an arbitrary value of $\gamma_1$, for instance $\gamma_1=1$. In this case, the number of the critical codewords $\cf_d^*(\vf_2)$ depends on $\gamma_2$. In other words, instead of a leader $\vf_2(j_1,j_2;\gamma_1,\gamma_2)$ we may consider $\vf_2(j_1,j_2;1,\gamma_2/\gamma_1)=\vf_2(j_1,j_2;1,\gamma)$ where $\gamma\in\F_q^*$.

We assume that in all formulas in the proof, we have
\begin{equation}\label{eq3:neq}
 a_i\in\F_q,~a_i\ne a_j\text{ if }i\ne j.
\end{equation}

To find all critical codewords and to count the number of them, we consider convenient systems of $d-1$ linear equations over $\F_q$. Solutions of the systems give the needed critical codewords.

The leaders $\vf_2(j_1,j_2;1,\gamma)$, $\gamma\in\F_q^*$, can be partitioned into two sets with $j_1=q+1$, $1\le j_2\le q$,
for the 1-st set, and  $1\le j_1,j_2\le q$,
 for the 2-nd one.

\textbf{(1)} Let a coset leader have the form $\vf_2(j_1,j_2;1,\gamma)$ with $j_1=q+1$, $1\le j_2\le q$, $\gamma\in\F_q^*$.

We consider the following system of $d-1$ linear equations:
\begin{align}
&\mathbf{A}\times[x_1,x_2,\ldots,x_{d-1}]^{tr}=[0,\ldots,0,1]^{tr},\db\notag\\
&\mathbf{A}=\left[ \begin{array}{cccccc}
        1     &|&1     &1     &\ldots &1      \\
        a_1   &|&a_2   &a_3   &\ldots &a_{d-1}\smallskip    \\
        a_1^2 &|&a_2^2 &a_3^2 & \ldots&a_{d-1}^2\smallskip  \\
        \ldots&|&\ldots&\ldots&\ldots&\ldots\\
        a_1^{d-2} &|&a_2^{d-2} &a_3^{d-2} &\ldots &a_{d-1}^{d-2}  \\
       \end{array}\right],\label{eq3:syst_matrA}
\end{align}
where $\mathbf{A}$ is a submatrix of the matrix $\mathbf{H}_d$ of \eqref{eq2:HDRS} such that $a_1=m_{j_2}$.

The columns $[0,\ldots,0,1]^{tr}$ and $[1,a_1,a_1^2,\ldots,a_1^{d-2}]^{tr}$ correspond, respectively, to the positions $j_1=q+1$ and $j_2$ of the coset leader $\vf_2(j_1,j_2;1,\gamma)$.

As $\mathbf{A}$ is a Vandermond matrix, by Cramer rule, we have
\begin{equation}\label{eq3:x1}
x_1=\frac{(-1)^{d-1+1}\prod\limits_{2\le i<j\le d-1}(a_j-a_i)}{\prod\limits_{1\le i<j\le d-1}(a_j-a_i)}=
(-1)^{d}\prod_{j=2}^{d-1}\frac{1}{a_j -a_1}=\prod_{j=2}^{d-1}\frac{1}{a_1-a_j}.
\end{equation}
We fix $a_1=m_{j_2}$ and put that $a_2,\ldots,a_{d-1}$ are chosen so that
 $x_1=-\gamma$. Let $a_u=m_{j_{u+1}}$, $u=2,3,\ldots,d-1$, where $j_{u+1}$ are non-zero positions of a critical codeword.
Then $\cf_d=$\linebreak
 $(q+1,j_2,j_3,\ldots,j_d;-1,-\gamma)$ is the critical codeword for the leader $\vf_2(j_1,j_2;1,\gamma)$.

Let $M_q(a_1,\gamma)$ be the number of distinct $(d-2)$-sets $\{a_2,a_3,\ldots,a_{d-1}\}$ satisfying~\eqref{eq3:x1} with  the fixed $a_1=m_{j_2}$, under the conditions $x_1=-\gamma$ and \eqref{eq3:neq}.

We denote
 $\widetilde{a}_j=(a_1-a_{j+1})^{-1},~j=1,\ldots,d-2$,
that implies, by \eqref{eq3:x1},
\begin{equation}\label{eq3:x1transf}
-\gamma=\prod_{j=1}^{d-2}\widetilde{a}_j,~\gamma, \widetilde{a}_j\in\F_q^*,~\widetilde{a}_u\ne \widetilde{a}_v\text{ if }u\ne v.
\end{equation}
We denote by $\widetilde{M}_q(\gamma)$ the number of distinct $(d-2)$-sets $\{\widetilde{a}_1,\widetilde{a}_2,\ldots,\widetilde{a}_{d-2}\}$ satisfying \eqref{eq3:x1transf}. By above, $\widetilde{M}_q(\gamma)=M_q(a_1,\gamma)$. So, $M_q(a_1,\gamma)$ does not depend on~$a_1$. By \eqref{eq3:hatPb gamma},
\eqref{eq3:Pb gamma}, \eqref{eq3:x1transf},
\begin{equation*}
\widetilde{M}_q(\gamma)=\Pb^\times_{q,d-2}(-\gamma).
\end{equation*}

\textbf{(2)} Let a coset leader has the form $\vf_2(j_1,j_2;1,\gamma)$, $1\le j_1,j_2\le q$, $\gamma\in\F_q^*$.

The critical codewords $\cf_d^*(\vf_2)$  of the leader can be partitioned into two sets of the following forms:
 $\cf_d(j_1,j_2,\ldots,j_d)$, $j_i\le q$, $i=1,2,\ldots,d$,
for the 1-st set, and\\
 $\cf_d(j_1,j_2,\ldots,j_{d-1},j_d)$, $j_i\le q$, $i=1,\ldots,d-1$, $j_d=q+1$, for the 2-nd one.

\textbf{(2.1)} Let $\cf_d^*(\vf_2)$ have the form $\cf_d(j_1,j_2,\ldots,j_d)$ with $j_i\le q$, $i=1,2,\ldots,d$.

We consider the system of $d-1$ linear equations\\
 $\mathbf{A}\times[y_1,y_2,\ldots,y_{d-1}]^{tr}=[1,a_0,a_0^2,\ldots,a_0^{d-2}]^{tr},$
where $\mathbf{A}$  is as in \eqref{eq3:syst_matrA} with $a_1=m_{j_2}$.

We put $a_0=m_{j_1}$. Then the columns $[1,a_0,a_0^2,\ldots,a_0^{d-2}]^{tr}$ and $[1,a_1,a_1^2,\ldots,a_1^{d-2}]^{tr}$ correspond, respectively, to the positions $j_1$ and $j_2$ of the coset leader.

 By Cramer rule, we have
\begin{equation}\label{eq3:y1}
y_1=\frac{\prod\limits_{0\le i<j\le d-1,\,i,j\ne1}(a_j-a_i)}{\prod\limits_{1\le i<j\le d-1}(a_j-a_i)}=
\prod_{j=2}^{d-1}\frac{a_j-a_0}{a_j-a_1}=\prod_{j=2}^{d-1}\left(1+\frac{a_1-a_0}{a_{j}-a_1}\right).
\end{equation}
We fix $a_0=m_{j_1}$, $a_1=m_{j_2}$ and put that $a_2,\ldots,a_{d-1}$ are chosen so that
 $y_1=-\gamma$. Let $a_u=m_{j_{u+1}}$, $u=2.\ldots,d-1$, where $j_{u+1}$ are non-zero positions of a critical codeword.
Then $\cf_d=(j_1,j_2,\ldots,j_d;-1,-\gamma)$ is the critical codeword for the leader $\vf_2(j_1,j_2;1,\gamma)$.

Let $T_q(a_0,a_1,\gamma)$ be the number of distinct $(d-2)$-sets $\{a_2,a_3,\ldots,a_{d-1}\}$ satisfying \eqref{eq3:y1} with the fixed $a_0=m_{j_1}$, $a_1=m_{j_2}$, under the conditions $y_1=-\gamma$ and \eqref{eq3:neq}.

We denote
\begin{equation} \label{eq3:widehat_notat}
\widehat{a}_j=1+\frac{a_1-a_0}{a_{j+1}-a_1},~j=1,\ldots,d-2.
\end{equation}
By  \eqref{eq3:y1}, \eqref{eq3:widehat_notat}, we have
\begin{equation}\label{eq3:y1transf}
-\gamma=\prod_{j=1}^{d-2}\widehat{a}_j,
\gamma\in\F_q^*,~ \widehat{a}_j\in\F_q^*\setminus\{1\},~\widehat{a}_u\ne \widehat{a}_v\text{ if }u\ne v.
\end{equation}
Denote by $\widehat{T}_q(\gamma)$ the number of distinct $(d-2)$-sets $\{\widehat{a}_1,\widehat{a}_2,\ldots,\widehat{a}_{d-2}\}$ satisfying~\eqref{eq3:y1transf}.
Obviously, $\widehat{T}_q(\gamma)=T_q(a_0,a_1,\gamma)$. In particular, this means that $T_q(a_0,a_1,\gamma)$ does not depend on $a_0,a_1$.

\textbf{(2.2)} Let $\cf_d^*(\vf_2)$ have the form
 $\cf_d(j_1,\ldots,j_{d-1},j_d)$, $j_i\le q$, $i=1,\ldots,d-1$, $j_d=q+1$.

We consider the system  of $d-1$ linear equations
 \begin{align*}
 &\mathbf{A}'[z_1,z_2,\ldots,z_{d-1}]^{tr}=[1,a_0,a_0^2,\ldots,a_0^{d-2}]^{tr},\db\\
&\mathbf{A}'=\left[ \begin{array}{ccccccc}
        1     &|&1     &1     &\ldots &1            &0\\
        a_1   &|&a_2   &a_3   &\ldots &a_{d-2}      &0\smallskip    \\
         \ldots&|&\ldots&\ldots&\ldots&\ldots        &\ldots\\
        a_1^{d-3} &|&a_2^{d-3} &a_3^{d-3} &\ldots &a_{d-2}^{d-3}&0\smallskip   \\
        a_1^{d-2} &|&a_2^{d-2} &a_3^{d-2} &\ldots &a_{d-2}^{d-2}&1  \\
       \end{array}\right],
 \end{align*}
 where $\mathbf{A}'$ is a submatrix of the matrix $\mathbf{H}_d$ of \eqref{eq2:HDRS} such that $a_1=m_{j_2}$.

The columns $[1,a_0,a_0^2,\ldots,a_0^{d-2}]^{tr}$ and $[1,a_1,a_1^2,\ldots,a_1^{d-2}]^{tr}$, where $a_0=m_{j_1}$ and $a_1=m_{j_2}$, correspond, respectively, to the positions $j_1$ and $j_2$ of the   coset leader $\vf_2(j_1,j_2;1,\gamma)$.

By Cramer rule,
\begin{equation}\label{eq3:z1_2}
z_1=\frac{(-1)^{d-1+d-1}\prod\limits_{0\le i<j\le d-2,\,i,j\ne1}(a_j-a_i)}{(-1)^{d-1+d-1}\prod\limits_{1\le i<j\le d-2}(a_j-a_i)}=\prod\limits_{j=2}^{d-2}\frac{a_j-a_0}{a_j-a_1}=\prod_{j=2}^{d-2}\left(1+\frac{a_1-a_0}{a_{j}-a_1}\right).
\end{equation}
We fix $a_0=m_{j_1}$, $a_1=m_{j_2}$, and put that $a_2,\ldots,a_{d-1}$ are chosen so that
 $z_1=-\gamma$. Let $a_u=m_{j_{u+1}}$, $u=2.\ldots,d-1$, where $j_{u+1}$ are non-zero positions of a critical codeword.
Then $\cf_d=(j_1,j_2,\ldots,j_d;-1,-\gamma;)$ is the critical codeword for the leader $\vf_2(j_1,j_2;1,\gamma)$.

Let $T'_q(a_0,a_1,\gamma)$ be the number of distinct $(d-3)$-sets $\{a_2,a_3,\ldots,a_{d-2}\}$ satisfying~\eqref{eq3:z1_2} with the fixed $a_0=m_{j_1}$, $a_1=m_{j_2}$, under the conditions $z_1=-\gamma$ and \eqref{eq3:neq}. Using the notation \eqref{eq3:widehat_notat}, similarly to \eqref{eq3:y1transf}, we obtain
\begin{equation}
-\gamma=\prod_{j=1}^{d-3}\widehat{a}_j,~\gamma\in\F_q^*,~ \widehat{a}_j\in\F_q^*\setminus\{1\},~\widehat{a}_u\ne \widehat{a}_v\text{ if }u\ne v.\label{eq3:z1transf}
\end{equation}
Denote by $\widehat{T'}_q(\gamma)$ the number of distinct $(d-3)$-sets $\{\widehat{a}_1,\widehat{a}_2,\ldots,\widehat{a}_{d-3}\}$ satisfying~\eqref{eq3:z1transf}.
Obviously, $\widehat{T'}_q(\gamma)=T'_q(a_0,a_1,\gamma)$. In particular, this means that $T'_q(a_0,a_1,\gamma)$ does not depend on $a_0,a_1$.

 By the cases  (2.1) and (2.2), the total number of the critical  codewords $\cf_d^*(\vf_2)$ of the form
 $\cf_d(j_1,\ldots,j_{d-1},j_d)$, $j_i\le q$, $i=1,\ldots,d-1$, $j_d=q+1$, is  $ \widehat{T}_q(\gamma) +\widehat{T'}_q(\gamma) $. The factor $\widehat{a}_{d-2}=1$ does not change the product in \eqref{eq3:z1transf}. Therefore,
 $$ \widehat{T}_q(\gamma) +\widehat{T'}_q(\gamma) =\Pb^\times_{q,d-2}(-\gamma),$$
 see \eqref{eq3:hatPb gamma}, \eqref{eq3:Pb gamma}. It remains to remind $\widetilde{M}_q(\gamma)=\Pb^\times_{q,d-2}(-\gamma)$, see the case (1).
 \end{proof}

Let $\Z_\Rk$ be the ring of integers modulo $\Rk$; see, e.g., \cite{IreRosNumbTheor} for introduction to this topic.

\begin{definition}\label{def3:Lambda}
Let $\mu$ and $\Rk$ be positive integers, $\mu<\Rk$. We introduce the following notion: $\mu$-\emph{summands peculiarity} ($\mu^+$-\emph{peculiarity}, for short) of an element $\lambda$ of the ring $\Z_{\Rk}$ is the number of all possible $\mu$-tuples $\{\lambda_1,\lambda_2,\ldots,\lambda_{\mu}\}$, each of which consists of $\mu$ distinct  elements $\lambda_j$ of $\Z_{\Rk}$ such that their sum in $\Z_{\Rk}$ is equal to $\lambda$. We denote the $\mu^+$-peculiarity of $\lambda\in\Z_{\Rk}$ by $\Ps_{\Rk,\mu}^+(\lambda)$. Also, we denote the set of all the mentioned $\mu$-tuples by $\widehat{\Ps}_{q,\mu}^+(\gamma)$. In other words,
  \begin{align}
&\widehat{\Ps}^+_{\Rk,\mu}(\lambda)\triangleq\left\{\{\lambda_1,\ldots,\lambda_\mu\}\subset\Z_\Rk\,|\,\lambda_i\ne\lambda_j\text{ if }i\ne j, \sum_{j=1}^\mu\lambda_j=\lambda\right\},\lambda\in\Z_{\Rk},\mu<\Rk.\label{eq3:hatPk lambda}\db\\
&\Ps_{\Rk,\mu}^+(\lambda)\triangleq\#\widehat{\Ps}_{q,\mu}^+(\lambda).\label{eq3:Pk lambda}
\end{align}
For $\widehat{\Ps}^+_{\Rk,\mu}(\lambda)$ and $\Ps^+_{\Rk,\mu}(\lambda)$ we use also the notations $\widehat{\Ps}^+_{\Rk,\mu}(\lambda)_{\lambda_1,\ldots,\lambda_\mu}$ and $\Ps^+_{\Rk,\mu}(\lambda)_{\lambda_1,\ldots,\lambda_\mu}$.
\end{definition}

\begin{lemma}\label{lem3:Gamma=Lamb}
Let $\beta_q$ be a primitive element of $\F_q$. We have,
\begin{equation}\label{eq3:Gamma=Lamb}
\Pb^\times_{q,\mu}(\beta_q^\lambda)=\Ps^+_{q-1,\mu}(\lambda),~\lambda\in\{0,1,\ldots,q-2\}.
\end{equation}
\end{lemma}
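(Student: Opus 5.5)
The plan is to transport the whole counting problem through the discrete logarithm. Fix the primitive element $\beta_q$ and consider
\[
\varphi:\Z_{q-1}\to\F_q^*,\qquad \varphi(\lambda)=\beta_q^{\lambda}.
\]
Since $\beta_q$ has multiplicative order exactly $q-1$, the map $\varphi$ is well defined on residues modulo $q-1$. It is a bijection, and it is a group isomorphism from the additive group $(\Z_{q-1},+)$ onto the multiplicative group $(\F_q^*,\cdot)$: we have $\varphi(\lambda+\lambda')=\varphi(\lambda)\varphi(\lambda')$, with the sum taken in $\Z_{q-1}$.

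The key step is to check that $\varphi$ induces a bijection between $\widehat{\Ps}^+_{q-1,\mu}(\lambda)$ and $\widehat{\Pb}^\times_{q,\mu}(\beta_q^\lambda)$, by sending $\{\lambda_1,\ldots,\lambda_\mu\}$ to $\{\beta_q^{\lambda_1},\ldots,\beta_q^{\lambda_\mu}\}$. Three things need to be verified.

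\begin{itemize}
\item \emph{Distinctness is preserved in both directions.} Because $\varphi$ is injective, distinct residues $\lambda_i\ne\lambda_j$ give distinct field elements, and conversely. So a $\mu$-set goes to a $\mu$-set.
\item \emph{The defining condition is preserved.} By the homomorphism property, $\prod_{j}\beta_q^{\lambda_j}=\beta_q^{\sum_j\lambda_j}$. Since $\varphi$ is injective on $\Z_{q-1}$, this product equals $\beta_q^\lambda$ if and only if $\sum_j\lambda_j\equiv\lambda\pmod{q-1}$.
\item \emph{The map is surjective.} Every $\alpha\in\F_q^*$ equals $\beta_q^{l}$ for a unique $l\in\Z_{q-1}$, so every set $\{\alpha_1,\ldots,\alpha_\mu\}\in\widehat{\Pb}^\times_{q,\mu}(\beta_q^\lambda)$ has a preimage. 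Its preimage lies in $\widehat{\Ps}^+_{q-1,\mu}(\lambda)$ by the previous point.
\end{itemize}

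Taking cardinalities then gives \eqref{eq3:Gamma=Lamb}, via \eqref{eq3:Pb gamma} and \eqref{eq3:Pk lambda}. The side conditions also match: the requirement $\mu<q-1$ in Definition~\ref{def3:Gamma} is exactly $\mu<\Rk$ with $\Rk=q-1$ in Definition~\ref{def3:Lambda}.

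There is no real obstacle. The only point needing care is to treat exponents as elements of $\Z_{q-1}$ rather than as integers in $\{0,\ldots,q-2\}$. The integer sum $\sum_j\lambda_j$ may exceed $q-2$, and it is precisely the order of $\beta_q$ being $q-1$ that makes reduction modulo $q-1$ harmless.
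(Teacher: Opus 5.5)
Your proof is correct and takes the same approach as the paper: substitute $\gamma=\beta_q^\lambda$ and $\alpha_j=\beta_q^{\lambda_j}$, and use that exponents of $\beta_q$ are taken modulo $q-1$. You spell out more carefully the bijection between $\widehat{\Ps}^+_{q-1,\mu}(\lambda)$ and $\widehat{\Pb}^\times_{q,\mu}(\beta_q^\lambda)$, covering distinctness, the sum/product condition and surjectivity, all of which the paper leaves implicit.
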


\begin{proof}
We remind that in $\F_q$, the logarithms are calculated modulo $q-1$. In \eqref{eq3:hatPb gamma}, \eqref{eq3:Pb gamma}, we put $\gamma=\beta_q^\lambda$, $\alpha_j=\beta_q^{\lambda_j}$, $\lambda_j\in\{0,1,\ldots,q-2\}$, that implies
 \begin{equation*}
 \beta_q^\lambda=\prod_{j=1}^{\mu}\beta_q^{\lambda_j}, ~\lambda=\left(\sum_{j=1}^{\mu}\lambda_j\right)\pmod{q-1},~\lambda,\lambda_j\in\Z_{q-1},~\lambda_u\ne \lambda_v\text{ if }u\ne v.
 \qedhere
\end{equation*}
\end{proof}

\begin{corollary}\label{cor3:Bd-2}
 Let $\C$ be a $[q+1,q+2-d,d]_q$ normalized GDRS code of distance $d\ge5$. Let $\V^{(2)}$ be one of its coset of weight $2$ with a leader $\vf_2(j_1,j_2;\gamma_1,\gamma_2)$. Then the number $B_{d-2}(\V^{(2)})$ of vectors of weight $d-2$ in $\V^{(2)}$ does not depend on the positions  $j_1,j_2$ and is as follows
  \begin{align}\label{eq3:Bd-2=Lamb}
&B_{d-2}(\V^{(2)})=\Ps^+_{q-1,d-2}(\lambda(\gamma_1,\gamma_2)),~\lambda(\gamma_1,\gamma_2)\in~\Z_{q-1},~
  \beta_q^{\lambda(\gamma_1,\gamma_2)}=-\frac{\gamma_2}{\gamma_1},\db\\
&\beta_q\text{ is a primitive element of }\F_q.\notag
 \end{align}
 \end{corollary}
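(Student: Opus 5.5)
This corollary follows by combining Theorem \ref{th3:Bd-2} with Lemma \ref{lem3:Gamma=Lamb}. No new counting is needed.

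First, apply Theorem \ref{th3:Bd-2} to the normalized code $\C$ and the coset $\V^{(2)}$ with leader $\vf_2(j_1,j_2;\gamma_1,\gamma_2)$. It gives
\[
B_{d-2}(\V^{(2)})=\Pb^\times_{q,d-2}(-\gamma_2/\gamma_1),
\]
and it already asserts that this value does not depend on $j_1,j_2$. This settles the independence claim of the corollary.

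Second, rewrite the field element $-\gamma_2/\gamma_1$ in exponential form. Fix a primitive element $\beta_q$ of $\F_q$. Since $-\gamma_2/\gamma_1\in\F_q^*$ and $\F_q^*=\langle\beta_q\rangle$ is cyclic of order $q-1$, there is a unique $\lambda(\gamma_1,\gamma_2)\in\{0,1,\ldots,q-2\}$, regarded as an element of $\Z_{q-1}$, with $\beta_q^{\lambda(\gamma_1,\gamma_2)}=-\gamma_2/\gamma_1$.

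Third, invoke Lemma \ref{lem3:Gamma=Lamb} with $\mu=d-2$ and $\lambda=\lambda(\gamma_1,\gamma_2)$. This yields
\[
\Pb^\times_{q,d-2}(\beta_q^{\lambda})=\Ps^+_{q-1,d-2}(\lambda),
\]
and substituting into the first step gives \eqref{eq3:Bd-2=Lamb}.

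The only point to check is that the hypotheses $\mu<q-1$ of Definition \ref{def3:Gamma} and $\mu<\Rk$ of Definition \ref{def3:Lambda} hold for $\mu=d-2$ and $\Rk=q-1$. An MDS code of length $q+1$ with distance $d$ has dimension $q+2-d\ge1$, so $d\le q+1$. The Vandermonde setup in the proof of Theorem \ref{th3:Bd-2} already needs $d-1$ distinct elements of $\F_q$. In the degenerate extreme $d-2=q-1$, where the strict inequality fails, I would note that both sides count the same object: the single full set $\F_q^*$ on one side and the single full set $\Z_{q-1}$ on the other. So the identity persists, and with the paper's standing assumptions no real obstacle arises.
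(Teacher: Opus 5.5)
Your proposal is correct and takes the same route as the paper, whose proof is exactly the combination of Theorem~\ref{th3:Bd-2} with Lemma~\ref{lem3:Gamma=Lamb}. Your remark on the boundary case $d-2=q-1$ is a harmless extra: the logarithm bijection $\F_q^*\to\Z_{q-1}$ behind Lemma~\ref{lem3:Gamma=Lamb} matches $\mu$-subsets with a given product to $\mu$-subsets with the corresponding sum for every $\mu$, so the identity holds there as well.
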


 \begin{proof}
   The assertion follows from Theorem \ref{th3:Bd-2} and Lemma \ref{lem3:Gamma=Lamb}.
 \end{proof}

\begin{lemma}\label{lem3:gam&gamdeg-1 lamb&-lamb}
  Let $\Pb^\times_{q,\mu}(\gamma)$ and $\Ps^+_{\Rk,\mu}(\lambda)$ be as in Definitions \ref{def3:Gamma} and \ref{def3:Lambda}. Then
 \begin{equation}\label{eq3:gam&gamdeg-1 lamb&-lamb}
 \Pb^\times_{q,\mu}(\gamma)=\Pb^\times_{q,\mu}(\gamma^{-1}),~\Ps^+_{\Rk,\mu}(\lambda)=\Ps^+_{\Rk,\mu}(-\lambda).
 \end{equation}
\end{lemma}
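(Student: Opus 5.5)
The plan is to exhibit explicit bijections between the relevant sets of $\mu$-tuples and then take cardinalities. For the multiplicative statement, I would use the map $\{\alpha_1,\ldots,\alpha_\mu\}\mapsto\{\alpha_1^{-1},\ldots,\alpha_\mu^{-1}\}$. Inversion is a bijection of $\F_q^*$ onto itself, so it sends a set of $\mu$ distinct elements of $\F_q^*$ to a set of $\mu$ distinct elements of $\F_q^*$. Moreover, $\prod_j\alpha_j^{-1}=\bigl(\prod_j\alpha_j\bigr)^{-1}=\gamma^{-1}$ whenever $\prod_j\alpha_j=\gamma$. Hence the map sends $\widehat{\Pb}_{q,\mu}^\times(\gamma)$ into $\widehat{\Pb}_{q,\mu}^\times(\gamma^{-1})$. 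The same map, applied to $\widehat{\Pb}_{q,\mu}^\times(\gamma^{-1})$, lands back in $\widehat{\Pb}_{q,\mu}^\times(\gamma)$, and the two maps are mutually inverse. So they are bijections, and \eqref{eq3:Pb gamma} gives the first equality in \eqref{eq3:gam&gamdeg-1 lamb&-lamb}.

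For the additive statement I would argue in the same way with negation in $\Z_\Rk$, namely $\{\lambda_1,\ldots,\lambda_\mu\}\mapsto\{-\lambda_1,\ldots,-\lambda_\mu\}$. Negation is an involutive bijection of $\Z_\Rk$, so distinctness is preserved. It also sends the sum $\lambda$ to $-\lambda$. Therefore the map is a bijection from $\widehat{\Ps}^+_{\Rk,\mu}(\lambda)$ onto $\widehat{\Ps}^+_{\Rk,\mu}(-\lambda)$, and \eqref{eq3:Pk lambda} gives the second equality.

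As a consistency check, the two parts correspond under Lemma \ref{lem3:Gamma=Lamb} when $\Rk=q-1$, since $\gamma=\beta_q^\lambda$ gives $\gamma^{-1}=\beta_q^{-\lambda}$. However, the additive part must be proved directly, because $\Rk$ is arbitrary and need not be of the form $q-1$.

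I see no real obstacle here. The only point needing care is to confirm that each map preserves the distinctness condition and that the sets are unordered, so that both sides count the same kind of object. Both points hold because the underlying maps are injective on $\F_q^*$ and on $\Z_\Rk$.
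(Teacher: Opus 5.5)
Your proof is correct and is essentially the paper's own argument: the paper only says that both identities follow from the definitions \eqref{eq3:hatPb gamma}, \eqref{eq3:Pb gamma}, \eqref{eq3:hatPk lambda}, \eqref{eq3:Pk lambda}, and your inversion and negation bijections are exactly what that remark amounts to. The paper also cites \eqref{eq3:Bd-2=Gamma} and \eqref{eq3:Bd-2=Lamb}. Read coding-theoretically, writing the same leader as $\vf_2(j_2,j_1;\gamma_2,\gamma_1)$ gives the symmetry $\gamma\leftrightarrow\gamma^{-1}$, but only for $\mu=d-2$ and $\Rk=q-1$; your direct bijection is the version that proves the lemma in full generality.
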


\begin{proof}
  The assertions follow from \eqref{eq3:hatPb gamma}--\eqref{eq3:Bd-2=Gamma}, \eqref{eq3:hatPk lambda}--\eqref{eq3:Bd-2=Lamb}.
\end{proof}

\begin{problem}\label{openprobAx}
  (\textbf{Open Problem $A_{q,\mu}^\times$ for Galois fields $\F_q$}) Let $\gamma$ be an element of the multiplicative group $\F_q^*$ of the field $\F_q$. Let the $\mu^\times$-peculiarity $\Pb^\times_{q,\mu}(\gamma)$ of $\gamma$ be as in \eqref{eq3:hatPb gamma}, \eqref{eq3:Pb gamma} of Definition \ref{def3:Gamma}. The problem is to find values of $\Pb^\times_{q,\mu}(\gamma)$ for all possible triples $q,\mu,\gamma$.
\end{problem}

\begin{problem}\label{openprobA+}
  (\textbf{Open Problem $A_{\Rk,\mu}^+$ for rings $\Z_\Rk$ of integers modulo $\Rk$}) Let $\lambda$ be an element of the ring $\Z_\Rk$ of integers modulo $\Rk$. Let the $\mu^+$-peculiarity $\Ps^+_{\Rk,\mu}(\lambda)$ of $\lambda$ be as in \eqref{eq3:hatPk lambda} \eqref{eq3:Pk lambda} of Definition \ref{def3:Lambda}. The problem is to find values of $\Ps^+_{\Rk,\mu}(\lambda)$ for all possible triples $\Rk,\mu,\lambda$ or, at least, when in the triple, $\Rk+1$ is a prime power.
\end{problem}

\section{Partial solution of Problem $A_{\Rk,\mu}^+$ for the rings $\Z_\Rk$  of integers modulo $\Rk$; orbits of elements $\lambda$ of the rings and values of $\Ps^+_{\Rk,\mu}(\lambda)$}
\label{sec4:orbits}
 We use Definition \ref{def3:Lambda} where $\Z_{\Rk}$ is the ring of integers modulo $\Rk$. Let $\Z_\Rk^*$ be the multiplicative group of integers modulo $\Rk$ which is the group of units of the ring $\Z_\Rk$; it consists of integers coprime with $\Rk$.
We have $\#\Z_\Rk^*=\phi(\Rk)$, where $\phi(\Rk)$ is Euler's totient function.
Let $(a,b)$ be the greatest common divisor of the integers $a$ and $b$.
The results of this section hold for any $\Rk$, including cases when $\Rk+1$ is not a prime power.

\subsection{Maps and orbits in rings $\Z_{\Rk}$}\label{subsec41}

\begin{lemma}\label{lem4:biject}
There are the following bijections  between the sets $\widehat{\Ps}^+_{\Rk,\mu}(\lambda)$:
\begin{align}
 &\widehat{\Ps}^+_{\Rk,\mu}(\lambda)_{\lambda_1,\ldots,\lambda_\mu}\Leftrightarrow
\widehat{\Ps}^+_{\Rk,\mu}(\lambda\ell+u\mu)_{\lambda_1\ell+u,\ldots,\lambda_\mu\ell+u},~\lambda,\lambda_j\in\Z_{\Rk},~\ell\in \Z_\Rk^*,
\label{eq4:biject}\db\\
 &u\in\{ 0,1, \dots, \Rk-1\}.\notag
\end{align}
\end{lemma}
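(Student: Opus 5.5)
The plan is to show that the affine map $\varphi_{\ell,u}:\Z_\Rk\to\Z_\Rk$, $x\mapsto x\ell+u$, with $\ell\in\Z_\Rk^*$, induces the claimed bijection
\[
\widehat{\Ps}^+_{\Rk,\mu}(\lambda)\to\widehat{\Ps}^+_{\Rk,\mu}(\lambda\ell+u\mu).
\]
The map sends $\{\lambda_1,\ldots,\lambda_\mu\}$ to $\{\lambda_1\ell+u,\ldots,\lambda_\mu\ell+u\}$. Three facts need checking, in order: $\varphi_{\ell,u}$ is a permutation of $\Z_\Rk$, it preserves distinctness and cardinality of $\mu$-sets, and it transforms sums correctly.

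First, $\varphi_{\ell,u}$ is a bijection of $\Z_\Rk$. Since $(\ell,\Rk)=1$, there is an inverse $\ell^{-1}\in\Z_\Rk^*$, so the explicit inverse is $y\mapsto (y-u)\ell^{-1}$. Consequently, if $\lambda_i\ne\lambda_j$ then $\lambda_i\ell+u\ne\lambda_j\ell+u$. Thus the image of a $\mu$-subset with distinct elements is again a $\mu$-subset with distinct elements. This is the only place where the invertibility of $\ell$ is used; for non-units, distinct elements could collide.

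Second, summing over the image tuple in $\Z_\Rk$ gives
\[
\sum_{j=1}^{\mu}(\lambda_j\ell+u)=\ell\sum_{j=1}^{\mu}\lambda_j+u\mu=\lambda\ell+u\mu .
\]
So every element of $\widehat{\Ps}^+_{\Rk,\mu}(\lambda)$ is mapped into $\widehat{\Ps}^+_{\Rk,\mu}(\lambda\ell+u\mu)$. Injectivity on $\mu$-sets follows from injectivity of $\varphi_{\ell,u}$ on elements.

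Finally, surjectivity comes from applying the same reasoning to the inverse affine map $\varphi_{\ell^{-1},-u\ell^{-1}}$. It sends a $\mu$-set with sum $\lambda\ell+u\mu$ to a $\mu$-set with sum
\[
(\lambda\ell+u\mu)\ell^{-1}-u\ell^{-1}\mu=\lambda .
\]
The two induced set maps are mutually inverse, which gives the bijection and hence $\Ps^+_{\Rk,\mu}(\lambda)=\Ps^+_{\Rk,\mu}(\lambda\ell+u\mu)$. I expect no real obstacle; the argument is routine bookkeeping.
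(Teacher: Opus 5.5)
Your proof is correct and follows the paper's approach: the paper also checks that $\sum_j(\lambda_j\ell+u)=\lambda\ell+u\mu$ and that $\ell\in\Z_\Rk^*$ keeps the images distinct. You additionally make surjectivity explicit through the inverse affine map $y\mapsto (y-u)\ell^{-1}$, which the paper leaves implicit.
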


\begin{proof}
 As $\sum_{j=1}^\mu\lambda_j=\lambda$, we have $\sum_{j=1}^\mu(\lambda_j\ell+u)=\lambda\ell +u\mu$. Moreover, as $\ell$ is coprime with $\Rk$ and $\lambda_i\ne\lambda_j$ if $i\ne j$,  we have $\lambda_i\ell+u\ne\lambda_j\ell+u\text{ if }i\ne j$.
\end{proof}

We denote by $G_{\Rk,\mu}$ the group of mappings in $\Z_\Rk$ of the form
\begin{equation}\label{eq4:G_mu}
\varphi\in G_{\Rk,\mu}\Rightarrow\lambda\varphi =\lambda\ell+u\mu,~\lambda\in\Z_\Rk,~\ell\in\Z_\Rk^*,~u\in\{ 0,1, \dots, \Rk-1\},~\mu<\Rk.
\end{equation}
The mappings with $\ell=1$ form a subgroup $G_{\Rk,\mu}^\oplus$ of $G_{\Rk,\mu}$.

\begin{definition}\label{def4:orbitsOc&Oplus}
  A subset $\Oc$ of $\Z_\Rk$ is called an \emph{orbit} under $G_{\Rk,\mu}$ if for any $\lambda_1,\lambda_2\in\Oc\subseteq\Z_\Rk$ there is a map $\varphi\in G_{\Rk,\mu}$ such that $\lambda_1\varphi=\lambda_2$. Orbits $\Os^\oplus$ under $G_{\Rk,\mu}^\oplus$ are defined similarly. The notations of the orbit can have a subscript if it is needed by the context. It is possible that an orbit under $G_{\Rk,\mu}^\oplus$ coincides with one under $G_{\Rk,\mu}$.
\end{definition}

We denote
\begin{equation}\label{eq4:DRkmu}
 \Dk_{\Rk,\mu}\triangleq(\Rk,\mu)\in\{1,2,\ldots,\mu\},~\mu<\Rk,~\Phi_\mu\triangleq\frac{\mu}{\Dk_{\Rk,\mu}},
 ~\Phi_{\Rk}\triangleq\frac{\Rk}{\Dk_{\Rk,\mu}}.
\end{equation}
From \eqref{eq4:DRkmu} it follows that
\begin{equation}\label{eq4:DRkmu2}
\Phi_\mu<\Phi_{\Rk},~\Phi_\mu\ge1,~\Phi_{\Rk}\ge2,~(\Phi_\mu,\Phi_\Rk)=1.
\end{equation}

\begin{theorem}\label{th4:orbits}
 For the orbits under $G_{\Rk,\mu}^\oplus$ and $G_{\Rk,\mu}$, the following hold (all calculations are executed modulo $\Rk$):
  \begin{description}
    \item[(i)] The subgroup $G_{\Rk,\mu}^\oplus$ of $G_{\Rk,\mu}$ partitions $\Z_\Rk$ into $\Dk_{\Rk,\mu}$ orbits
 $\Os_j^\oplus$ of the form
    \begin{align}\label{eq4:orbitsRkj}
   &  \Os^\oplus_j=\{j,j+\Dk_{\Rk,\mu},j+2\Dk_{\Rk,\mu},\ldots,j+(\Phi_{\Rk}-1)\Dk_{\Rk,\mu}\}\subseteq\Z_\Rk,
   ~\#\Os^\oplus_j=\Phi_\Rk;\db\\
   &\Os^\oplus_j=\{\lambda\,|\,\lambda\in\Z_\Rk,\,\lambda\equiv j\pmod{\Dk_{\Rk,\mu}}\};~j=0,1,\ldots,\Dk_{\Rk,\mu}-1; ~\bigcup_{j=0}^{\Dk_{\Rk,\mu}-1} \Os^\oplus_j=\Z_\Rk.\notag
    \end{align}
    \item[(ii)]
    For any value of $\mu,\Rk$, and $\Dk_{\Rk,\mu}$, the orbit $\Os^\oplus_0$ is also an orbit under $G_{\Rk,\mu}$, say $\Oc_0$.
    In other words, $\Oc_0=\Os^\oplus_0$.

    \item[(iii)]
    Let there exist $\ell\ne1$, $\ell\in\Z_\Rk^*$, such that $(j_1+u_1\Dk_{\Rk,\mu})\ell=j_2+u_2\Dk_{\Rk,\mu}$. Then and only then, the orbits  $\Os^\oplus_{j_1}$ and $\Os^\oplus_{j_2}$ are subsets of the same orbit under $G_{\Rk,\mu}$.

    \item[(iv)]
    Let $\mu$ be coprime with $\Rk$, i.e.  $\Dk_{\Rk,\mu}=1$ and $\Rk=\Phi_\Rk$. Then $\Oc_0=\Os^\oplus_0=\Z_\Rk$ and $\Oc_0=\Os^\oplus_0$ is the unique orbit.

    \item[(v)]
    Let $\mu$ be not coprime with $\Rk$, i.e.  $\Dk_{\Rk,\mu}\in\{2,3,\ldots,\mu\}$. Then, under $G_{\Rk,\mu}$, there is at least one orbit in addition to   $\Oc_0=\Os^\oplus_0$.

    \item[(vi)]
   Let $\mu$ be not coprime with $\Rk$ and let $\Dk_{\Rk,\mu}=\mu$ be a prime, i.e. $\Rk\equiv0\pmod\mu$, $\Rk=\Phi_\Rk\cdot\mu$, $\Phi_\mu=1$. Then, under $G_{\Rk,\mu}$, there are exactly two orbits $\Oc_0=\Os^\oplus_0$ and
    $\Oc_1=\Z_\Rk\setminus\Oc_0$ such that
\begin{align}\label{eq4:D=muPrimeO0}
&\Oc_0=\Os^\oplus_0=\{0,\mu,\ldots,(\Phi_\Rk-1)\mu\}=
\{\lambda\,|\,\lambda\in\Z_\Rk,\,\lambda\equiv 0\pmod{\Dk_{\Rk,\mu}}\};\db\\
&\Oc_1=\Z_\Rk\setminus\Oc_0=\Z_\Rk\setminus\Os^\oplus_0=\bigcup_{j=1}^{\mu-1}\Os_j^\oplus
=\{\lambda\,|\,\lambda\in\Z_\Rk,\,\lambda\not\equiv 0\pmod{\Dk_{\Rk,\mu}}\}.\label{eq4:D=muPrimeO1}
\end{align}

    \item[(vii)] Let $\Dk_{\Rk,\mu}\in\{2,3,\ldots,\mu\}$ be a prime. Then, under $G_{\Rk,\mu}$, there are exactly two orbits $\Oc_0=\Os^\oplus_0$ and
    $\Oc_1=\Z_\Rk\setminus\Oc_0$, see \eqref{eq4:D=muPrimeO0}, \eqref{eq4:D=muPrimeO1}.
  \end{description}
\end{theorem}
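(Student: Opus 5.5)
The whole argument rests on a structural description of the maps in $G_{\Rk,\mu}$, which we then check item by item. Write $\Dk=\Dk_{\Rk,\mu}$.

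First we compute the translation part. The set $\{u\mu \bmod \Rk : u\in\Z\}$ is the additive subgroup of $\Z_\Rk$ generated by $\mu$. This subgroup equals the subgroup generated by $(\Rk,\mu)=\Dk$, i.e. $\{0,\Dk,2\Dk,\ldots,(\Phi_\Rk-1)\Dk\}$, which has $\Phi_\Rk$ elements. Hence every $\varphi\in G_{\Rk,\mu}$ acts as $\lambda\mapsto\lambda\ell+t$ with $\ell\in\Z_\Rk^*$ and $t\in\Dk\Z_\Rk$, and every such pair $(\ell,t)$ occurs. This set of maps is closed under composition and inverses, so $G_{\Rk,\mu}$ is indeed a group and orbits are equivalence classes.

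\textbf{Item (i).} For $G^\oplus_{\Rk,\mu}$ the maps are the translations by $\Dk\Z_\Rk$. The orbits are therefore the cosets $j+\Dk\Z_\Rk$ for $j=0,\ldots,\Dk-1$. Since $\Dk\mid\Rk$, these are exactly the residue classes modulo $\Dk$, each of size $\Phi_\Rk$, which gives \eqref{eq4:orbitsRkj}.

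\textbf{Item (ii).} For $\lambda\in\Os^\oplus_0$ we have $\Dk\mid\lambda$. Because $\Dk\mid\Rk$, reduction modulo $\Dk$ is well defined on $\Z_\Rk$, and then $\Dk\mid\lambda\ell+t$. So $\Os^\oplus_0$ is $G_{\Rk,\mu}$-invariant, and as it is a single $G^\oplus$-orbit it is a $G_{\Rk,\mu}$-orbit.

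\textbf{Item (iii).} In general, the map $\lambda\mapsto\lambda\ell+t$ sends the class of $j$ modulo $\Dk$ to the class of $j\ell$ modulo $\Dk$. Consequently $\Os^\oplus_{j_1}$ and $\Os^\oplus_{j_2}$ lie in the same $G_{\Rk,\mu}$-orbit if and only if some $\ell\in\Z_\Rk^*$ maps an element of one to an element of the other. This is the statement (the case $\ell=1$ corresponds to $j_1=j_2$).

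\textbf{Items (iv) and (v).} If $\Dk=1$, item (i) gives a single orbit $\Z_\Rk$, which is (iv). If $\Dk\ge2$, then $\Os^\oplus_0$ is a proper invariant subset of $\Z_\Rk$ by (ii), so there is at least one further orbit, which is (v).

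\textbf{Items (vi) and (vii).} Item (vi) is the special case $\Dk=\mu$ of (vii), so it suffices to prove (vii). Let $\Dk=p$ be prime. By (ii), $\Oc_0$ is one orbit, and it remains to show that all the classes $\Os^\oplus_j$, $1\le j\le p-1$, merge into a single orbit. By (iii) and the reduction above, it suffices to find, for each $j\in\{1,\ldots,p-1\}$, some $\ell\in\Z_\Rk^*$ with $\ell\equiv j\pmod p$. Here we need surjectivity of the reduction $\Z_\Rk^*\to\Z_p^*$, which holds because $p\mid\Rk$ and is the standard surjectivity of $(\Z/\Rk)^*\to(\Z/p)^*$ (via the Chinese remainder theorem). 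To make this concrete: write $\Rk=p^a\Rk'$ with $(\Rk',p)=1$, and use CRT to choose $\ell\equiv j\pmod{p^a}$ and $\ell\equiv1\pmod{\Rk'}$. Then $\ell$ is a unit modulo $\Rk$ and $\ell\equiv j\pmod p$. Every nonzero class modulo $p$ is therefore reached from the class of $1$, so $\Oc_1=\Z_\Rk\setminus\Oc_0$ is exactly one orbit.

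The main obstacle I expect is this surjectivity step, where the CRT lifting must be handled carefully when $p^2\mid\Rk$. Everything else is bookkeeping with residues modulo $\Dk$.
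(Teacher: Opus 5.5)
Your proof is correct. For items (i)--(v) it follows the paper's line. Your description of $G_{\Rk,\mu}$ as the maps $\lambda\mapsto\lambda\ell+t$ with $t\in\Dk\Z_{\Rk}$ is the same Bezout fact the paper appeals to through $(\Phi_\mu,\Phi_\Rk)=1$, just stated explicitly.

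The genuine difference is in (vi)/(vii). The paper proves (vi) by fixing one unit $\ell\ne1$ and asserting that, because $\mu$ is prime, its residue $\ell'$ is a primitive element of $\F_{\mu}$. Repeated multiplication by $\ell$ would then sweep out all nonzero classes. Item (vii) is then only cited from (iii) and (v).

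That assertion is false in general. In $\Z_{14}$ with $\mu=7$, the unit $\ell=9$ reduces to $2$, which has order $3$ modulo $7$. The paper also never shows that units of $\Z_{\Rk}$ with the needed residues exist. The conclusion is still true, but the paper's argument does not establish it.

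Your route avoids primitive roots altogether. You show the reduction $\Z_{\Rk}^*\to\Z_p^*$ is surjective via the lift $\ell\equiv j\pmod{p^a}$, $\ell\equiv1\pmod{\Rk'}$, and then send the class of $1$ directly to the class of $j$. This proves (vii) in full, gives (vi) as the special case $\Dk=\mu$, and is exactly the ingredient needed to repair the paper's argument.

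The case $p^2\mid\Rk$ that you flagged is not an obstacle, because you lift modulo the full power $p^a$.

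The one thing you pass over is the degenerate case $j_1=j_2$ in (iii), where the requirement $\ell\ne1$ in the statement is an artifact. Reading (iii) as a statement about $j_1\ne j_2$ is the sensible interpretation.
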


\begin{proof}
   \begin{description}
    \item[(i)] The assertion follows from \eqref{eq4:DRkmu}, \eqref{eq4:DRkmu2}. The relation $(\Phi_\mu,\Phi_{\Rk})=1$ from \eqref{eq4:DRkmu2} is important for the proof.

    \item[(ii)] The map $\lambda\rightarrow\lambda\ell+u\mu$ maps an orbit $\Os_0^\oplus$ to itself. Thus, $\Oc_0=\Os^\oplus_0$.

    \item[(iii)] The assertion follow from the definitions of the groups $\Z_\Rk^*$, $G_{\Rk,\mu}$,  $G_{\Rk,\mu}^\oplus$, and the orbits $\Oc$, $\Os^\oplus$, see \eqref{eq4:G_mu}, Definition \ref{def4:orbitsOc&Oplus}.

    \item[(iv)] As $\mu$ is coprime with $\Rk$, for any fixed $\lambda\in\Z_\Rk$ we have $\Z_\Rk=\{\lambda+u\mu\,|\,~u=0,1,\ldots,\Rk-1\}$, cf. \eqref{eq4:biject}. In other words, $\mu$ generates $\Rk$.

    \item[(v)] The assertion follows from the case (iii).

    \item[(vi)] By the cases (i), (ii) we have $\Dk_{\Rk,\mu}=\mu$ orbits $\Os_j^\oplus$ \eqref{eq4:orbitsRkj} and $\Oc_0=\Os^\oplus_0$.

Let $\ell\ne1$, $\ell\in\Z_\Rk^*$. The map $\lambda\rightarrow\lambda\ell$ maps an orbit $\Os_0^\oplus$ to itself.
Also $\lambda\rightarrow\lambda\ell$ maps an orbit $\Os_j^\oplus$ with $j\ne0$ to $\Os_{\ell j\pmod\mu}^\oplus$ where the subscript $\ell j$ is calculated modulo~$\mu$. Let $\ell'\equiv\ell\pmod\mu$, $1<\ell'<\mu$. As $\mu$ is a prime, $\ell'$ is a primitive element of $\F_\mu$, i.e. the powers of $\ell'$ generate $\mu$. So,  the union of the orbits $\Os_{\ell j\pmod\mu}^\oplus$ forms the orbit $\Oc_1=\bigcup_{j=1}^{\mu-1}\Os_j^\oplus=\Z_\Rk\setminus\Os^\oplus_0=\Z_\Rk\setminus\Oc_0$.

    \item[(vii)] The assertion follows from the cases (v) and (iii). \qedhere
  \end{description}
\end{proof}

\begin{example}\label{exam4:}
  Let $\Rk+1=11$, $\Rk=10$, $\mu=4<\Rk$. Then $\Dk_{\Rk,\mu}=\Dk_{10,4}=2$, $\Phi_\Rk=\frac{10}{2}=5$, $\Phi_\mu=\frac{4}{2}=2$,
$\Os_0^\oplus= \{0,2,4,6,8\},~\Os_1^\oplus=\{1,3,5,7,9\}.$
\end{example}

\subsection{Values of $\Ps^+_{\Rk,\mu}(\lambda)$}
\begin{lemma}\label{lem4:Lambda==}
  Assume that the values of $\Ps^+_{\Rk,\mu}(\lambda)$ are the same for all $\lambda\in\Z_\Rk$. Then
\begin{equation}\label{eq4:Lambda==}
\Ps^+_{\Rk,\mu}(\lambda)=\frac{\binom{\Rk}{\mu}}{\#\Z_\Rk}=\frac{1}{\Rk}\binom{\Rk}{\mu}=\frac{1}{\mu}\binom{\Rk-1}{\mu-1},~\forall\lambda\in\Z_\Rk,
\end{equation}
where $\frac{1}{\mu}\binom{\Rk-1}{\mu-1}$ is an integer.
\end{lemma}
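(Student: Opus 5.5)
The plan is a double-counting argument. Every $\mu$-subset $\{\lambda_1,\ldots,\lambda_\mu\}$ of $\Z_\Rk$ consists of distinct elements, and its sum $\sum_{j=1}^\mu\lambda_j$ is one well-defined element $\lambda\in\Z_\Rk$. So the sets $\widehat{\Ps}^+_{\Rk,\mu}(\lambda)$, $\lambda\in\Z_\Rk$, are pairwise disjoint, and together they cover every $\mu$-subset of $\Z_\Rk$. Therefore
\begin{equation*}
\sum_{\lambda\in\Z_\Rk}\Ps^+_{\Rk,\mu}(\lambda)=\binom{\Rk}{\mu}.
\end{equation*}

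Next I would use the hypothesis that all values $\Ps^+_{\Rk,\mu}(\lambda)$ are equal, say to $P$. The left-hand side is then $\#\Z_\Rk\cdot P=\Rk P$, which gives $P=\binom{\Rk}{\mu}/\Rk$. The last equality in \eqref{eq4:Lambda==} is the standard identity $\binom{\Rk}{\mu}=\frac{\Rk}{\mu}\binom{\Rk-1}{\mu-1}$. Dividing it by $\Rk$ gives $\frac{1}{\Rk}\binom{\Rk}{\mu}=\frac{1}{\mu}\binom{\Rk-1}{\mu-1}$.

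The integrality claim is immediate: $P$ is the cardinality of a finite set, so it is a nonnegative integer. Hence $\frac{1}{\mu}\binom{\Rk-1}{\mu-1}$ is an integer under the hypothesis of the lemma.

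There is no real obstacle here. The only point needing care is that the sets $\widehat{\Ps}^+_{\Rk,\mu}(\lambda)$ are unordered $\mu$-subsets, as Definition~\ref{def3:Lambda} states. This is why the total count is $\binom{\Rk}{\mu}$ rather than an ordered count, and the formula depends on it.
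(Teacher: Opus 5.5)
Your proposal is correct and matches the paper's own argument. Both count all $\binom{\Rk}{\mu}$ unordered $\mu$-subsets of $\Z_\Rk$, divide by $\#\Z_\Rk=\Rk$ using the equal-value hypothesis, and get integrality from the fact that $\Ps^+_{\Rk,\mu}(\lambda)$ is a cardinality. You additionally state the disjointness and covering of the sets $\widehat{\Ps}^+_{\Rk,\mu}(\lambda)$ and the binomial identity, which the paper leaves implicit.
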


\begin{proof}
  In total, there are $\binom{\Rk}{\mu}$ distinct $\mu$-tuples $\{\lambda_1,\ldots,\lambda_{\mu}\}$ of distinct  elements of  $\Z_\Rk$. Also, $\#\Z_\Rk=\Rk$. Finally, the value  $\Ps^+_{\Rk,\mu}(\lambda)$ must be an integer.
\end{proof}

\begin{theorem}\label{th4:Lambda==}
\begin{description}
  \item[(i)]
  For all elements $\lambda$ of an orbit $\Oc$ under $G_{\Rk,\mu}$, the values of $\Ps_{\Rk,\mu}^+(\lambda)$ are identical. The same holds for any orbit $\Os^\oplus$ under  $G_{\Rk,\mu}^\oplus$.

  \item[(ii)] Let all elements $\lambda$ of $\Z_\Rk$ form a unique orbit under $G_{\Rk,\mu}$. Then $\Ps^+_{\Rk,\mu}(\lambda)=\frac{1}{\mu}\binom{\Rk-1}{\mu-1}$  for all $\lambda\in\Z_\Rk$.
\end{description}
\end{theorem}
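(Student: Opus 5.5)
Part (i) follows from Lemma \ref{lem4:biject}, read as a statement about the group $G_{\Rk,\mu}$. Fix $\varphi\in G_{\Rk,\mu}$, say $\lambda\varphi=\lambda\ell+u\mu$ with $\ell\in\Z_\Rk^*$ and $u\in\{0,\ldots,\Rk-1\}$. Define
\[
\psi:\widehat{\Ps}^+_{\Rk,\mu}(\lambda)\to\widehat{\Ps}^+_{\Rk,\mu}(\lambda\varphi),\qquad \{\lambda_1,\ldots,\lambda_\mu\}\mapsto\{\lambda_1\ell+u,\ldots,\lambda_\mu\ell+u\}.
\]
By the proof of Lemma \ref{lem4:biject}, the image consists of $\mu$ distinct elements whose sum is $\lambda\ell+u\mu=\lambda\varphi$, so $\psi$ is well defined.

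It remains to show that $\psi$ is a bijection. Since $\ell$ is a unit, the affine map $x\mapsto x\ell+u$ of $\Z_\Rk$ has the inverse $x\mapsto (x-u)\ell^{-1}$. That inverse sends a $\mu$-set with sum $\lambda\varphi$ to a $\mu$-set of distinct elements with sum $(\lambda\ell+u\mu-u\mu)\ell^{-1}=\lambda$. So it maps $\widehat{\Ps}^+_{\Rk,\mu}(\lambda\varphi)$ back into $\widehat{\Ps}^+_{\Rk,\mu}(\lambda)$ and is inverse to $\psi$. Hence $\Ps^+_{\Rk,\mu}(\lambda)=\Ps^+_{\Rk,\mu}(\lambda\varphi)$ for every $\lambda$ and every $\varphi\in G_{\Rk,\mu}$.

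By Definition \ref{def4:orbitsOc&Oplus}, any two elements of an orbit $\Oc$ are related by some $\varphi\in G_{\Rk,\mu}$, so $\Ps^+_{\Rk,\mu}$ is constant on $\Oc$. The same argument restricted to maps with $\ell=1$ shows that $\Ps^+_{\Rk,\mu}$ is constant on each orbit $\Os^\oplus$ under $G_{\Rk,\mu}^\oplus$.

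For part (ii), if $\Z_\Rk$ is a single orbit under $G_{\Rk,\mu}$, then part (i) makes $\Ps^+_{\Rk,\mu}(\lambda)$ the same for all $\lambda\in\Z_\Rk$. Lemma \ref{lem4:Lambda==} then gives the value $\frac{1}{\mu}\binom{\Rk-1}{\mu-1}$. The counting behind that lemma is that the sets $\widehat{\Ps}^+_{\Rk,\mu}(\lambda)$, $\lambda\in\Z_\Rk$, partition all $\binom{\Rk}{\mu}$ $\mu$-subsets of $\Z_\Rk$, and $\frac{1}{\Rk}\binom{\Rk}{\mu}=\frac{1}{\mu}\binom{\Rk-1}{\mu-1}$.

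The only step needing care is that the correspondence of Lemma \ref{lem4:biject} is genuinely a bijection and not just an injection; the invertibility of $\ell$ settles this. As a remark, Theorem \ref{th4:orbits}(iv) shows that the hypothesis of (ii) holds whenever $(\Rk,\mu)=1$.
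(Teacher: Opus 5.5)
Your proof is correct and takes the same route as the paper. Part (i) comes from the bijection of Lemma~\ref{lem4:biject} applied to each map in $G_{\Rk,\mu}$; you usefully make explicit the inverse map $x\mapsto(x-u)\ell^{-1}$, which the paper leaves implicit. Part (ii) comes from part (i) combined with the counting in Lemma~\ref{lem4:Lambda==}.
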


\begin{proof}
\begin{description}
  \item[(i)]
The assertion follows from \eqref{eq4:G_mu} and Lemma \ref{lem4:biject}.

  \item[(ii)]
  The assertion follows from Lemma \ref{lem4:Lambda==} with \eqref{eq4:Lambda==} and the case (i) of this theorem.\qedhere
\end{description}
\end{proof}

\begin{theorem}\label{th4:coprimeLamb}
     Let $\mu$ be coprime with $\Rk$, i.e. $(\mu,\Rk)=1$. Then,  $\Ps^+_{\Rk,\mu}(\lambda)=\frac{1}{\mu}\binom{\Rk-1}{\mu-1}$  for all $\lambda\in\Z_\Rk$.
\end{theorem}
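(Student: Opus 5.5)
The plan is to combine the orbit description of Theorem \ref{th4:orbits} with the orbit-invariance of Theorem \ref{th4:Lambda==}. First, since $(\mu,\Rk)=1$, we have $\Dk_{\Rk,\mu}=1$. Theorem \ref{th4:orbits}(iv) then gives that $\Oc_0=\Os^\oplus_0=\Z_\Rk$ is the unique orbit under $G_{\Rk,\mu}$; in fact it is already the unique orbit under the translation subgroup $G_{\Rk,\mu}^\oplus$. Concretely, for any $\lambda,\lambda'\in\Z_\Rk$ we use that $\mu$ is invertible modulo $\Rk$. We choose $u\equiv(\lambda'-\lambda)\mu^{-1}\pmod{\Rk}$ with $u\in\{0,\ldots,\Rk-1\}$, so that $\lambda+u\mu=\lambda'$.

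Second, we apply Lemma \ref{lem4:biject} with $\ell=1$ and this $u$. It gives a bijection $\{\lambda_1,\ldots,\lambda_\mu\}\mapsto\{\lambda_1+u,\ldots,\lambda_\mu+u\}$ between $\widehat{\Ps}^+_{\Rk,\mu}(\lambda)$ and $\widehat{\Ps}^+_{\Rk,\mu}(\lambda')$. The map is injective, with inverse the shift by $-u$, and it preserves distinctness of the entries. Hence $\Ps^+_{\Rk,\mu}(\lambda)=\Ps^+_{\Rk,\mu}(\lambda')$ for all $\lambda,\lambda'$; this is exactly Theorem \ref{th4:Lambda==}(i) applied to the single orbit.

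Third, we invoke Theorem \ref{th4:Lambda==}(ii), or equivalently Lemma \ref{lem4:Lambda==}. The $\binom{\Rk}{\mu}$ $\mu$-subsets of $\Z_\Rk$ are partitioned according to their sums into $\Rk$ classes of equal size. Therefore each class has size $\frac1\Rk\binom{\Rk}{\mu}=\frac1\mu\binom{\Rk-1}{\mu-1}$, using the identity $\frac{\Rk}{\mu}\binom{\Rk-1}{\mu-1}=\binom{\Rk}{\mu}$.

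There is no real obstacle here. The only point needing care is checking that the bijection of Lemma \ref{lem4:biject} is genuinely a bijection between sets of \emph{unordered} distinct tuples. This holds because a translation is a permutation of $\Z_\Rk$, so it maps $\mu$-subsets to $\mu$-subsets.
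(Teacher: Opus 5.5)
Your proof is correct and takes the same route as the paper, which cites the single-orbit property from Theorem~\ref{th4:orbits} (it writes (iii), but (iv) is the relevant part) together with Theorem~\ref{th4:Lambda==}(ii). Your version only makes explicit the translation by $u\equiv(\lambda'-\lambda)\mu^{-1}\pmod{\Rk}$ that realizes the single orbit, and the count $\binom{\Rk}{\mu}/\Rk=\frac{1}{\mu}\binom{\Rk-1}{\mu-1}$.
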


\begin{proof}
 We use Theorem \ref{th4:orbits}(iii) and Theorem \ref{th4:Lambda==}(ii).
\end{proof}

\begin{conjecture}\label{conj4:a}
If and only if $\mu$ is coprime with $\Rk$ then values of $\Ps^+_{\Rk,\mu}(\lambda)$ are the same for all $\lambda\in\Z_\Rk$ and the relation \eqref{eq4:Lambda==} holds.
\end{conjecture}

Note that due to Theorem \ref{th4:orbits}(iv), if $\mu$ is not coprime with $\Rk$, we have at least two orbits under $G_{\Rk,\mu}$. However, we do not know if always for elements $\lambda$ from distinct orbits values $\Ps^+_{\Rk,\mu}(\lambda)$ are distinct.

\begin{definition}\label{def4:ch-vector}
\begin{description}
  \item[(i)] Let $\Dk_{\Rk,\mu},\mu,\Rk$, be as in \eqref{eq4:DRkmu}.
 We consider $\mu$-sets $\{\lambda_1,\ldots,\lambda_\mu\}$ from $\widehat{\Ps}^+_{\Rk,\mu}(\lambda)$, see Definition \ref{def3:Lambda} with
 \eqref{eq3:hatPk lambda}, \eqref{eq3:Pk lambda}. We associate each $\mu$-set with a  vector
 \begin{equation}\label{eq4:ch-vectorView}
   C^{(i)}(\lambda)\triangleq(N_0^{(i)}(\lambda),N_1^{(i)}(\lambda),\ldots,N_{\Dk_{\Rk,\mu}-1}^{(i)}(\lambda)),
 \end{equation}
where $N_\xi^{(i)}(\lambda)$ is the number of elements $\lambda_j$ from this set such that $\lambda_j\equiv\xi\pmod{\Dk_{\Rk,\mu}}$. The superscript $(i)$ notes that this vector can be not unique for the given $\lambda$, i.e. for some $\mu$-sets the vectors can be distinct. We can omit the superscript if it is not relevant. In other words
\begin{align}\label{eq4:ch-vectorDetails}
&N_\xi^{(i)}(\lambda)=\#\{\lambda_j\,|\,\lambda_j\equiv\xi\pmod{\Dk_{\Rk,\mu}},\{\lambda_1,\ldots,\lambda_\mu\}
\in\widehat{\Ps}^+_{\Rk,\mu}(\lambda)\},\db\\
&0\le N_\xi^{(i)}(\lambda)\le\mu,~0\le\xi\le\Dk_{\Rk,\mu}-1,~1\le i\le\Ps_{\Rk,\mu}^+(\lambda),
\sum_{\xi=0}^{\Dk_{\Rk,\mu}-1}N_\xi^{(i)}(\lambda)=\mu.\notag
\end{align}
The vector $C^{(i)}(\lambda)=(N_0^{(i)}(\lambda),N_1^{(i)}(\lambda),\ldots,N_{\Dk_{\Rk,\mu}-1}^{(i)}(\lambda))$ is called \emph{the $i$-th characteristic vector of $\lambda$} (\emph{the $i$-th ch$(\lambda)$-vector} or simply \emph{ch$(\lambda)$-vector}, for short).

  \item[(ii)] Let $C_\delta^{(i)}(\lambda)$ be the cyclic shift of a ch$(\lambda)$-vector $C^{(i)}(\lambda)=(N_0^{(i)}(\lambda),\ldots,N_{\Dk_{\Rk,\mu}-1}^{(i)}(\lambda))$ to the right by $\delta$ positions, $\delta=1,2,\ldots,\Dk_{\Rk,\mu}$, i.e.
 \begin{equation}\label{eq4:ch-vectorCyclic}
   C_\delta^{(i)}(\lambda)\triangleq(N_{\Dk_{\Rk,\mu}-\delta}^{(i)}(\lambda),\ldots,
   N_{\Dk_{\Rk,\mu}-1}^{(i)}(\lambda),N_0^{(i)}(\lambda),N_1^{(i)}(\lambda),\ldots,N_{\Dk_{\Rk,\mu}-\delta-1}^{(i)}(\lambda));
 \end{equation}
in particular, $C_{\Dk_{\Rk,\mu}}^{(i)}(\lambda)=C^{(i)}(\lambda)$. We denote the number of distinct cyclic shifts of the vector $C^{(i)}(\lambda)$ by $m^{(i)}(\lambda)$. Let $C^{(i)}_\Sigma(\lambda)$ be the $m^{(i)}(\lambda)$-set of \emph{all distinct} cyclic shifts of the vector $C^{(i)}(\lambda)$. We can omit the superscript $(i)$ if it is not relevant.
\end{description}
\end{definition}

We denote
\begin{equation}\label{eq4:Delta}
  \Delta_{\Rk,\mu}^{(\lambda_1,\lambda_2)}\triangleq\Ps^+_{\Rk,\mu}(\lambda_1)-\Ps^+_{\Rk,\mu}(\lambda_2).
\end{equation}

\begin{theorem}\label{th4:2orbits}
We consider the ring $\Z_\Rk$ and a representation of its elements $\lambda$ by a sum of $\mu$ elements of $\Z_\Rk$.
Let $\widehat{\Ps}^+_{\Rk,\mu}(\lambda)$, $\Ps^+_{\Rk,\mu}(\lambda)$ be as in  Definition \ref{def3:Lambda} with \eqref{eq3:hatPk lambda}, \eqref{eq3:Pk lambda}. Let $G_{\Rk,\mu}$ be as in \eqref{eq4:G_mu}.
Let $\Dk_{\Rk,\mu},\mu,\Rk,\Phi_\mu,\Phi_{\Rk}$ be as in \eqref{eq4:DRkmu}, \eqref{eq4:DRkmu2}. Let the ch$(\lambda)$-vector $C^{(i)}(\lambda)$, its cyclic shift $C_\delta^{(i)}(\lambda)$, and the number $m^{(i)}(\lambda)$ of distinct cyclic shifts of $C^{(i)}(\lambda)$ be as in Definition \ref{def4:ch-vector} with \eqref{eq4:ch-vectorView}--\eqref{eq4:ch-vectorCyclic}. Let $\Dk_{\Rk,\mu}\ge2$ be a prime. Assume that in $\Z_\Rk$ under $G_{\Rk,\mu}$, there are exactly two orbits $\Oc_0=\Os^\oplus_0$ and $\Oc_1=\Z_\Rk\setminus\Oc_0$, see Theorem \ref{th4:orbits}(i)(vi)(vii) with \eqref{eq4:orbitsRkj}--\eqref{eq4:D=muPrimeO1}.
 Then the following holds:
\begin{description}
  \item[(i)]
  \begin{align}\label{eq4:2orbits}
&\Phi_\Rk\Ps^+_{\Rk,\mu}(0)+(\Rk- \Phi_\Rk)\Ps^+_{\Rk,\mu}(1)=\binom{\Rk}{\mu},~\Ps^+_{\Rk,\mu}(0)-\Ps^+_{\Rk,\mu}(1)=\Delta_{\Rk,\mu}^{(0,1)}.\db\\
&\Ps^+_{\Rk,\mu}(0)=\frac{1}{\mu}\binom{\Rk-1}{\mu-1}-\frac{\Delta_{\Rk,\mu}^{(0,1)}}{\Dk_{\Rk,\mu}}
+\Delta_{\Rk,\mu}^{(0,1)},\label{eq4:2orbits x,y}\db\\
&\Ps^+_{\Rk,\mu}(1)=\frac{1}{\mu}\binom{\Rk-1}{\mu-1}-\frac{\Delta_{\Rk,\mu}^{(0,1)}}{\Dk_{\Rk,\mu}}.\notag\db\\
&\Ps^+_{\Rk,\mu}(\lambda)=\Ps^+_{\Rk,\mu}(0)\text{ if }\lambda\in\Oc_0=\Os^\oplus_0,\,
\lambda\equiv 0\pmod{\Dk_{\Rk,\mu}}~(\text{see }\eqref{eq4:orbitsRkj}, \eqref{eq4:D=muPrimeO0});
\label{eq4:Plambda=P0,Plambda=P1}\db\\
&\Ps^+_{\Rk,\mu}(\lambda)=\Ps^+_{\Rk,\mu}(1)\text{ if }\lambda\in\Oc_1=\Z_\Rk\setminus\Oc_0,\,
\lambda\not\equiv 0\pmod{\Dk_{\Rk,\mu}}~(\text{ see }
\eqref{eq4:D=muPrimeO0}, \eqref{eq4:D=muPrimeO1}).\notag
\end{align}
  \item[(ii)]  We consider a $\mu$-set $\{\lambda_1,\ldots,\lambda_\mu\}\in\widehat{\Ps}^+_{\Rk,\mu}(\lambda)$. The corresponding ch$(\lambda)$-vector is $C^{(i)}(\lambda)=(N_0^{(i)}(\lambda),N_1^{(i)}(\lambda),\ldots,N_{\Dk_{\Rk,\mu}-1}^{(i)}(\lambda))$. We have
\begin{align}\label{eq4:A}
&\sum_{j=1}^\mu\lambda_j\equiv\lambda\pmod{\Dk_{\Rk,\mu}};\db\\
&\sum_{\xi=0}^{\Dk_{\Rk,\mu}-1}\xi N_\xi^{(i)}(\lambda)\equiv\lambda\pmod{\Dk_{\Rk,\mu}}\label{eq4:B}.
\end{align}

  \item[(iii)]
For $\delta=1,2,\ldots,\Dk_{\Rk,\mu}$, the cyclic shift $C_\delta^{(i)}(\lambda)$, of a ch$(\lambda)$-vector $C^{(i)}(\lambda)$ is again a ch$(\lambda)$-vector.

\item[(iv)] The total number $\Ns^\Sigma(C^{(i)}(\lambda))$ of $\mu$-sets $\{\lambda_1,\ldots,\lambda_\mu\}$, belonging to $\widehat{\Ps}^+_{\Rk,\mu}(\lambda)$ and having the same ch$(\lambda)$-vector $C^{(i)}(\lambda)=(N_0^{(i)}(\lambda),N_1^{(i)}(\lambda),\ldots,N_{\Dk_{\Rk,\mu}-1}^{(i)}(\lambda))$ or its cyclic shift $C_\delta^{(i)}(\lambda)$ \eqref{eq4:ch-vectorCyclic}, $\delta=1,2,\ldots,\Dk_{\Rk,\mu}$, is as follows:
  \begin{equation}\label{eq4:N Sigma}
   \Ns^\Sigma(C^{(i)}(\lambda))=\frac{m^{(i)}(\lambda)}{\Phi_\Rk}\prod_{\xi=0}^{\Dk_{\Rk,\mu}-1}\binom{\Phi_\Rk}{N_\xi^{(i)}(\lambda)}.
  \end{equation}
\end{description}
\end{theorem}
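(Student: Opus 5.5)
Under the two-orbit hypothesis, the natural route is to treat the four parts in the order (i), (ii), (iii), (iv).

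\textbf{Part (i).} The argument is a counting identity combined with Theorem \ref{th4:Lambda==}(i).
\begin{itemize}
\item By that theorem, $\Ps^+_{\Rk,\mu}$ is constant on $\Oc_0$ and on $\Oc_1$. Since $0\in\Oc_0$ and $1\in\Oc_1$ (because $\Dk_{\Rk,\mu}\ge2$), the values on the two orbits are $\Ps^+_{\Rk,\mu}(0)$ and $\Ps^+_{\Rk,\mu}(1)$. This gives \eqref{eq4:Plambda=P0,Plambda=P1}.
\item Every $\mu$-subset of $\Z_\Rk$ has exactly one sum. Summing $\Ps^+_{\Rk,\mu}(\lambda)$ over all $\lambda$ and using $\#\Oc_0=\Phi_\Rk$ from \eqref{eq4:orbitsRkj} gives the first relation in \eqref{eq4:2orbits}. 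The second relation is just the definition \eqref{eq4:Delta}.
\item Substituting $\Ps^+_{\Rk,\mu}(0)=\Ps^+_{\Rk,\mu}(1)+\Delta^{(0,1)}_{\Rk,\mu}$ yields $\Rk\,\Ps^+_{\Rk,\mu}(1)+\Phi_\Rk\Delta^{(0,1)}_{\Rk,\mu}=\binom{\Rk}{\mu}$.
\item Dividing by $\Rk$, and using $\Phi_\Rk/\Rk=1/\Dk_{\Rk,\mu}$ together with $\frac1\Rk\binom{\Rk}{\mu}=\frac1\mu\binom{\Rk-1}{\mu-1}$, gives \eqref{eq4:2orbits x,y}.
\end{itemize}

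\textbf{Part (ii).} Since $\Dk_{\Rk,\mu}$ divides $\Rk$, reducing the identity $\sum\lambda_j\equiv\lambda\pmod\Rk$ modulo $\Dk_{\Rk,\mu}$ gives \eqref{eq4:A}. Grouping the $\lambda_j$ by their residue class $\xi$ then gives \eqref{eq4:B}.

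\textbf{Part (iii).} The idea is to apply a translation and then check the sum.
\begin{itemize}
\item Take a set with ch$(\lambda)$-vector $C^{(i)}(\lambda)$ and apply the translation $\lambda_j\mapsto\lambda_j+\delta$. This keeps the elements distinct, sends residue class $\xi$ to class $\xi+\delta$, and so cyclically shifts the vector to $C_\delta^{(i)}$.
\item The new sum is $\lambda+\delta\mu$, which lies in a possibly different class modulo $\Dk_{\Rk,\mu}$.
\item Because the problem is set up with the residue $\lambda\bmod\Dk_{\Rk,\mu}$ fixed, I would fix this with a further translation $t$ that is a multiple of $\Dk_{\Rk,\mu}$, so residues are unchanged. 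We need $t\mu\equiv-\delta\mu\pmod\Rk$.
\item Because $\delta\mu$ is a multiple of $\Dk_{\Rk,\mu}$ and $\gcd(\Phi_\mu,\Phi_\Rk)=1$ by \eqref{eq4:DRkmu2}, the translations by multiples of $\Dk_{\Rk,\mu}$ move the sum through all of $\Os^\oplus_{\lambda \bmod \Dk}$. So the sum can be returned exactly to $\lambda$.
\item Alternatively, one can argue at the level of counts that shifted vectors realize elements of the same $\Os^\oplus$ class, and hence of the same peculiarity by (i).
\end{itemize}

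\textbf{Part (iv).} This is the main obstacle, and I would handle it by a counting and averaging argument.
\begin{itemize}
\item Count all $\mu$-subsets of $\Z_\Rk$ whose ch-vector is one of the $m^{(i)}(\lambda)$ shifts. Choosing $N_\xi$ elements from each residue class of size $\Phi_\Rk$ gives $m^{(i)}(\lambda)\prod_\xi\binom{\Phi_\Rk}{N_\xi}$ such subsets.
\item Each of these has sum congruent to $\lambda$ modulo $\Dk_{\Rk,\mu}$, because all shifts satisfy \eqref{eq4:B}. This uses the shift computation of (iii): $\sum\xi N_{\xi-\delta}$ equals the original sum plus $\delta\mu$, which is $\equiv0\pmod{\Dk_{\Rk,\mu}}$ since $\Dk_{\Rk,\mu}\mid\mu$.
\item The translations by multiples of $\Dk_{\Rk,\mu}$ preserve each ch-vector and act transitively on the $\Phi_\Rk$ possible sums in that residue class. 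Hence these subsets are equidistributed over the $\Phi_\Rk$ sums.
\item Dividing by $\Phi_\Rk$ gives \eqref{eq4:N Sigma}.
\end{itemize}

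The delicate point is the equidistribution. Translation by $u\Dk_{\Rk,\mu}$ changes the sum by $u\mu\Dk_{\Rk,\mu}$. So I need $\mu\Dk_{\Rk,\mu}$ to generate $\Dk_{\Rk,\mu}\Z_\Rk$, i.e. $\gcd(\mu,\Phi_\Rk)=1$, which I do not yet have. I will verify this using the hypothesis that $\Dk_{\Rk,\mu}$ is prime. If that fails, I would instead use the bijections of Lemma \ref{lem4:biject} with general $\ell$, combined with the orbit transitivity assumed in the theorem.
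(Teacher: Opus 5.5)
Your parts (i) and (ii) are correct and match the paper's argument. The gap is the equidistribution step in (iv), and the same device in (iii). Translations by multiples of $\Dk_{\Rk,\mu}$ change a sum by multiples of $\Dk_{\Rk,\mu}\mu$. These fill $\Dk_{\Rk,\mu}\Z_\Rk$ only when $\gcd(\Dk_{\Rk,\mu},\Phi_\Rk)=1$, and primality of $\Dk_{\Rk,\mu}$ does not give this. Take $\Rk=18$, $\mu=3$ (i.e.\ $q=19$, $d=5$, a case of Table~\ref{tab5:1}). Translating a $3$-set by $3u$ moves its sum by $9u$, so only two of the six elements of $\Os^\oplus_0$ are reached.

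Worse, equidistribution for a single ch-vector is simply false there, so no other group can rescue it. In particular the multipliers $\ell$ cannot help, since they permute residue classes by $\xi\mapsto\ell\xi$, which is not a cyclic shift. Concretely, the $20$ sets with ch-vector exactly $(3,0,0)$ are $3A$ with $A$ a $3$-subset of $\Z_6$. Their sums $0,3,6,9,12,15$ occur $4,3,3,4,3,3$ times, and $20/6$ is not even an integer.

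The missing idea is to use \emph{all} translations $x\mapsto x+u$, $u\in\Z_\Rk$, and to work with the whole cyclic class $C^{(i)}_\Sigma(\lambda)$ rather than one vector. This is exactly why $\Ns^\Sigma$ is defined for the class and carries the factor $m^{(i)}(\lambda)$. The argument runs as follows:
\begin{itemize}
\item Translation by $u$ shifts the ch-vector cyclically by $u \bmod \Dk_{\Rk,\mu}$, so it preserves the family of $\mu$-sets whose ch-vector lies in $C^{(i)}_\Sigma(\lambda)$.
\item It adds $u\mu$ to the sum, and $\{u\mu: u\in\Z_\Rk\}=\Dk_{\Rk,\mu}\Z_\Rk$.
\item Hence for any two sums $s,s'\in\Os^\oplus_j$, $j\equiv\lambda\pmod{\Dk_{\Rk,\mu}}$, some translation is a bijection between the members of the family with sum $s$ and those with sum $s'$.
\item So the $m^{(i)}(\lambda)\prod_\xi\binom{\Phi_\Rk}{N_\xi^{(i)}(\lambda)}$ members are equidistributed over these $\Phi_\Rk$ sums, which gives \eqref{eq4:N Sigma}.
\end{itemize}
In the example, the sets $3A$, $1+3A$, $2+3A$ with sum $0$ number $4+3+3=10=\frac{3}{6}\cdot 20$. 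The paper's proof of (iv) only counts $\prod_\xi\binom{\Phi_\Rk}{N_\xi}$ and multiplies by $m^{(i)}(\lambda)$, leaving the division by $\Phi_\Rk$ unjustified. So this is precisely the step you must supply; note that it uses neither primality of $\Dk_{\Rk,\mu}$ nor the two-orbit hypothesis.

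For (iii), your sum-correcting translation works exactly when $\gcd(\Dk_{\Rk,\mu},\Phi_\Rk)=1$. Without that condition the statement itself can fail, so no proof exists in general. Take $\Rk=9$, $\mu=3$: this has two orbits by Theorem~\ref{th4:orbits}(vi) and is covered by Theorem~\ref{th4:mu d cases}(i). The set $\{0,3,6\}$ has sum $0$ and ch-vector $(3,0,0)$. But the only $3$-set with ch-vector $(0,3,0)$ is $\{1,4,7\}$, whose sum is $12\equiv 3$. What does hold in general is your ``alternative'': $C^{(i)}_\delta(\lambda)$ is a ch$(\lambda+\delta\mu)$-vector, with $\lambda+\delta\mu$ in the same class $\Os^\oplus_j$, and this is all that (iv) needs. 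The paper's own proof of (iii) has the same flaw. It places $\{\lambda_1+1,\ldots,\lambda_\mu+1\}$ in $\widehat{\Ps}^+_{\Rk,\mu}(\lambda)$, although its sum is $\lambda+\mu$, which equals $\lambda$ only modulo $\Dk_{\Rk,\mu}$.
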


\begin{proof}
\begin{description}
  \item[(i)]
 In total, there are $\binom{\Rk}{\mu}$ distinct $\mu$-tuples $\{\lambda_1,\ldots,\lambda_{\mu}\}$ of distinct  elements of  $\Z_\Rk$. By \eqref{eq4:orbitsRkj}, \eqref{eq4:D=muPrimeO0}, the orbit $\Os^\oplus_0$ contains $\Phi_\Rk$ elements of $\Z_\Rk$, hence the orbit $\Oc_1=\Z_\Rk\setminus\Oc_0$ contains $\Rk- \Phi_\Rk$ elements of this ring. By Theorem \ref{th4:Lambda==}(i), for all elements $\lambda$ of an orbit $\Oc_j$ under $G_{\Rk,\mu}$, values of $\Ps_{\Rk,\mu}^+(\lambda)$ are identical. Also we note that, by \eqref{eq4:orbitsRkj}, $0\in\Os^\oplus_0=\Oc_0$ and $1\in\Os^\oplus_1\subseteq\Oc_1$ as $\Dk_{\Rk,\mu}\ge2$. So, we have proved the 1-st equation of the linear system \eqref{eq4:2orbits}. The 2-nd one holds according to \eqref{eq4:Delta}.

 In \eqref{eq4:2orbits x,y}, the values of $\Ps^+_{\Rk,\mu}(0)$ and $\Ps^+_{\Rk,\mu}(1)$ are the solution of the system \eqref{eq4:2orbits}.

 The assertion \eqref{eq4:Plambda=P0,Plambda=P1} follows from Theorem \ref{th4:Lambda==}(i).

  \item[(ii)] By Definition \ref{def3:Lambda}, $\sum_{j=1}^\mu\lambda_j\equiv\lambda\pmod{\Rk}$. By the hypotheses,  $\Dk_{\Rk,\mu}|\Rk$ and $\Dk_{\Rk,\mu}\ge2$ is a prime. This implies the assertion \eqref{eq4:A}. The relation \eqref{eq4:B} follows from \eqref{eq4:A} and Definition \ref{def4:ch-vector} with  \eqref{eq4:ch-vectorView}, \eqref{eq4:ch-vectorDetails}.

  \item[(iii)] It is sufficient to consider the case $\delta=1$.

  Let a $\mu$-set $\{\lambda_1,\ldots,\lambda_\mu\}$ belong to $\widehat{\Ps}^+_{\Rk,\mu}(\lambda)$, see Definition \ref{def3:Lambda} with \eqref{eq3:hatPk lambda}. The corresponding ch$(\lambda)$-vector is $C^{(i)}(\lambda)=(N_0^{(i)}(\lambda),N_1^{(i)}(\lambda),\ldots,N_{\Dk_{\Rk,\mu}-1}^{(i)}(\lambda))$.

      By the hypothesis, $\Dk_{\Rk,\mu}|\mu$. So, $\mu\equiv0\pmod{\Dk_{\Rk,\mu}}$. By \eqref{eq4:A}, $\sum_{j=1}^\mu\lambda_j\equiv\lambda\pmod{\Dk_{\Rk,\mu}}$ that implies $\sum_{j=1}^\mu(\lambda_j+1)\equiv(\lambda+\mu)\pmod{\Dk_{\Rk,\mu}}\equiv\lambda\pmod{\Dk_{\Rk,\mu}}$. Thus, the $\mu$-set $\{\lambda_1+1,\ldots,\lambda_\mu+1\}$ also belongs to $\widehat{\Ps}^+_{\Rk,\mu}(\lambda)$. The corresponding ch$(\lambda)$-vector is $C^{(j)}(\lambda)=(N_0^{(j)}(\lambda),N_1^{(j)}(\lambda),\ldots,N_{\Dk_{\Rk,\mu}-1}^{(j)}(\lambda))$.
      Obviously, $N_0^{(j)}(\lambda)=N_{\Dk_{\Rk,\mu}-1}^{(i)}(\lambda))$, $N_\xi^{(j)}(\lambda)=N_{\xi-1}^{(i)}(\lambda)$, $\xi=1,\ldots.\Dk_{\Rk,\mu}-1$. Thus, $C^{(j)}(\lambda)=C^{(i)}_1(\lambda)$.

\item[(iv)] By the hypotheses,  $\Dk_{\Rk,\mu}|\Rk$ and $\Dk_{\Rk,\mu}\ge2$ is a prime.  Therefore, the ring $\Z_\Rk$ is partitioned into $\Dk_{\Rk,\mu}$ $\Phi_\Rk$-subsets, each of which consists of elements congruent with some fixed $\xi$ modulo $\Dk_{\Rk,\mu}$.
    Moreover, the values of $\xi$ are distinct for distinct subsets. This means that for every $\xi\in\{0,1,\ldots,\Dk_{\Rk,\mu}-1\}$, the ring $\Z_\Rk$ contains exactly $\Phi_\Rk=\Rk/\Dk_{\Rk,\mu}$ elements congruent with $\xi$ modulo  $\Dk_{\Rk,\mu}$. Finally, we add the factor $m^{(i)}(\lambda)$ to take into account the distinct cyclic shifts of $C^{(i)}(\lambda)$.
\end{description}
\end{proof}

\begin{theorem} \label{th4:mu d cases}
Let the hypotheses be as in Theorem \ref{th4:2orbits}. Let $t$ be an integer. We have the following:
\begin{align}\label{eq4:mu3}
&\hspace{0.7cm}\textbf{\emph{(i)}}~\Rk=3t\equiv0\pmod3\ge6,~\mu=3,~\Dk_{\Rk,\mu}=(\Rk,3)=3,~\Phi_\Rk=\frac{\Rk}{3},\db\\
&\Ps^+_{\Rk,3}(0)=\frac{1}{3}\left(\binom{\Rk-1}{2}-1\right)+1,~
\Ps^+_{\Rk,3}(1)=\frac{1}{3}\left(\binom{\Rk-1}{2}-1\right).\notag\db\\
&\hspace{0.7cm}\textbf{\emph{(ii)}}~\Rk=4t+2\equiv2\pmod4\ge6,~\mu=4,\Dk_{\Rk,~\mu}=(\Rk,4)=2,~\Phi_\Rk=\frac{\Rk}{2},
\label{eq4:mu4d2}\db\\
&\Ps^+_{\Rk,4}(0)=\frac{1}{4}\left(\binom{\Rk-1}{3}+\frac{\Rk-2}{2}\right),~\Ps^+_{\Rk,4}(1)=
\frac{1}{4}\left(\binom{\Rk-1}{3}-\frac{\Rk-2}{2}\right).\notag\db\\
&\hspace{0.7cm}\textbf{\emph{(iii)}}~\Rk=5t\equiv0\pmod5\ge10,~\mu=5,~\Dk_{\Rk,\mu}=(\Rk,5)=5,~\Phi_\Rk=\frac{\Rk}{5},\label{eq4:mu5}\db\\
&\Ps^+_{\Rk,5}(0)=\frac{1}{5}\left(\binom{\Rk-1}{4}-1\right)+1,~
\Ps^+_{\Rk,5}(1)=\frac{1}{5}\left(\binom{\Rk-1}{4}-1\right).\notag\db\\
&\hspace{0.7cm}\textbf{\emph{(iv)}}~\Rk=6t+2=2\equiv2\pmod6\ge8\text{ and }\Rk=6t+4\equiv2\pmod6\ge10,
\label{eq4:mu6d2}\db\\
&\mu=6,~\Dk_{\Rk,\mu}=(\Rk,6)=2,~\Phi_\Rk=\frac{\Rk}{2}\equiv1,2\pmod3,\db\notag\\
&\Ps^+_{\Rk,6}(0)=\frac{1}{6}\left(\binom{\Rk-1}{5}-\binom{\Rk/2-1}{2}\right),~
\Ps^+_{\Rk,6}(1)=\frac{1}{6}\left(\binom{\Rk-1}{5}+\binom{\Rk/2-1}{2}\right). \notag\db\\
&\hspace{0.7cm}\textbf{\emph{(v)}}~\Rk=6t+3\equiv3\pmod6\ge9,~\mu=6,~\Dk_{\Rk,\mu}=(\Rk,6)=3,~\Phi_\Rk=\frac{\Rk}{3},
\label{eq4:mu6d3}\db\\
&\Ps^+_{\Rk,6}(0)=\frac{1}{3}\left(\frac{1}{2}\binom{\Rk-1}{5}+\frac{\Rk-3}{3}\right),~
\Ps^+_{\Rk,6}(1)=\frac{1}{6}\left(\binom{\Rk-1}{5}-\frac{\Rk-3}{3}\right). \notag\db\\
&\hspace{0.7cm}\textbf{\emph{(vi)}}~\Rk=7t\equiv0\pmod7\ge14,~\mu=7,~\Dk_{\Rk,\mu}=(\Rk,7)=7,~\Phi_\Rk=\frac{\Rk}{7},\label{eq4:mu7}\db\\
&\Ps^+_{\Rk,7}(0)=\frac{1}{7}\left(\binom{\Rk-1}{6}-1\right)+1,~
\Ps^+_{\Rk,7}(1)=\frac{1}{7}\left(\binom{\Rk-1}{6}-1\right).\notag\db\\
&\hspace{0.7cm}\textbf{\emph{(vii)}}~\Rk=4t+6\equiv2,6\pmod8\ge10,~\mu=8,~\Dk_{\Rk,\mu}=(\Rk,8)=2,
\Phi_\Rk=\frac{\Rk}{2},\label{eq4:mu8d2}\db\\
&\Ps^+_{\Rk,8}(0)=\frac{1}{8}\left(\binom{\Rk-1}{7}+\binom{\Rk/2-1}{3}\right),~
\Ps^+_{\Rk,8}(1)=\frac{1}{8}\left(\binom{\Rk-1}{7}-\binom{\Rk/2-1}{3}\right).\notag\db\\
&\hspace{0.7cm}\textbf{\emph{(viii)}}~\Rk=9t+3\equiv3\pmod9\ge12 \text{ and }\Rk=9t+6\equiv6\pmod9\ge15,\label{eq4:mu9d3}\db\\
&\mu=9,~\Dk_{\Rk,\mu}=(\Rk,9)=3,~\Phi_\Rk=\frac{\Rk}{3}\equiv1,2\pmod3,\notag\db\\
&\Ps^+_{\Rk,8}(0)=\frac{1}{9}\left(\binom{\Rk-1}{8}+2\binom{\Rk/3-1}{2}\right),~
\Ps^+_{\Rk,8}(1)=\frac{1}{9}\left(\binom{\Rk-1}{8}-\binom{\Rk/3-1}{2}\right).\notag
\end{align}
\end{theorem}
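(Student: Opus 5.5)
The plan is to compute the two values $\Ps^+_{\Rk,\mu}(0)$ and $\Ps^+_{\Rk,\mu}(1)$ directly, instead of working out $\Delta^{(0,1)}_{\Rk,\mu}$ from the case analysis of ch$(\lambda)$-vectors. In each of the cases (i)--(viii), $\Dk=\Dk_{\Rk,\mu}$ is a prime: $3,2,5,2,3,7,2,3$ respectively. So by Theorem \ref{th4:orbits}(vii) there are exactly two orbits, and Theorem \ref{th4:2orbits}(i) applies. Consequently $\Ps^+_{\Rk,\mu}(\lambda)=\Ps^+_{\Rk,\mu}(0)$ for every $\lambda\equiv0\pmod{\Dk}$, and there are $\Phi_\Rk$ such $\lambda$. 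Summing over them gives
\[
\Phi_\Rk\,\Ps^+_{\Rk,\mu}(0)=N_0,
\]
where $N_0$ is the number of $\mu$-subsets of $\Z_\Rk$ whose sum, read as an integer, is $\equiv0\pmod{\Dk}$. This is well defined because $\Dk\mid\Rk$. Once $N_0$ is known, $\Ps^+_{\Rk,\mu}(1)$ follows from the first equation of \eqref{eq4:2orbits}, and $\Delta^{(0,1)}_{\Rk,\mu}$ follows as well.

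To count $N_0$ I will use a roots-of-unity filter, which packages the ch-vector count of Theorem \ref{th4:2orbits}(iv) into a single step. Let $\omega$ be a primitive complex $\Dk$-th root of unity and consider the generating function $F_\omega(y)=\prod_{x\in\Z_\Rk}(1+y\,\omega^{x})$. Each residue class modulo $\Dk$ contains exactly $\Phi_\Rk$ elements, so
\[
F_\omega(y)=\Bigl(\prod_{\xi=0}^{\Dk-1}(1+y\,\omega^\xi)\Bigr)^{\Phi_\Rk}=\bigl(1-(-y)^{\Dk}\bigr)^{\Phi_\Rk}.
\]
The same identity holds for every nontrivial power $\omega^k$, $1\le k\le \Dk-1$, since $\Dk$ is prime. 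Write $\mu=\Phi_\mu\Dk$. The coefficient of $y^\mu$ in $F_\omega$ is then
\[
c=(-1)^{\Phi_\mu(\Dk+1)}\binom{\Phi_\Rk}{\Phi_\mu}.
\]
Averaging over all $\Dk$-th roots of unity gives
\[
N_0=\frac1{\Dk}\Bigl(\binom{\Rk}{\mu}+(\Dk-1)\,c\Bigr),\qquad \Ps^+_{\Rk,\mu}(0)=\frac{N_0}{\Phi_\Rk}=\frac1\mu\binom{\Rk-1}{\mu-1}+\frac{(\Dk-1)\,c}{\Rk}.
\]

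The final step is to specialise and simplify case by case. For (i), (iii) and (vi) we have $\mu=\Dk$ odd, so $c=\Phi_\Rk=\Rk/\mu$. This gives $\Ps^+(0)=\frac1\mu\binom{\Rk-1}{\mu-1}+\frac{\mu-1}{\mu}$, which equals the stated $\frac1\mu\bigl(\binom{\Rk-1}{\mu-1}-1\bigr)+1$. For (ii), (iv) and (vii), $\Dk=2$ and $c=(-1)^{\Phi_\mu}\binom{\Rk/2}{\mu/2}$, so $c/\Rk$ becomes $\pm\frac1\mu\binom{\Rk/2-1}{\mu/2-1}$. This reproduces $\frac{\Rk-2}{2}$, $-\binom{\Rk/2-1}{2}$ and $+\binom{\Rk/2-1}{3}$ after dividing by $\mu$. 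For (v) and (viii), $\Dk=3$ and the analogous computation gives $2\binom{\Rk/3-1}{\Phi_\mu-1}/\mu$. For $\Ps^+(1)$ I will subtract $\Phi_\Rk\Ps^+(0)$ from $\binom{\Rk}{\mu}$ and divide by $\Rk-\Phi_\Rk$.

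The main obstacle is only the sign and normalisation bookkeeping, especially in cases (v) and (viii). There the stated formulas are written in a non-uniform way; for instance $\frac13(\frac12\binom{\Rk-1}5+\frac{\Rk-3}{3})$ in (v), and the label $\Ps^+_{\Rk,8}$ in (viii) appears to be a typo for $\Ps^+_{\Rk,9}$. I will check each of these against the uniform expression above, for example by testing small values such as $\Rk=9$, $\mu=6$.
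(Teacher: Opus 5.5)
Your proposal is correct, and it takes a genuinely different route from the paper. The paper works by explicit enumeration. For $\lambda=0$ and $\lambda=1$ it lists, up to cyclic shift, every characteristic vector $(N_0,\ldots,N_{\Dk-1})$ with $\sum_\xi\xi N_\xi\equiv\lambda\pmod{\Dk}$. It counts the attached $\mu$-sets by Theorem \ref{th4:2orbits}(iv) (Tables \ref{tab4:1} and \ref{tab4:2}), sums them, extracts $\Delta^{(0,1)}_{\Rk,\mu}$ by case-specific algebra, and substitutes into the two-orbit system; the details for case (vi) are omitted. Your filter replaces all of this with the single identity $\prod_{x\in\Z_\Rk}(1+y\omega^x)=(1-(-y)^{\Dk})^{\Phi_\Rk}$. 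Whenever $\Dk$ is prime, and with $\varepsilon=(-1)^{\Phi_\mu(\Dk+1)}$, it yields
\[
\Ps^+_{\Rk,\mu}(0)=\frac1\mu\binom{\Rk-1}{\mu-1}+\frac{(\Dk-1)\varepsilon}{\mu}\binom{\Phi_\Rk-1}{\Phi_\mu-1},\qquad
\Ps^+_{\Rk,\mu}(1)=\frac1\mu\binom{\Rk-1}{\mu-1}-\frac{\varepsilon}{\mu}\binom{\Phi_\Rk-1}{\Phi_\mu-1}.
\]
These match all eight stated pairs, and you are right that $\Ps^+_{\Rk,8}$ in (viii) should read $\Ps^+_{\Rk,9}$.

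What the paper's method buys is the finer distribution of $\mu$-sets over characteristic vectors, by a purely elementary count. What yours buys is uniformity and reach:
\begin{itemize}
\item It proves (vi) in full.
\item It gives $\Delta^{(0,1)}_{\Rk,\mu}=\varepsilon\binom{\Phi_\Rk-1}{\Phi_\mu-1}/\Phi_\mu$ in general. So $\Delta^{(0,1)}_{\Rk,\mu}=1$ whenever $\Dk=\mu$ is an odd prime, which settles Conjecture \ref{conj4} (read with $\mu\mid\Rk$) and Conjecture \ref{conj5}.
\item Filtering with $\omega^{-kr}$ gives $N_r=\frac1{\Dk}\bigl(\binom{\Rk}{\mu}-c\bigr)$ for every $r\not\equiv0$. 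So you only need translation invariance under $G^\oplus_{\Rk,\mu}$, not the two-orbit result.
\item Characters of orders $2$ and $4$ reproduce Theorem \ref{th4:mu4D4} the same way.
\item A character of order $\Dk\ge2$ always gives the nonzero sum $\pm\binom{\Phi_\Rk}{\Phi_\mu}$. Hence $\Ps^+_{\Rk,\mu}$ is never constant unless $(\Rk,\mu)=1$, which is the open half of Conjecture \ref{conj4:a}.
\end{itemize}
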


\begin{proof}
Table \ref{tab4:1}, for cases (i)--(v), and  Table \ref{tab4:2}, for cases (vii)--(viii), (except the last column) are created using Theorem \ref{th4:2orbits}.
\begin{table}[htbp]
  \caption{The hypotheses are as in Theorem \ref{th4:2orbits}. For $\lambda=0,1$, $\mu=3-6$, integer $t$, and prime $\Dk\triangleq\Dk_{\Rk,\mu}=(\Rk,\mu)\ge2$, all possible ch$(\lambda)$-vectors  $C^{(i)}(\lambda)$ (up to cyclic shifts) and the corresponding values of $\Ns^\Sigma\triangleq\Ns^\Sigma(C^{(i)}(\lambda))$, $\Delta^{(0,1)}_{\Rk,\mu}$. $\Phi\triangleq\Phi_\Rk=\Rk/\Dk_{\Rk,\mu}$, $m\triangleq m^{(i)}(\lambda)$}
  \centering
  \renewcommand\arraystretch{1.15}
 \begin{tabular}
 {c|c|c|c|c|ccccc|c|l|c} \hline
  &&&&&\multicolumn{5}{c|}{ch$(\lambda)$-vector $C^{(i)}(\lambda)$}&&\\
     $\Rk$&$\mu$&$\Dk$&$\lambda$&$i$&$N_0^{(i)}$&$N_1^{(i)}$&$N_2^{(i)}$&$N_3^{(i)}$&$N_4^{(i)}$
     &$m$&$\Ns^\Sigma$&$\Delta^{(0,1)}_{\Rk,\mu}$\\  \hline
     $3t\ge6$&3&3&0&1 &3 & 0& 0 &- &-  &3&$\frac{3}{\Phi}\binom{\Phi}{3}$\\
     &&  &  &4&1 & 1& 1 &- &-  &1&$\Phi^2$&1\\\cline{4-12}

     &&&1&1 &2 & 1& 0 &- &-  &3&$3\binom{\Phi}{2}$\\\hline\hline

     $4t+2$&4&2&  0 &1&4 &0  & -&-&- &2&$\frac{2}{\Phi}\binom{\Phi}{4}$\\
     $\ge6$&&& &3&2 &2  & -&-&-      &1&$\frac{1}{\Phi}\binom{\Phi}{2}^2$&$\frac{\Phi-1}{2}=$\\\cline{4-12}

     &&&  1&  1 &3 &1  & -&-&- &2&$2\binom{\Phi}{3}$&$\frac{\Rk-2}{4}$\\\hline\hline

 $5t$&5&5&0&1&5&0&0&0&0&5&$\frac{5}{\Phi}\binom{\Phi}{5}$\\
 $\ge10$    & & & &6&3&1&0&0&1&5&$5\Phi\binom{\Phi}{3}$\\
     & & &&11&3&0&1&1&0&5&$5\Phi\binom{\Phi}{3}$\\
     & & &&16&2&1&2&0&0&5&$5\binom{\Phi}{2}^2$\\
     & & &&21&2&2&0&1&0&5&$5\binom{\Phi}{2}^2$\\
     & & &&26&1&1&1&1&1&1&$\Phi^4$&1\\\cline{4-12}

     & & &  1&  1 &4 &1  & 0&0&0 &5&$5\binom{\Phi}{4}$\\
     &&& &6   &3 &0  & 1&0&1    &5&$5\Phi\binom{\Phi}{3}$\\
     &&& &11  &3 &0  & 0&2&0    &5&$\frac{5}{\Phi}\binom{\Phi}{3}\binom{\Phi}{2}$\\
     &&& &16  &2 &1  & 1&1&0    &5&$5\Phi^2\binom{\Phi}{2}                      $\\
     &&& &21  &2 &2  & 0&0&1    &5&$5\binom{\Phi}{2}^2                  $\\\hline \hline

$6t+2$&6&2&0&1&6&0&-&-&-&2&$\frac{2}{\Phi}\binom{\Phi}{6}$\\
$\ge8$&&& &  3&4&2&-&-&-&2&$\frac{2}{\Phi}\binom{\Phi}{4}\binom{\Phi}{2}$&$-\frac{1}{3}\binom{\Phi-1}{2}=$\\\cline{4-12}

 $6t+4$     &&&1&1&5&1&-&-&-&2&$2\binom{\Phi}{5}$&$-\frac{1}{3}\binom{\Rk/2-1}{2}$\\
 $\ge10$     &&& &  3&3&3&-&-&-&1&$\frac{1}{\Phi}\binom{\Phi}{3}^2$  \\\hline\hline

$6t+3$&6&3&0&1&6&0&0&-&-&3&$\frac{3}{\Phi}\binom{\Phi}{6}  $\\
$\ge9$      &&& &  4&4&1&1&-&-&3&$3\Phi\binom{\Phi}{4}           $\\
      &&& &  7&3&3&0&-&-&3&$\frac{3}{\Phi}\binom{\Phi}{3}^2$\\
      &&& & 10&2&2&2&-&-&1&$\frac{1}{\Phi}\binom{\Phi}{2}^3$&$\frac{\Phi-1}{2}=$\\\cline{4-12}

      & & &1&1&5&1&0&-&-&3&$3\binom{\Phi}{5}                            $&$\frac{\Rk-3}{6}$\\
      &&& &  4&4&0&2&-&-&3&$\frac{3}{\Phi}\binom{\Phi}{4}\binom{\Phi}{2}$\\
      &&& &  7&3&2&1&-&-&3&$3\binom{\Phi}{3}\binom{\Phi}{2}             $\\\hline
       \end{tabular}
  \label{tab4:1}
\end{table}

\begin{table}[htbp]
  \caption{The hypotheses are as in Theorem \ref{th4:2orbits}. For $\lambda=0,1$, $\mu=8,9$, integer $t$, and prime $\Dk\triangleq\Dk_{\Rk,\mu}=(\Rk,\mu)\ge2$, all possible ch$(\lambda)$-vectors  $C^{(i)}(\lambda)$ (up to cyclic shifts) and the corresponding values of $\Ns^\Sigma=\Ns^\Sigma(C^{(i)}(\lambda))$, $\Delta^{(0,1)}_{\Rk,\mu}$. $\Phi\triangleq\Phi_\Rk=\Rk/\Dk_{\Rk,\mu}$, $m\triangleq m^{(i)}(\lambda)$}
  \centering
  \renewcommand\arraystretch{1.2}
 \begin{tabular}
 {c|c|c|c|c|ccc|c|l|c} \hline
  &&&&&\multicolumn{3}{c|}{ch$(\lambda)$-vector $C^{(i)}(\lambda)$}&&\\
     $\Rk$&$\mu$&$\Dk$&$\lambda$&$i$&$N_0^{(i)}$&$N_1^{(i)}$&$N_2^{(i)}$
     &$m$&$\Ns^\Sigma$&$\Delta^{(0,1)}_{\Rk,\mu}$\\  \hline

$4t+6$&8&2&0&1&8&0&-&2&$\frac{2}{\Phi}\binom{\Phi}{8}$\\
$\ge10$      &&  & &3&6&2&-&2&$\frac{2}{\Phi}\binom{\Phi}{6}\binom{\Phi}{2}$&$\frac{1}{4}\binom{\Phi-1}{3}=$\\
      &&  & &5&4&4&-&1&$\frac{1}{\Phi}\binom{\Phi}{4}^2$&$\frac{1}{4}\binom{\Rk/2-1}{3}=$\\\cline{4-10}

      & & &1&1&7&1&-&2&$2\binom{\Phi}{7}$&$\frac{(\Rk-2)(\Rk-4)(\Rk-6)}{192}$\\
      & & & &3&5&3&-&2&$\frac{2}{\Phi}\binom{\Phi}{5}\binom{\Phi}{3}$&$=\sum_{u=1}^tu^2$\\\hline\hline

$9t+3$&9&3&0&1&9&0&0&3&$\frac{3}{\Phi}\binom{\Phi}{9}$\\
$\ge12$& & & &4&7&1&1&3&$3\Phi\binom{\Phi}{7}$\\
$9t+6$      & & & &7&6&3&0&3&$\frac{3}{\Phi}\binom{\Phi}{6}\binom{\Phi}{3}$\\
$\ge15$      & & &&10&6&0&3&3&$\frac{3}{\Phi}\binom{\Phi}{6}\binom{\Phi}{3}$\\
      & & &&13&5&2&2&3&$\frac{3}{\Phi}\binom{\Phi}{5}\binom{\Phi}{2}^2$\\
      & & &&16&4&4&1&3&$3\binom{\Phi}{4}^2$\\
      & & &&19&3&3&3&1&$\frac{1}{\Phi}\binom{\Phi}{3}^3$&$\frac{1}{3}\binom{\Phi-1}{2}=$\\\cline{4-10}

      & & &1&1&8&1&0&3&$3\binom{\Phi}{8}$&$\frac{1}{3}\binom{\Rk/3-1}{2}$\\
      & & & &4&7&0&2&3&$\frac{3}{\Phi}\binom{\Phi}{7}\binom{\Phi}{2}$\\
      & & & &7&6&2&1&3&$3\binom{\Phi}{6}\binom{\Phi}{2}$\\
      & & &&10&5&1&3&3&$3\binom{\Phi}{5}\binom{\Phi}{3}$\\
      & & &&13&5&4&0&3&$\frac{3}{\Phi}\binom{\Phi}{5}\binom{\Phi}{4}$\\
      & & &&16&4&3&2&3&$\frac{3}{\Phi}\binom{\Phi}{4}\binom{\Phi}{3}\binom{\Phi}{2}$\\\hline
  \end{tabular}
  \label{tab4:2}
\end{table}

The values of $\Ps^+_{\Rk,\mu}(0)$, $\Ps^+_{\Rk,\mu}(1)$ can be obtained directly from the next-to-last column $\Ns^\Sigma$ of the  tables as the sum of all the corresponding elements of this column. Then the values of $\Delta^{(0,1)}_{\Rk,\mu}$ can be obtained from $\Ps^+_{\Rk,\mu}(0)$ and $\Ps^+_{\Rk,\mu}(1)$ after simple (in general, cumbersome) transformations; for example,
\begin{align*}
&\textbf{(i)}~\Ps^+_{\Rk,3}(0)=\frac{3}{\Phi_\Rk}\binom{\Phi_\Rk}{3}+\Phi_\Rk^2=
\frac{3}{\Phi_\Rk}\cdot\frac{\Phi_\Rk(\Phi_\Rk-1)(\Phi_\Rk-2)}{2\cdot 3}+\Phi_\Rk^2\db\\
&=\frac{\Phi_\Rk^2-3\Phi_\Rk+2+2\Phi_\Rk^2}{2}=3\binom{\Phi_\Rk}{2}+1,~\Ps^+_{\Rk,3}(1)=3\binom{\Phi_\Rk}{2},
~\Delta^{(0,1)}_{\Rk,3}=1.\db\\
&\textbf{(ii)}~\Ps^+_{\Rk,4}(0)=\frac{2}{\Phi_\Rk}\binom{\Phi_\Rk}{4}+\frac{1}{\Phi_\Rk}\binom{\Phi_\Rk}{2}^2=
\frac{2}{\Phi_\Rk}\cdot\frac{\Phi_\Rk(\Phi_\Rk-1)(\Phi_\Rk-2)(\Phi_\Rk-3)}{2\cdot3\cdot4},\db\\
&+\frac{1}{\Phi_\Rk}\cdot\frac{\Phi_\Rk^2(\Phi_\Rk-1)^2}{2^2}=\frac{(\Phi_\Rk-1)((\Phi_\Rk-2)(\Phi_\Rk-3)
+3\Phi_\Rk(\Phi_\Rk-1))}{12}\db\\
&=\frac{(\Phi_\Rk-1)(4\Phi_\Rk^2-8\Phi_\Rk+6)}{12}=\frac{(\Phi_\Rk-1)(2\Phi_\Rk^2-4\Phi_\Rk+3)}{6},~
\db\\
&\Ps^+_{\Rk,4}(1)=2\binom{\Phi_\Rk}{3}=\frac{(\Phi_\Rk-1)((2\Phi_\Rk^2-4\Phi_\Rk)}{6},~\Delta^{(0,1)}_{\Rk,4}=\frac{\Phi_\Rk-1}{2}=\frac{\Rk-2}{4}.
\end{align*}
For cases (iii)--(v), (vii)--(viii), the values of $\Delta^{(0,1)}_{\Rk,\mu}$, written in the last column of the tables, can be obtained similarly to cases (i),(ii). Also, for case (vi), we have $\Delta^{(0,1)}_{\Rk,7}=1$; this result we obtained by the same way, but in the tables we do not give the details for case (vi) to save space.

 Finally, we substitute the values of $\Delta^{(0,1)}_{\Rk,\mu}$  into \eqref{eq4:2orbits x,y} and obtain relatively simple relations for $\Ps^+_{\Rk,\mu}(0)$, $\Ps^+_{\Rk,\mu}(1)$, written in the statement of this theorem.
\end{proof}

\begin{conjecture}\label{conj4}
 Let $\mu\ge11$ be a prime. Let $\Dk_{\Rk,\mu}=(\Rk,\mu)$. Then
 \begin{equation}\label{eq4:conj}
\Delta^{(0,1)}_{\Rk,\mu}=1,~ \Ps^+_{\Rk,\mu}(0)=\frac{1}{\mu}\left(\binom{\Rk-1}{\mu-1}-1\right)+1,~
\Ps^+_{\Rk,\mu}(1)=\frac{1}{\mu}\left(\binom{\Rk-1}{\mu-1}-1\right).
 \end{equation}
\end{conjecture}
For $\mu=3,5,7$, see Theorem \ref{th4:mu d cases}.

The next theorem illustrates that when the number of orbits under $G_{\Rk,\mu}$ is greater than two, we can use the same approach as for two orbits.

\begin{theorem}\label{th4:mu4D4}
We consider the ring $\Z_\Rk$ and a representation of its elements $\lambda$ by a sum of $\mu$ elements of $\Z_\Rk$.
Let $\widehat{\Ps}^+_{\Rk,\mu}(\lambda)$, $\Ps^+_{\Rk,\mu}(\lambda)$ be as in  Definition \ref{def3:Lambda} with
\eqref{eq3:hatPk lambda}, \eqref{eq3:Pk lambda}. Let $G_{\Rk,\mu}$ be as in \eqref{eq4:G_mu}. Let
$\Dk_{\Rk,\mu},\Phi_{\Rk},\Delta_{\Rk,\mu}^{(\lambda_1,\lambda_2)}$ be
as in \eqref{eq4:DRkmu}, \eqref{eq4:DRkmu2}, \eqref{eq4:Delta}; let
\begin{equation}\label{eq4:mu4D4}
 \Rk=4t\equiv0\pmod4\ge8\text{ with integer }t,~\mu=4,~\Dk_{\Rk,4}=4,~\Phi_{\Rk}=\Rk/4=t.
\end{equation}
Let the ch$(\lambda)$-vector $C^{(i)}(\lambda)$, its cyclic shift $C_\delta^{(i)}(\lambda)$, and the number $m^{(i)}(\lambda)$ of distinct cyclic shifts of $C^{(i)}(\lambda)$ be as in Definition \ref{def4:ch-vector} with \eqref{eq4:ch-vectorView}--\eqref{eq4:ch-vectorCyclic}. Then the following holds:

Under $G_{\Rk,4}$, there are exactly three orbits $\Oc_0=\Os^\oplus_0$, $\Oc_1=\Os^\oplus_1\cup\Os^\oplus_3$, and $\Oc_2=\Os^\oplus_2$, see Theorem \ref{th4:orbits}(i) with \eqref{eq4:orbitsRkj}.
\begin{align}\label{eq4:mu4D4lambda}
&\Ps^+_{\Rk,4}(\lambda)=\Ps^+_{\Rk,4}(0)\text{ if }\lambda\in\Oc_0=\Os^\oplus_0,~\lambda\equiv0\pmod4;\db\\
&\Ps^+_{\Rk,4}(\lambda)=\Ps^+_{\Rk,4}(1)\text{ if }\lambda\in\Oc_1=\Os^\oplus_1\cup\Os^\oplus_3,
~\lambda\equiv1,3\pmod4,~\lambda\equiv1\pmod2;\notag\db\\
&\Ps^+_{\Rk,4}(\lambda)=\Ps^+_{\Rk,4}(2)\text{ if }\lambda\in\Oc_2=\Os^\oplus_2,~\lambda\equiv2\pmod4.\notag
\end{align}

The values of $\Ps^+_{\Rk,4}(\lambda)$, $\lambda=0,1,2$, satisfy the system
\begin{align}\label{eq4:mu4D4 system}
&\left\{\begin{array}{c}
\Ps^+_{\Rk,4}(0)+2\Ps^+_{\Rk,4}(1)+\Ps^+_{\Rk,4}(2)=\binom{R-1}{3}\\
\Ps^+_{\Rk,4}(0)-\Ps^+_{\Rk,4}(1)=\Delta^{(0,1)}_{\Rk,4}=(\Rk-4)/4\\
\Ps^+_{\Rk,4}(0)-\Ps^+_{\Rk,4}(2)=\Delta^{(0,2)}_{\Rk,4}=-1
                     \end{array}
\right.,
\end{align}
the solution of which is as follows:
\begin{align}\label{eq4:solution}
&\Ps^+_{\Rk,4}(0)=\frac{1}{4}\left(\binom{\Rk-1}{3}+\frac{\Rk-6}{2}\right),\db\\
&\Ps^+_{\Rk,4}(1)=\Ps^+_{\Rk,4}(0)-\frac{\Rk-4}{4}=\frac{1}{4}\left(\binom{\Rk-1}{3}-\frac{\Rk-2}{2}\right),\notag\db\\
&\Ps^+_{\Rk,4}(2)=\Ps^+_{\Rk,4}(0)+1=\frac{1}{4}\left(\binom{\Rk-1}{3}+\frac{\Rk+2}{2}\right).\notag
\end{align}
\end{theorem}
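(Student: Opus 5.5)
First, determine the orbits. By Theorem \ref{th4:orbits}(i) with $\Dk_{\Rk,4}=4$, the subgroup $G^\oplus_{\Rk,4}$ splits $\Z_\Rk$ into $\Os^\oplus_0,\Os^\oplus_1,\Os^\oplus_2,\Os^\oplus_3$, each of size $t$. Every $\ell\in\Z_\Rk^*$ is odd, and multiplying by $\ell$ permutes the residues mod $4$. It fixes the classes $0$ and $2$, and it fixes or swaps $1,3$. The choice $\ell=-1$ actually swaps $1$ and $3$. Hence by Theorem \ref{th4:orbits}(iii) the orbits under $G_{\Rk,4}$ are exactly $\Os^\oplus_0$, $\Os^\oplus_1\cup\Os^\oplus_3$ and $\Os^\oplus_2$. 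Relation \eqref{eq4:mu4D4lambda} then follows from Theorem \ref{th4:Lambda==}(i).

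Second, the system \eqref{eq4:mu4D4 system}. The first equation is a counting identity. The orbits have sizes $t,2t,t$, so summing over $\lambda$ gives $t\bigl(\Ps^+_{\Rk,4}(0)+2\Ps^+_{\Rk,4}(1)+\Ps^+_{\Rk,4}(2)\bigr)=\binom{\Rk}{4}$. Dividing by $t=\Rk/4$ gives $\binom{\Rk-1}{3}$.

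For the two difference equations I compute each $\Ps^+_{\Rk,4}(\lambda)$, $\lambda=0,1,2$, directly by ch$(\lambda)$-vectors, as in Theorem \ref{th4:2orbits}(ii)--(iv). Theorem \ref{th4:2orbits} assumed $\Dk$ prime, but the arguments for (ii)--(iv) carry over verbatim for $\Dk=4$:
\begin{itemize}
\item the congruence \eqref{eq4:B} holds modulo $4$;
\item the shift $\lambda_j\mapsto\lambda_j+1$ preserves $\lambda$ because $4\mid\mu$;
\item the count \eqref{eq4:N Sigma} holds, with factor $\frac{m}{\Phi_\Rk}\prod_\xi\binom{t}{N_\xi}$.
\end{itemize}
For the count I argue that translation by $1$ acts on the $4$-sets with sum $\equiv\lambda$ modulo $4$ with all orbits of size $\Rk$. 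Among these $\Rk$ translates, the sums run over $\lambda+4u$, and each value is hit exactly $4$ times. Thus a fraction $1/t$ of each cyclic-shift class of ch-vectors lands on $\lambda$ itself.

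Next I list all vectors $(N_0,N_1,N_2,N_3)$ with $\sum N_\xi=4$ and $\sum\xi N_\xi\equiv\lambda\pmod 4$, up to cyclic shift:
\begin{itemize}
\item $\lambda=0$: $(4,0,0,0)$, $(2,0,2,0)$, $(2,1,0,1)$, $(1,1,1,1)$, $(3,0,0,1)$-type classes, and so on;
\item $\lambda=1,2$: the analogous lists.
\end{itemize}
I then sum $\frac{m}{t}\prod\binom{t}{N_\xi}$ over each list.

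The main obstacle is the bookkeeping. I must enumerate the classes correctly, get the right $m$ (e.g. $m=2$ for $(2,0,2,0)$ and $m=1$ for $(1,1,1,1)$), and avoid double counting shifts. The expected differences are $\Ps(0)-\Ps(1)=t-1=(\Rk-4)/4$ and $\Ps(0)-\Ps(2)=-1$.

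As a sanity check, I verify the resulting polynomials in $t$ at small values, e.g. $\Rk=8$ by direct enumeration.

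Finally, I solve the linear system. Substituting $\Ps(1)=\Ps(0)-(\Rk-4)/4$ and $\Ps(2)=\Ps(0)+1$ into the first equation gives $4\Ps(0)-(\Rk-4)/2+1=\binom{\Rk-1}{3}$. Hence $\Ps(0)=\frac14\bigl(\binom{\Rk-1}{3}+\frac{\Rk-6}{2}\bigr)$, and the other two values in \eqref{eq4:solution} follow immediately.
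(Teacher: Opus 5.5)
Your overall route is the paper's: orbits via Theorem~\ref{th4:orbits}(iii), the counting identity for the first equation of \eqref{eq4:mu4D4 system}, ch$(\lambda)$-vector enumeration for the two differences, then solving. The orbit analysis, the first equation and the final algebra are all correct. The gap is in the only substantive step. You state $\Delta^{(0,1)}_{\Rk,4}=(\Rk-4)/4$ and $\Delta^{(0,2)}_{\Rk,4}=-1$ as ``expected'' rather than deriving them, and the enumeration you sketch for $\lambda=0$ is wrong.

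A cyclic shift adds $\sum_\xi N_\xi=4\equiv0$ to $N_1+2N_2+3N_3$, so the ten shift classes split as follows:
\begin{itemize}
\item $\lambda\equiv0$: only $(4,0,0,0)$, $(2,0,2,0)$, $(2,1,0,1)$;
\item $\lambda\equiv1$: $(3,1,0,0)$, $(2,0,1,1)$;
\item $\lambda\equiv2$: $(3,0,1,0)$, $(2,2,0,0)$, $(1,1,1,1)$;
\item $\lambda\equiv3$: $(3,0,0,1)$, $(2,1,1,0)$.
\end{itemize}
Your $(1,1,1,1)$ has weighted sum $6\equiv2$, and your $(3,0,0,1)$ class has weighted sum $3$. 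With your list, $\Ps^+_{\Rk,4}(0)$ would come out wrong and neither difference would follow. With the correct lists and \eqref{eq4:N Sigma} (with $t=\Phi_\Rk$) one gets
\[
\Ps^+_{\Rk,4}(0)=\tfrac{4}{t}\tbinom{t}{4}+\tfrac{2}{t}\tbinom{t}{2}^2+4t\tbinom{t}{2},\quad
\Ps^+_{\Rk,4}(1)=4\tbinom{t}{3}+4t\tbinom{t}{2},\quad
\Ps^+_{\Rk,4}(2)=4\tbinom{t}{3}+\tfrac{4}{t}\tbinom{t}{2}^2+t^3.
\]
A short polynomial computation then gives $\Delta^{(0,1)}_{\Rk,4}=t-1$ and $\Delta^{(0,2)}_{\Rk,4}=-1$. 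This computation (the content of Table~\ref{tab4:3}) is what the proof must contain; a numerical check at $\Rk=8$ does not replace it.

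A smaller point: your justification of the factor $1/t$ asserts that all translation orbits of $4$-sets have size $\Rk$. That is false: $\{0,t,2t,3t\}$ is fixed by $+t$, and $\{0,2t,a,a+2t\}$ is fixed by $+2t$. The conclusion survives by equivariance. The map $S\mapsto S+u$ preserves the shift class of the ch-vector and is a bijection from the $4$-sets of that class with sum $\lambda'$ onto those with sum $\lambda'+4u$. As $u$ varies, $\lambda'+4u$ covers all $t$ elements of the residue class of $\lambda'$ modulo $4$, so each of them receives exactly a $1/t$ fraction of the class.
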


\begin{proof}
The assertion on three orbits follows from Theorem \ref{th4:orbits}(iii). Also, Table \ref{tab4:3} confirms this.
Table \ref{tab4:3} (except the last column) is created using Theorem \ref{th4:2orbits} similarly to
Tables \ref{tab4:1}, \ref{tab4:2}. Then, in the last column, the values of $\Delta^{(0,1)}_{\Rk,4}$, $\Delta^{(0,2)}_{\Rk,4}$ are obtained similarly to Proof of Theorem \ref{th4:mu d cases}.
\begin{table}[htbp]
  \caption{For $\Rk=4t\equiv0\pmod4\ge8$, $\mu=4$, $\Dk=\Dk_{\Rk,4}=(\Rk,4)=4$, and $\lambda=0,1,2,3$, all possible ch$(\lambda)$-vectors  $C^{(i)}(\lambda)$ (up to cyclic shifts) and the corresponding values of $\Ns^\Sigma=\Ns^\Sigma(C^{(i)}(\lambda))$, $\Delta^{(0,1)}_{\Rk,4}$,  $\Delta^{(0,2)}_{\Rk,4}$; $t$ is an integer, $\Phi=\Phi_\Rk=\Rk/4=t$, $m=m^{(i)}(\lambda)$}
    \centering
  \renewcommand\arraystretch{1.2}
 \begin{tabular}
 {c|c|c|c|c|cccc|c|l|c} \hline
  &&&&&\multicolumn{4}{c|}{ch$(\lambda)$-vector $C^{(i)}(\lambda)$}&&&$\Delta^{(0,1)}_{\Rk,4}$\\
  $\Rk$   &$\mu$&$\Dk$&$\lambda$&$i$&$N_0^{(i)}$&$N_1^{(i)}$&$N_2^{(i)}$&$N_3^{(i)}$
     &$m$&$\Ns^\Sigma$&$\Delta^{(0,2)}_{\Rk,4}$\\  \hline

     $4t$&4&4&0&1&4&0&0&0&4&$\frac{4}{\Phi}\binom{\Phi}{4}     $\\
   $\ge8$   &&&  &5&2&0&2&0&2&$\frac{2}{\Phi}\binom{\Phi}{2}^2$&$\Delta^{(0,1)}_{\Rk,4}=$\\
      &&&  &7&2&1&0&1&4&$4\Phi\binom{\Phi}{2}              $&$\Phi-1=$\\\cline{4-11}

      && &1&1&3&1&0&0&4&$4\binom{\Phi}{3}    $&$\frac{\Rk-4}{4}$\\
      &&&  &5&2&0&1&1&4&$4\Phi\binom{\Phi}{2}$\\\cline{4-12}

      && &2&1&3&0&1&0&4&$4\binom{\Phi}{3}               $\\
      &&&  &5&2&2&0&0&4&$\frac{4}{\Phi}\binom{\Phi}{2}^2$&$\Delta^{(0,2)}_{\Rk,4}=$\\
      &&&  &9&1&1&1&1&1&$\Phi^3                         $&$-1$\\\cline{4-12}

      && &3&1&3&0&0&1&4&$4\binom{\Phi}{3}$\\
      &&&  &5&2&1&1&0&4&$4\Phi\binom{\Phi}{2}$\\\hline
  \end{tabular}
  \label{tab4:3}
\end{table}

The assertion \eqref{eq4:mu4D4lambda} follows from Theorem \ref{th4:Lambda==}(i), see also
\eqref{eq4:orbitsRkj}--\eqref{eq4:D=muPrimeO1}, \eqref{eq4:Plambda=P0,Plambda=P1}.

In total there are $\binom{\Rk}{4}$ distinct $4$-tuples $\{\lambda_1,\lambda_2,\lambda_3,\lambda_4\}$ of distinct  elements of  $\Z_\Rk$. By \eqref{eq4:orbitsRkj}, $\#\Os^\oplus_j=\Phi_\Rk$. Now, together with the assertion on three orbits and Theorem \ref{th4:Lambda==}(i) we have
$\Phi_\Rk\Ps^+_{\Rk,4}(0)+2\Phi_\Rk\Ps^+_{\Rk,4}(1)+\Phi_\Rk\Ps^+_{\Rk,4}(2)=\binom{R}{4}$ that implies the first equation of system \eqref{eq4:mu4D4 system}. The 2-nd and 3-rd equations use Table \ref{tab4:3}. The solution \eqref{eq4:solution} is obvious.
\end{proof}

\section{The weight distribution of cosets of weight 2 of the $[q+1,q+2-d,d]_q$ GDRS codes of distance $d\ge5$}\label{sec5:WD2}
We define $\lambda(\gamma)$ as follows:
  \begin{equation}\label{eq5:lambdagamma}
\gamma\in\F_q,~\lambda(\gamma)\in\Z_{q-1},~
  \beta_q^{\lambda(\gamma)}=\gamma,~\beta_q\text{ is a primitive element of }\F_q.
  \end{equation}

\begin{theorem}\label{th5:WD w2coset}
\textbf{(the weight distribution of a coset of weight $2$ of a GDRS code)}
  Let $\Pb^\times_{q,\mu}(\gamma)$ and $\Ps^+_{\Rk,\mu}(\lambda)$ be as in  Definitions \ref{def3:Gamma} and \ref{def3:Lambda}. Let $\C$ be a $[q+1,q+2-d,d]_q$ normalized GDRS code of distance $d\ge5$. Let $\V^{(2)}$ be one of its coset of weight $2$ with a leader $\vf_2(j_1,j_2;\gamma_1,\gamma_2)$. Let $\lambda(\gamma)$ be as in \eqref{eq5:lambdagamma}.
Let $A_w(\C)$  be as in \eqref{eq2:Aw(C)}. Let, as in  \eqref{eq2:OmegaDef},
   \begin{align}\label{eq5:Omega}
&\Omega_w^{(0)}(q+1,d)=(-1)^{w-d}\binom{q+1}{w}\binom{w-1}{d-2}=(-1)^{w-d}\binom{q+1}{q+1-w}\binom{w-1}{w-d+1},\db\\
&\Omega_w^{(2)}(q+1,d)=(-1)^{w-d}\binom{q-1}{w-2}\binom{w-3}{d-4}=(-1)^{w-d}\binom{q-1}{q+1-w}\binom{w-3}{w-d+1}.\notag
   \end{align}
   Then the weight distribution of the coset $\V^{(2)}$ does not depend on the positions  $j_1,j_2$; the number $B_w(\V^{(2)})$ of vectors of weight $w$ in $\V^{(2)}$  has the  form:
\begin{align}\label{eq5:W=2}
&\textbf{\emph{(i)}}~~ B_w(\V^{(2)})=0 \text{ if } w\in\{0,1,\ldots,d-3\}\setminus\{2\};~B_2(\V^{(2)})=1;\db\\
&B_{d-2}(\V^{(2)})=\Ps^+_{q-1,d-2}(\lambda(-\gamma_2/\gamma_1))
=\Pb^\times_{q,d-2}(-\gamma_2/\gamma_1);\db\notag\\
&B_w(\V^{(2)})=A_w(\C)-\Omega_w^{(0)}(q+1,d)+\Omega_w^{(2)}(q+1,d)\db\notag\\
&+(-1)^{w-d}\binom{q+3-d}{q+1-w}\cdot\Ps^+_{q-1,d-2}(\lambda(-\gamma_2/\gamma_1))\text{ if }w= d-1,d,\ldots,q+1.\notag
\end{align}

\textbf{\emph{(ii)}} If for all $\lambda$ from $\Z_{q-1}$ the value of $\Ps^+_{q-1,d-2}(\lambda)$ is the same or, equivalently, all  $\binom{n}{2}(q-1)^2$ cosets $\V^{(2)}$ of weight $2$ have the same weight distribution,
then
\begin{align}\label{eq5:w=2 ident}
&B_w(\V^{(2)})=0 \text{ if } w\in\{0,1,\ldots,d-3\}\setminus\{2\};~B_2(\V^{(2)})=1;~\db\\
&B_{d-2}(\V^{(2)})=\frac{1}{q-1}\binom{q-1}{d-2}=\frac{1}{d-2}\binom{q-2}{d-3};~B_w(\V^{(2)})=A_w(\C)-\Omega_w^{(0)}(q+1,d)
\db\notag\\
&+\Omega_w^{(2)}(q+1,d)+
(-1)^{w-d}\binom{q+3-d}{q+1-w}\cdot\frac{1}{q-1}\binom{q-1}{d-2}\text{ if }w= d-1,d,\ldots,q+1.\notag
\end{align}

\textbf{\emph{(iii)}} If $(q-1,d-2)=1$, i.e. $q-1$ and $d-2$ are coprime, then for all $\lambda$ from $\Z_{q-1}$ the value of $\Ps^+_{q-1,d-2}(\lambda)$ is the same, all $\binom{n}{2}(q-1)^2$ cosets $\V^{(2)}$ of weight $2$ have the same weight distribution, and all the relations from \eqref{eq5:w=2 ident} hold. In particular, this means that the code $\C$ is \emph{2-regular}, see Definition \ref{def2:tregul}. Moreover, in this case the values $\frac{1}{q-1}\binom{q-1}{d-2}=\frac{1}{d-2}\binom{q-2}{d-3}$ are integer.
\end{theorem}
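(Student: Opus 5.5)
The plan is to assemble Theorem \ref{th5:WD w2coset} from results already established, taking the three parts in order.

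For part (i), the starting point is Theorem \ref{th2:W=2}(i), applied with $n=q+1$. This is legitimate because a GDRS code is MDS and $d\ge5$.
\begin{itemize}
\item Substituting $n=q+1$ into \eqref{eq2:Bon_w=2} gives $\binom{n-d+2}{n-w}=\binom{q+3-d}{q+1-w}$.
\item Substituting $n=q+1$ into \eqref{eq2:OmegaDef} gives $\Omega_w^{(0)}$ and $\Omega_w^{(2)}$ exactly as in \eqref{eq5:Omega}. The alternative forms there follow from the symmetry $\binom{a}{b}=\binom{a}{a-b}$, using $q+1-(w-2)=q+3-w$ where needed, and $\binom{w-3}{d-4}=\binom{w-3}{w-d+1}$.
\item The single unknown $B_{d-2}(\V^{(2)})$ is supplied by Theorem \ref{th3:Bd-2} and Corollary \ref{cor3:Bd-2}. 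With the notation \eqref{eq5:lambdagamma}, these give $B_{d-2}(\V^{(2)})=\Pb^\times_{q,d-2}(-\gamma_2/\gamma_1)=\Ps^+_{q-1,d-2}(\lambda(-\gamma_2/\gamma_1))$.
\item Independence from the positions $j_1,j_2$ is then inherited: the whole distribution is a function of $B_{d-2}$ alone, and $B_{d-2}$ does not depend on $j_1,j_2$.
\end{itemize}

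For part (ii), the first task is the equivalence in the hypothesis.
\begin{itemize}
\item As $\gamma_1,\gamma_2$ range over $\F_q^*$, the element $-\gamma_2/\gamma_1$ runs over all of $\F_q^*$. Hence $\lambda(-\gamma_2/\gamma_1)$ runs over all of $\Z_{q-1}$.
\item By part (i), every coset of weight $2$ has the same distribution if and only if $\Ps^+_{q-1,d-2}(\lambda)$ is constant on $\Z_{q-1}$. The "if" direction is immediate from (i). For "only if", note that all the other terms in (i) are fixed, so equal distributions force equal $B_{d-2}$.
\item Lemma \ref{lem4:Lambda==}, with $\Rk=q-1$ and $\mu=d-2$, then gives the common value $\frac{1}{q-1}\binom{q-1}{d-2}=\frac{1}{d-2}\binom{q-2}{d-3}$. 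This agrees with Theorem \ref{th2:W=2}(iii).
\item Substituting this value into (i) yields \eqref{eq5:w=2 ident}.
\item The condition $\mu<\Rk$ holds, i.e. $d-2<q-1$. This is because $d\le q+1$ and the case $d=q+1$ is the trivial repetition-type setting; I would simply note it.
\end{itemize}

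For part (iii), assume $(q-1,d-2)=1$.
\begin{itemize}
\item Theorem \ref{th4:coprimeLamb} gives $\Ps^+_{q-1,d-2}(\lambda)=\frac{1}{d-2}\binom{q-2}{d-3}$ for all $\lambda$. Behind it is the single-orbit statement, Theorem \ref{th4:orbits}(iv), together with Theorem \ref{th4:Lambda==}(ii).
\item Part (ii) then applies. Integrality holds because the common value is a count of sets.
\item For 2-regularity in the sense of Definition \ref{def2:tregul}, all three weights $W=0,1,2$ must be checked:
  \begin{itemize}
  \item $W=0$ gives only the code itself, so there is nothing to check.
  \item $W=1$ is covered by Theorem \ref{th2:wd1cosetMDS}: all cosets of weight $1$ of an MDS code with $d\ge3$ have the same distribution.
  \item $W=2$ is exactly what was just proved.
  \end{itemize}
\item The requirement $2\le R$ holds since $d\ge5$ gives $R\ge t(\C)\ge2$.
\end{itemize}

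The only genuinely delicate point is the "equivalently" in (ii), i.e. the surjectivity of $-\gamma_2/\gamma_1$ onto $\F_q^*$. Everything else is direct substitution into results stated earlier in the paper.
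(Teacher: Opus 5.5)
Your proposal is correct and follows essentially the same route as the paper. Part (i) comes from substituting $n=q+1$ into Theorem \ref{th2:W=2}(i), with $B_{d-2}(\V^{(2)})$ supplied by Theorem \ref{th3:Bd-2} and Corollary \ref{cor3:Bd-2}; part (ii) from the averaged value (Lemma \ref{lem4:Lambda==}, matching Theorem \ref{th2:W=2}(iii)); and part (iii) from the single-orbit argument (Theorems \ref{th4:orbits}(iv), \ref{th4:Lambda==}) together with Theorem \ref{th2:wd1cosetMDS} for 2-regularity. One cosmetic slip: the identity actually needed for the second form of $\Omega_w^{(2)}$ is $(q-1)-(w-2)=q+1-w$, not $q+1-(w-2)=q+3-w$.
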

\begin{proof}
\textbf{(i)}, \textbf{(ii)}  The assertions follow from \eqref{eq2:OmegaDef}, \eqref{eq2:Bon_w=2}, \eqref{eq2:wd2_ident}, \eqref{eq3:Bd-2=Gamma}, \eqref{eq3:Gamma=Lamb}, \eqref{eq3:Bd-2=Lamb}, after simple transformations.

\textbf{(iii)} The assertions follow from Theorems \ref{th2:wd1cosetMDS}, \ref{th2:W=2}(iv), \ref{th4:orbits}(iv), \ref{th4:Lambda==}, and case \textbf{(ii)} of this theorem.
\end{proof}

\begin{theorem}\label{th5:cases}
Let $q$ be a prime power. Let $\C$ be the $[q+1,q+2-d,d]_qR$ normalized GDRS code with minimum distance $d\ge5$. Let $\V^{(2)}$ be a coset of weight $2$ of the code $\C$ with a leader $\vf_2(j_1,j_2;\gamma_1,\gamma_2)$. Then for specific pair $q,d$ from the first two columns of Table \ref{tab5:1}, the values of $\Ps^+_{q-1,d-2}(\lambda(-\gamma_2/\gamma_1))$, crucial for the weight distribution of $\V^{(2)}$ according to \eqref{eq5:W=2}, are given in the rest of columns of the table.

\begin{table}[htbp]
  \caption{The values of $\Ps^+_{q-1,d-2}(\lambda(-\frac{\gamma_2}{\gamma_1}))$ for specific pair $q,d$.
   $t$ is an integer, $\gamma=-\frac{\gamma_2}{\gamma_1}$}
    \centering
  \renewcommand\arraystretch{1.2}
 \begin{tabular}
 {c|c|c|l|c|c} \hline
&&&&the condition for &\\
&&&&$\Ps^+_{q-1,d-2}(\lambda(\gamma))=$&refe-\\
$q$&$d$&$j$&$\Ps^+_{q-1,d-2}(j)$&$\Ps^+_{q-1,d-2}(j)$&rence\\\hline

$3t+1\ge7$&5&$0$&$\frac{1}{3}\left(\binom{q-2}{2}-1\right)+1$&$\lambda(\gamma)\equiv0\pmod3$&\eqref{eq4:mu3}\\
&&$1$&$\frac{1}{3}\left(\binom{q-2}{2}-1\right)$&$\lambda(\gamma)\not\equiv0\pmod3$ \\\hline

$4t+3\ge7$&6&$0$&$\frac{1}{4}\left(\binom{q-2}{3}+\frac{q-3}{2}\right)$&$\lambda(\gamma)\equiv0\pmod2$&\eqref{eq4:mu4d2}\\
$$&&$1$&$\frac{1}{4}\left(\binom{q-2}{3}-\frac{q-3}{2}\right)$&$\lambda(\gamma)\not\equiv0\pmod2$ \\\hline

$5t+1\ge11$&7&$0$&$\frac{1}{5}\left(\binom{q-2}{4}-1\right)+1$&$\lambda(\gamma)\equiv0\pmod5$&\eqref{eq4:mu5}\\
&&$1$&$\frac{1}{5}\left(\binom{q-2}{4}-1\right)$&$\lambda(\gamma)\not\equiv0\pmod5$ \\\hline

$6t+3\ge9$&8&$0$&$\frac{1}{6}\left(\binom{q-2}{5}-\binom{(q-3)/2}{2}\right)$&$\lambda(\gamma)\equiv0\pmod2$&\eqref{eq4:mu6d2}\\
$6t+5\ge11$&&$1$&$\frac{1}{6}\left(\binom{q-2}{5}+\binom{(q-3)/2}{2}\right)$&$\lambda(\gamma)\not\equiv0\pmod2$ \\\hline

$6t+4\ge10$&8&$0$&$\frac{1}{3}\left(\frac{1}{2}\binom{q-2}{5}+\frac{q-4}{3}\right)$&$\lambda(\gamma)\equiv0\pmod3$
&\eqref{eq4:mu6d3}\\
           &&$1$&$\frac{1}{6}\left(\binom{q-2}{5}-\frac{q-4}{3}\right)$&$\lambda(\gamma)\not\equiv0\pmod3$ \\\hline

$7t+1\ge29$&9&$0$&$\frac{1}{7}\left(\binom{q-2}{6}-1\right)+1$&$\lambda(\gamma)\equiv0\pmod7$&\eqref{eq4:mu7}\\
&&$1$&$\frac{1}{7}\left(\binom{q-2}{6}-1\right)$&$\lambda(\gamma)\not\equiv0\pmod7$ \\\hline

$4t+7\ge11$&10&$0$&$\frac{1}{8}\left(\binom{q-2}{7}+\binom{(q-3)/2}{3}\right)$&$\lambda(\gamma)\equiv0\pmod2$&\eqref{eq4:mu8d2}\\
&&$1$&$\frac{1}{8}\left(\binom{q-2}{7}-\binom{(q-3)/2}{3}\right)$&$\lambda(\gamma)\not\equiv0\pmod2$ \\\hline

$9t+4\ge13$&11&$0$&$\frac{1}{9}\left(\binom{q-2}{8}+2\binom{(q-4)/2}{2}\right)$
&$\lambda(\gamma)\equiv0\pmod3$&\eqref{eq4:mu9d3}\\
$9t+7\ge16$&&$1$&$\frac{1}{9}\left(\binom{q-2}{8}-\binom{(q-4)/3}{2}\right)$&$\lambda(\gamma)\not\equiv0\pmod3$ \\\hline

$4t+1\ge9$&6&$0$&$\frac{1}{4}\left(\binom{q-2}{3}+\frac{q-7}{2}\right)$&$\lambda(\gamma)\equiv0\pmod4$
&\eqref{eq4:solution}\\
&&$1$&$\frac{1}{4}\left(\binom{q-2}{3}-\frac{q-3}{2}\right)$&$\lambda(\gamma)\equiv1\pmod2$& \eqref{eq4:mu4D4lambda}\\
&&$2$&$\frac{1}{4}\left(\binom{q-2}{3}+\frac{q+1}{2}\right)$&$\lambda(\gamma)\equiv2\pmod4$& \\\hline
\end{tabular}
\label{tab5:1}
\end{table}
\end{theorem}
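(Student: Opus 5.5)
Theorem \ref{th5:cases} is a translation of the ring results of Section~\ref{sec4:orbits} into coding language, so the plan is a row-by-row substitution. By Theorem \ref{th5:WD w2coset}(i) (i.e. Corollary \ref{cor3:Bd-2}), the only quantity in \eqref{eq5:W=2} that depends on the coset is $B_{d-2}(\V^{(2)})=\Ps^+_{q-1,d-2}(\lambda(\gamma))$ with $\gamma=-\gamma_2/\gamma_1$. For each row of Table \ref{tab5:1} we therefore put $\Rk=q-1$ and $\mu=d-2$, and check that the congruence on $q$ in the first column becomes exactly the hypothesis on $\Rk$ in the matching case of Theorem \ref{th4:mu d cases} or Theorem \ref{th4:mu4D4}.

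\textbf{Matching of hypotheses.} The correspondences are as follows.
\begin{itemize}
\item $q=3t+1$, $d=5$ gives $\Rk=3t$, $\mu=3$, case \eqref{eq4:mu3}.
\item $q=4t+3$, $d=6$ gives $\Rk\equiv2\pmod4$, case \eqref{eq4:mu4d2}.
\item $q=5t+1$, $d=7$ gives case \eqref{eq4:mu5}.
\item $q\equiv3,5\pmod6$, $d=8$ gives $\Rk\equiv2,4\pmod6$, case \eqref{eq4:mu6d2}.
\item $q=6t+4$, $d=8$ gives $\Rk\equiv3\pmod6$, case \eqref{eq4:mu6d3}.
\item $q=7t+1$, $d=9$ gives case \eqref{eq4:mu7}.
\item $q=4t+7$, $d=10$ gives $\Rk=4t+6$, case \eqref{eq4:mu8d2}.
\item $q\equiv4,7\pmod9$, $d=11$ gives $\Rk\equiv3,6\pmod9$, case \eqref{eq4:mu9d3}.
\item $q=4t+1$, $d=6$ gives $\Rk=4t$, Theorem \ref{th4:mu4D4}.
\end{itemize}
In each case $\Dk_{\Rk,\mu}$ is the prime stated there, except in the last row where it equals $4$. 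The two- or three-orbit structure then follows from Theorem \ref{th4:orbits}(vi),(vii), or from the three-orbit assertion of Theorem \ref{th4:mu4D4}. The lower bounds on $q$ are the translated lower bounds on $\Rk$; the bound is stronger in the $\mu=7$ row because $q=7t+1$ must be a prime power, so $q=29$ is the smallest admissible value.

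\textbf{Values and orbit conditions.} Replace $\Rk$ by $q-1$ in each formula, so that $\binom{\Rk-1}{\mu-1}=\binom{q-2}{d-3}$, $\Rk/2-1=(q-3)/2$, $(\Rk-2)/2=(q-3)/2$, $(\Rk-3)/3=(q-4)/3$, $(\Rk-6)/2=(q-7)/2$, and so on. The column ``condition'' comes from \eqref{eq4:Plambda=P0,Plambda=P1} and \eqref{eq4:mu4D4lambda}. By Theorem \ref{th4:Lambda==}(i), $\Ps^+(\lambda)$ equals $\Ps^+(0)$ when $\lambda\equiv0\pmod{\Dk_{\Rk,\mu}}$ and equals $\Ps^+(1)$ otherwise. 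In the three-orbit case the value is distinguished by $\lambda\bmod4$.

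\textbf{Main obstacle.} The expected difficulty is bookkeeping in the $\mu=9$ row. Substituting $\Rk=q-1$ into $\binom{\Rk/3-1}{2}$ gives $\binom{(q-4)/3}{2}$, whereas the table shows $\binom{(q-4)/2}{2}$ in the $j=0$ entry. I would resolve this by substituting carefully, checking the result against the integrality and total-count identity $\Phi_\Rk\Ps^+(0)+(\Rk-\Phi_\Rk)\Ps^+(1)=\binom{\Rk}{\mu}$ from \eqref{eq4:2orbits}, and treating the $/2$ as a typo for $/3$. Apart from this, the proof is direct substitution into Theorems \ref{th4:mu d cases}, \ref{th4:mu4D4} and \ref{th5:WD w2coset}(i).
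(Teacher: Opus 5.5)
Your proposal is correct and follows the paper's own proof: the paper likewise specializes Corollary \ref{cor3:Bd-2} and the ring results of Theorems \ref{th4:mu d cases} and \ref{th4:mu4D4}, with the orbit structure from Theorems \ref{th4:orbits} and \ref{th4:2orbits}, to $\Rk=q-1$, $\mu=d-2$, $\Dk_{\Rk,\mu}=(q-1,d-2)$. You are also right that $\binom{(q-4)/2}{2}$ in the $d=11$, $j=0$ entry is a typo for $\binom{(q-4)/3}{2}$; for example, $q=13$ gives $\Ps^+_{12,9}(0)=\frac{1}{9}\bigl(\binom{11}{8}+2\binom{3}{2}\bigr)=19$, and the $/2$ version does not even give an integer argument.
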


\begin{proof}
  The assertion follows from Corollary \ref{cor3:Bd-2} with \eqref{eq3:Bd-2=Lamb}, Theorems \ref{th4:orbits} with \eqref{eq4:orbitsRkj}--\eqref{eq4:D=muPrimeO1}, \ref{th4:2orbits} with \eqref{eq4:2orbits}--\eqref{eq4:N Sigma},
\ref{th4:mu d cases} with \eqref{eq4:mu3}--\eqref{eq4:mu6d3}, \ref{th4:mu4D4} with \eqref{eq4:mu4D4}--\eqref{eq4:solution}, and  \ref{th5:WD w2coset} with \eqref{eq5:Omega}--\eqref{eq5:w=2 ident}. In particular, for the results of Section \ref{sec4:orbits} we have $\Rk=q-1,~\mu=d-2,\,\Dk_{\Rk,\mu}=(q-1,d-2)$.
\end{proof}

\begin{conjecture}\label{conj5}
 Let $q=pt+1$ be a prime power, where $p\ge3$ is a prime, $t\ge2$ is an integer. Let $\C$ be the $[q+1,q+2-d,d]_qR$ normalized GDRS code with minimum distance $d=p+2$. Let $\V^{(2)}$ be a coset of weight $2$ of the code $\C$ with a leader $\vf_2(j_1,j_2;\gamma_1,\gamma_2)$.  Then the values of $\Ps^+_{q-1,d-2}(\lambda(-\gamma_2/\gamma_1))=\Ps^+_{pt,p}(\lambda(-\gamma_2/\gamma_1))$, crucial for the weight distribution of $\V^{(2)}$ according to \eqref{eq5:W=2}, are as follows:
 \begin{align}\label{eq5:conj}
&\Ps^+_{q-1,d-2}\left(\lambda\left(-\frac{\gamma_2}{\gamma_1}\right)\right)=\frac{1}{p}\left(\binom{q-2}{p-1}-1\right)+1
\text{ if }\lambda\left(-\frac{\gamma_2}{\gamma_1}\right)\equiv0\pmod p;\db\\
&\Ps^+_{q-1,d-2}\left(\lambda\left(-\frac{\gamma_2}{\gamma_1}\right)\right)=\frac{1}{p}\left(\binom{q-2}{p-1}-1\right)
\text{ if }\lambda\left(-\frac{\gamma_2}{\gamma_1}\right)\not\equiv0\pmod p.\notag
 \end{align}
\end{conjecture}
Note that for $p=3,5,7$, Conjecture \ref{conj5} is proved in Theorem \ref{th5:cases} with Table \ref{tab5:1}.

\section*{Acknowledgments}
The research of A.A. Davydov was carried out within the state assignment of Ministry of Science and Higher Education of the Russian Federation  No.\ FFNU-2025-0028 for Kharkevich Institute for Information Transmission Problems. A.A. Davydov would like to thank E. Jagudaeva for organizational support.
The research of S. Marcugini and F. Pambianco was supported in part by the Italian
National Group for Algebraic and Geometric Structures and their Applications (GNSAGA -
INDAM) (Contract No. U-UFMBAZ-2019-000160, 11.02.2019) and by University of Perugia
(Project No. 98751: Strutture Geometriche, Combinatoria e loro Applicazioni, Base Research
Fund 2017--2019; Fighting Cybercrime with OSINT, Research Fund 2021). This work was partially funded by the SERICS project (PE00000014) under the MUR National Recovery and Resilience Plan funded by the European Union-NextGenerationEU.

\end{document}